\apptocmd{\sloppy}{\hbadness 10000\relax}{}{} 
\newtheorem{theorem}{Theorem}
\newtheorem{lemma}{Lemma}
\newtheorem{remark}{Remark}
\newtheorem{proposition}{Proposition}
\pgfplotsset{compat=newest, compat/show suggested version=false} 
\tikzstyle{boxx}=[draw=black, shape=rectangle, fill=white, minimum size=.95em, inner sep=0.15em, scale=0.85, font={\scriptsize}]
\tikzstyle{gbox}=[boxx, draw=black, shape=rectangle, fill={zx_green}, tikzit fill={rgb,255: red,181; green,215; blue,181}, tikzit shape=rectangle]
\tikzstyle{hbox}=[boxx, draw=black, shape=rectangle, fill=yellow, minimum size=.55em]
\tikzstyle{gn}=[draw=black, shape=circle, fill={zx_green}, inner sep=0.7mm, minimum width=0pt, minimum height=0pt, tikzit fill={rgb,255: red,181; green,215; blue,181}]
\tikzstyle{gn_phase}=[shape=rectangle, fill={zx_green}, draw=black, minimum size=1.2em, rounded corners=0.5em, inner sep=0.2em, outer sep=-0.2em, scale=0.8, font={\footnotesize\boldmath}, tikzit shape=circle, tikzit fill={rgb,255: red,130; green,188; blue,130}]
\tikzstyle{rn}=[gn, fill={zx_red}, draw=black, tikzit fill={rgb,255: red,255; green,165; blue,165}]
\tikzstyle{rn_phase}=[{gn_phase}, fill={zx_red}, draw=black, tikzit fill={rgb,255: red,215; green,96; blue,96}]
\tikzstyle{lmat}=[shape=signal, signal to=west, signal from=east, fill={zx_grey}, draw=black, minimum height=6pt, inner sep=.75pt, font={\scriptsize \boldmath}, tikzit fill=gray, tikzit category=GLA]
\tikzstyle{rmat}=[lmat, shape=signal, signal to=east, signal from=west, tikzit fill=gray, tikzit category=GLA]
\tikzstyle{dmat}=[lmat, shape=signal, signal to=west, signal from=east, tikzit fill=gray, tikzit category=GLA, rotate=90]
\tikzstyle{umat}=[lmat, shape=signal, signal to=east, signal from=west, tikzit fill=gray, tikzit category=GLA, rotate=90]
\tikzstyle{d_split}=[shape=trapezium, fill=white, draw=black, inner sep=0pt, trapezium stretches body, text width=15pt, text height=7pt]
\tikzstyle{d_merge}=[{d_split}, shape=trapezium, draw=black, rotate=180]
\tikzstyle{sd_split}=[shape=trapezium, fill=white, draw=black, inner sep=0pt, trapezium stretches body, text width=10pt, text height=5pt]
\tikzstyle{sd_merge}=[{sd_split}, shape=trapezium, draw=black, rotate=180]
\tikzstyle{ZXdimension}=[font={\tiny}, auto, color=gray]
\tikzstyle{rZWdimension}=[fill={gray!10}, font={\tiny}, draw={gray!30}, rounded rectangle, rounded rectangle east arc=none, text={black!60}, inner sep=1.5pt, anchor=east, tikzit fill=gray]
\tikzstyle{lZWdimension}=[fill={gray!10}, font={\tiny}, draw={gray!30}, rounded rectangle, rounded rectangle west arc=none, text={black!60}, inner sep=1.5pt, anchor=west, tikzit fill=gray]
\tikzstyle{wire label}=[font={\tiny}, auto]
\tikzstyle{control}=[draw=black, shape=circle, fill=black, inner sep=0.5mm]
\tikzstyle{uw}=[shape=isosceles triangle, isosceles triangle stretches=true, fill=black, draw=black, minimum width=0.4cm, minimum height=3mm, inner sep=1pt, shape border rotate=90]
\tikzstyle{dw}=[shape=isosceles triangle, isosceles triangle stretches=true, fill=black, draw=black, minimum width=0.4cm, minimum height=3mm, inner sep=1pt, shape border rotate=-90]
\tikzstyle{braceedge}=[-, decorate, decoration={brace, amplitude=2mm, raise=-1mm}]
\tikzstyle{dotsedge}=[-, dotted, decoration={brace, amplitude=2mm, raise=-1mm}]
\tikzstyle{black dot}=[inner sep=0.5mm,minimum width=0pt,minimum height=0pt,fill=black,draw=black,shape=circle]
\tikzstyle{dot}=[black dot]
\tikzstyle{white dot}=[inner sep=0.5mm,minimum width=0pt,minimum height=0pt,fill=white,draw=black,shape=circle]
\tikzstyle{white phase dot}=[minimum size=3.2mm, font={\footnotesize\boldmath}, shape=rectangle, rounded corners=1.3mm, inner sep=0.8mm, outer sep=-2mm, scale=0.8, draw=black, fill=white]
\tikzstyle{ket}=[minimum size=2mm, font={\tiny\boldmath}, shape=rectangle, rounded corners=0.9mm, inner sep=0.6mm, outer sep=-2mm, scale=0.8, draw=black, fill=black, text=white]
\tikzstyle{box}=[rectangle,fill=white,draw=black, font=\scriptsize, inner sep=2pt]
\tikzstyle{box-no-outline}=[rectangle, inner sep=2pt]
\tikzstyle{fswap}=[inner sep=0.7mm,minimum width=0pt,minimum height=0pt,draw=purple,shape=circle]
\tikzstyle{compact dash}=[dash pattern={on 2pt off 1pt}]
\tikzstyle{W-1-n}=[draw, trapezium, trapezium left angle=70, trapezium right angle=70, fill=black, inner sep=1.8pt]
\definecolor{zx_grey}{RGB}{211,211,211}
\definecolor{zx_red}{RGB}{232,165,165}
\definecolor{zx_pink}{RGB}{255, 130, 160}
\definecolor{zx_green}{RGB}{216,248,216}
\newcommand{\N}{\mathbb{N}}
\renewcommand{\C}{\mathbb{C}}}
\newcommand{\C}{\mathbb{C}}}
\newcommand{\minu}{\texttt{-}}
\newcommand{\plus}{\texttt{+}}
\newcommand{\oset}[3][1.5ex]{%
  \mathrel{\mathop{#3}\limits^{
    \vbox to#1{\kern-2\ex@
    \hbox{$\scriptstyle#2$}\vss}}}}
\newcommand{\interp}[1]{\left\llbracket#1\right\rrbracket}
\newcommand{\lst}[1]{\left(#1\right)}
\newcommand{\ZXf}{\mathbf{ZX_{f}}}
\newcommand{\ZWf}{\mathbf{ZW_{\!f}}}
\newcommand{\FHilb}{\mathbf{FHilb}}
\newcommand{\interpTo}{\overset{\interp{\cdot}}{\longmapsto}}
\newcommand{\iXW}[1][.]{\interp{#1}_{XW\!}}
\newcommand{\iWX}[1][.]{\interp{#1}_{W\!X}}
\newcommand{\ZXtoZW}{\oset{\iXW}{\longmapsto}}
\newcommand{\ZWtoZX}{\oset{\iWX}{\longmapsto}}
\newcommand{\ZXfromXW}{\oset{\iWX}{\longmapsfrom}}
\newcommand{\sNf}{\sqrt{\!N!}}
\newcommand{\bR}{\begin{color}{red}}
\newcommand{\bB}{\begin{color}{blue}}
\newcommand{\bM}{\begin{color}{magenta}}
\newcommand{\bC}{\begin{color}{cyan}}
\newcommand{\bW}{\begin{color}{white}}
\newcommand{\bBl}{\begin{color}{black}}
\newcommand{\bG}{\begin{color}{green}}
\newcommand{\bY}{\begin{color}{yellow}}
\newcommand{\e}{\end{color}}
\newcommand{\tikzrefsize}[1]{\scriptsize{#1}}
\newcommand{\axref}[1]{\tikzrefsize{\eqref{rule:#1}}}
\newcommand{\lemref}[1]{\tikzrefsize{\eqref{#1}}}
\newcommand{\Mod}[1]{\ (\mathrm{mod}\ #1)}
\newcommand{\eq}[2][~~]{#1\underset{\substack{#2}}{=}#1}
\newcommand{\namedLabel}[1]{\tag{\textsc{#1}}\label{rule:#1}\refstepcounter{equation}}
\newcommand{\zxw}[1]{\cite[#1]{poorCompletenessArbitraryFinite2023}}
\renewcommand{\axref}[1]{\tikzrefsize{\eqref{rule:#1}}}
\renewcommand{\lemref}[1]{\tikzrefsize{\eqref{#1}}}
\title{ZX-calculus is Complete for Finite-Dimensional Hilbert Spaces}
\date{}
\author{
  Boldizsár Poór$\null^{1,2}$ \and
  Razin A.\@ Shaikh$\null^{1,2}$ \and
  Quanlong Wang$\null^{1}$ \and
  \institute{$\null^{1}$Quantinuum, 17 Beaumont Street, Oxford, OX1 2NA, United Kingdom}
  \institute{$\null^{2}$University of Oxford, United Kingdom}
}
\begin{document}

\maketitle

\begin{abstract}
  
The ZX-calculus is a graphical language for reasoning about quantum computing and quantum information theory.
As a complete graphical language, it incorporates a set of axioms rich enough to derive any equation of the underlying formalism.
While completeness of the ZX-calculus has been established for qubits and the Clifford fragment of prime-dimensional qudits, universal completeness beyond two-level systems has remained unproven until now.
In this paper, we present a proof establishing the completeness of finite-dimensional ZX-calculus, incorporating only the mixed-dimensional Z-spider and the qudit X-spider as generators.
Our approach builds on the completeness of another graphical language, the finite-dimensional ZW-calculus, with direct translations between these two calculi.
By proving its completeness, we lay a solid foundation for the ZX-calculus as a versatile tool not only for quantum computation but also for various fields within finite-dimensional quantum theory.

\end{abstract}

\section{Introduction}\label{sec:introduction}


The ZX-calculus~\cite{coeckeInteractingQuantumObservables2007, coeckeInteractingQuantumObservables2008} is a graphical language for reasoning about quantum computation and quantum information theory.
Thanks to its model-independent representation of quantum computation and set of intuitive graphical rewrite rules,
the ZX-calculus has found applications across various domains of quantum technologies~\cite{vandeweteringZXcalculusWorkingQuantum2020}.
These domains include quantum error correction~\cite{debeaudrapZXCalculusLanguage2020, kissingerPhasefreeZXDiagrams2022, huangGraphicalCSSCode2023, townsend-teagueFloquetifyingColourCode2023, cowtanCSSCode2024, rodatzFloquetifyingStabiliser2024},
quantum circuit optimization~\cite{duncanGraphtheoreticSimplificationQuantum2020, debeaudrapFastEffective2020, kissingerReducingTcountZXcalculus2020, vandeweteringOptimalCompilationParametrised2024},
measurement-based quantum computing~\cite{duncanRewritingMeasurementBasedQuantum2010, backensThereBackAgain2021, mcelvanneyFlowpreservingZXcalculusRewrite2023},
fusion-based quantum computing~\cite{bombinUnifyingFlavors2024, pankovichFlexibleEntangledState2023, feliceFusionFlow2024},
compilation~\cite{sivarajahTKetRetargetableCompiler2020},
classical simulation~\cite{codsiClassicallySimulatingQuantum2023, camSpeedingQuantumCircuits2023},
and education~\cite{coeckePicturingQuantumProcesses2017, coeckeQuantumPictures2022, dundar-coeckeQuantumPicturalismLearning2023}.


While the ZX-calculus has primarily been studied for reasoning about qubit quantum computation,
various extensions or modifications allow it to address different aspects of quantum computation and theory.
From the perspective of the ZX-calculus, there are two approaches to modifying the language: changing the set of generators and/or giving them an alternative interpretation.
Languages with a different set of generators include the
ZW-calculus~\cite{coeckeGHZWcalculusContains2011, hadzihasanovicAlgebraEntanglementGeometry2017},
ZH-calculus~\cite{backensZHCompleteGraphical2019, laakkonenPicturingCountingReductions2023},
and ZXW-calculus~\cite{shaikhHowSum2023, wangDifferentiatingIntegrating2024}.
Alternatively, a different interpretation may involve extending the language to higher-dimensional systems such as
qutrits~\cite{wangQutritZXcalculusComplete2018, townsend-teagueSimplificationStrategiesQutrit2022, vandeweteringBuildingQutritDiagonal2023},
quopits~\cite{boothCompleteZXcalculi2022, poorQupitStabiliserZXtravaganza2023, comfortAlgebraStabilizerCodes2023, boothGraphicalSymplecticAlgebra2024},
qudits~\cite{royQuditZHCalculusGeneralised2023, poorCompletenessArbitraryFinite2023, cowtanQuditLatticeSurgery2022},
or the finite-dimensional setting~\cite{wangQufiniteZXcalculusUnified2022, wangCompletenessQufiniteZXW2024, devismeMinimalityFiniteDimensionalZWCalculi2024}.


While languages like the ZXW- and ZW-calculus enable complete reasoning for both qudits and finite-dimensional Hilbert spaces,
defining the ZX-calculus with the same generality can open avenues for interesting applications, as different generator sets exhibit distinct strengths and weaknesses.
For instance, the ZW-calculus~\cite{coeckeThreeQubitEntanglement2011} provides valuable insights for studying multi-partite entanglement~\cite{coeckeThreeQubitEntanglement2011, hadzihasanovicAlgebraEntanglementGeometry2017} and linear optical quantum computing~\cite{defeliceQuantumLinear2023, defeliceLightMatterInteractionZXW2023}.
However, it is less effective for understanding circuit-based quantum protocols.
Similarly, while the ZXW-calculus is a powerful analytical tool capable of expressing Hamiltonians~\cite{shaikhHowSum2023, defeliceLightMatterInteractionZXW2023}, performing differentiation and integration of ZX-diagrams~\cite{wangDifferentiatingIntegrating2024}, classical simulation~\cite{kochContractionZX2024}, and interactions in quantum field theory~\cite{shaikhCategoricalSemanticsFeynman2022}, extracting circuits expressed by these methods remains a hard problem~\cite{debeaudrapCircuitExtractionZXDiagrams2022}.
By contrast, with the ZX-calculus we can keep diagrams easily extractable for quantum circuit optimization~\cite{duncanGraphtheoreticSimplificationQuantum2020}.
Therefore, the finite-dimensional ZX-calculus has the potential to yield results directly applicable to quantum circuits.


When designing graphical languages, there are three crucial properties that ensure the language can be effectively used.
These properties are
(1) \emph{soundness}, which ensures that the rules of the calculus are correct,
(2) \emph{universality}, ensuring that the language can express any element of the underlying formalism, and
(3) \emph{completeness}, enabling the derivation of any equality within the underlying formalism.
Of these, proving completeness poses the most difficulty;
nevertheless, it ensures that the language can be effectively used to graphically derive equalities and proofs.


Indeed, there is a rich history of results investigating the completeness and incompleteness of various calculi.
While the qubit ZX-calculus was first formulated in 2007~\cite{coeckeInteractingQuantumObservables2007},
completeness for qubit quantum computing was not proved until 2017~\cite{ngUniversalCompletionZXcalculus2017, jeandelDiagrammaticReasoningClifford2018}.
This unfolded in several stages, progressively increasing the fragment for which completeness had been proved.
The first completeness proof was for the Clifford fragment~\cite{backensZXcalculusCompleteStabilizer2014}, followed by a proof of incompleteness for the universal fragment~\cite{schroderdewittZXcalculusIncompleteQuantum2014}.
Moving forward, completeness for single qubit Clifford+T quantum mechanics was established in~\cite{backensZXcalculusCompleteSinglequbit2014}, and the completeness of the ZW-calculus was presented in~\cite{hadzihasanovicDiagrammaticAxiomatisationQubit2015} for qubits with integer coefficients and for the universal fragment in~\cite{hadzihasanovicAlgebraEntanglementGeometry2017}.
Further study of the ZX-calculus revealed the necessity of the \enquote{supplementarity} rule~\cite{perdrixSupplementarityNecessaryQuantum2016}.
Transferring the completeness proof of the ZW-calculus~\cite{hadzihasanovicDiagrammaticAxiomatisationQubit2015, hadzihasanovicAlgebraEntanglementGeometry2017} led to a surge of completeness results.
First, for the Clifford+T fragment in~\cite{jeandelCompleteAxiomatisationZXCalculus2018}, and then for the universal fragment in~\cite{ngUniversalCompletionZXcalculus2017, jeandelDiagrammaticReasoningClifford2018}.
Further improvements to the qubit ZX-calculus focused on minimizing its set of axioms, first in~\cite{backensMinimalStabilizerZXcalculus2020} and then in~\cite{vilmartNearMinimalAxiomatisationZXCalculus2019}.
This process was highly non-trivial, but now each rule is concise and intuitive.


Further to qubit calculi, the completeness of qudit calculi has also been extensively studied over the years.
The first such completeness result was that of the stabilizer fragment for qutrit ZX-calculus~\cite{wangQutritZXcalculusComplete2018}.
The stabilizer fragment of the ZX-calculus for all odd prime dimensions was shown to be complete in~\cite{boothCompleteZXcalculi2022}, and its rule-set has since been further reduced in~\cite{poorQupitStabiliserZXtravaganza2023}.
The first universal completeness of a graphical language beyond qubits was established in~\cite{poorCompletenessArbitraryFinite2023} for the arbitrary finite dimensions of ZXW-calculus.
Recently, the completeness of the qudit ZW-calculus has also been demonstrated in~\cite{devismeMinimalityFiniteDimensionalZWCalculi2024}, with a significantly reduced set of axioms.
Alongside the qudit ZW-calculus, the paper also introduces and proves the completeness of the finite-dimensional ZW-calculus~\cite{devismeMinimalityFiniteDimensionalZWCalculi2024}.
This result came shortly after the establishment of the qufinite ZXW-calculus and its completeness for finite-dimensional Hilbert spaces~\cite{wangCompletenessQufiniteZXW2024}.
However, the universal completeness of any higher-dimensional ZX-calculus has remained unproven.


In this paper, we define the finite-dimensional ZX-calculus and prove its completeness.
We begin by defining the generators of the calculus as the mixed-dimensional Z-spider~\cite{wangCompletenessQufiniteZXW2024} and the qudit X-spider, and then presenting its axioms.
Next, we recapitulate the axioms and definition of the finite-dimensional ZW-calculus~\cite{devismeMinimalityFiniteDimensionalZWCalculi2024}.
We then move on to proving the main result of the paper --- the completeness of the finite-dimensional ZX-calculus.
It is achieved by translating the generators of our language to the finite-dimensional ZW-calculus~\cite{devismeMinimalityFiniteDimensionalZWCalculi2024} and proving the invertibility of this translation.
Similar techniques of proving completeness have been employed before, first in~\cite{jeandelCompleteAxiomatisationZXCalculus2018}, and later in~\cite{ngUniversalCompletionZXcalculus2017, jeandelDiagrammaticReasoningClifford2018}.
By establishing completeness, we lay a solid foundation for the ZX-calculus as a versatile tool not only for quantum computation but also for various fields within finite-dimensional quantum theory.

\section{Finite-Dimensional ZX-calculus}\label{sec:zx-calculus}

In this section, we define our calculus starting with the generators, their interpretation, and lastly, the axiomatization that we later show to be complete.
Since any $1$-dimensional diagram is just the empty diagram, this paper only considers dimensions strictly bigger than $1$.

\subsection{Generators}\label{subsec:zx-generators}

We define the symmetric monoidal category $\ZXf$ with objects as lists of dimensions $\lst{d_i}_{i=1}^n$ where $d_i \in \N$ for all $0 < i \leq n$, such that $d_i \geq 2$, and morphisms generated by the following diagrams,
for any $a, b, n, m \in \N$, $0 \leq j < a$, and $\overrightarrow r \in \C^a$ such that $r_0 = 1$:
\begin{align*}
  \tikzfig{figures/generators/ZX/Z-mult} &: \lst{a}^{\otimes n} \to \lst{a}^{\otimes m} & \qquad
  \tikzfig{figures/generators/ZX/Z-unit} &: \lst{} \to \lst{a} & \qquad
  \tikzfig{figures/generators/ZX/Z-embed} &: \lst{a} \to \lst{b} & & & \\
  \tikzfig{figures/generators/ZX/X-mult} &: \lst{a,a} \to \lst{a} & \qquad
  \tikzfig{figures/generators/ZX/ket} &: \lst{} \to \lst{a} &  \qquad
  \tikzfig{figures/generators/ZX/id} &: \lst{a} \to \lst{a} & \qquad
  \tikzfig{figures/generators/ZX/braid} &: \lst{a,b} \to \lst{b,a}
\end{align*}
Diagrams are to be read top-to-bottom, as implied by the interpretation below.
Diagrams can be composed in two ways:
sequentially, by connecting input and output wires,
and in parallel, by placing them side-by-side.

We extend our language with the standard \emph{qudit Z-spider} and the \emph{mixed-dimensional Z-spider}~\cite{wangCompletenessQufiniteZXW2024},
given by the following compositions, respectively:
\[
  \tikzfig{figures/generators/ZX/qudit-Z-box}
  \quad \coloneqq \quad
  \tikzfig{figures/definitions/ZX/qudit-Z-box}
  \qquad \qquad
  \qquad \qquad
  \tikzfig{figures/generators/ZX/mixed-Z-box}
  \quad \coloneqq \quad
  \tikzfig{figures/definitions/ZX/mixed-Z-box}
\]
where $a = \min_{i = 0}^{n + m} a_i$ is the minimal dimension.
Furthermore, the phase-free Z-spider with arbitrary legs can express both the cap and cup as follows:
\[
  \tikzfig{figures/generators/ZX/cap}
  \quad \coloneqq \quad
  \tikzfig{figures/definitions/ZX/cap}
  \qquad \qquad
  \qquad \qquad
  \tikzfig{figures/generators/ZX/cup}
  \quad \coloneqq \quad
  \tikzfig{figures/definitions/ZX/cup}
\]
Caps and cups then allow us to construct the transposition of any diagram:
\[
  \tikzfig{figures/definitions/diagram-transpose}
\]
In particular, we obtain the transposition of all the generators.

With these diagrams, we can now define the \emph{X-spider} inductively, for any $m,n \in \N$:
\begin{equation*}
  \tikzfig{figures/generators/ZX/X-n-1}
  \coloneqq
  \tikzfig{figures/definitions/ZX/X-n-1}
  \qquad \qquad
  \tikzfig{figures/generators/ZX/X-spider-phasefree}
  \coloneqq
  \tikzfig{figures/definitions/ZX/X-spider-phasefree}
  \qquad \qquad
  \tikzfig{figures/generators/ZX/X-spider}
  \coloneqq
  \tikzfig{figures/definitions/ZX/X-spider}
\end{equation*}

\subsection{Interpretation}\label{subsec:interpretation}

The standard interpretation of a diagram in $\ZXf$ is a symmetric monoidal functor $\interp{\cdot}\strut : \ZXf \to \FHilb$ where $\FHilb$ is the category of finite-dimensional Hilbert spaces.
On objects, it is defined as $\interp{\lst{a_i}_{i=1}^n} = \C^{A}$ where $A = \prod_{i=1}^n a_i$, and on generators, it is given as follows:
\begin{align*}
  \tikzfig{figures/generators/ZX/Z-mult} \ &\interpTo\ \sum_{k=0}^{a-1} \ket{k}^{\otimes m}\bra{k}^{\otimes n} & \quad\quad
  \tikzfig{figures/generators/ZX/Z-unit} \ &\interpTo\ \sum_{k=0}^{a-1} r_k \ket{k} & \quad\quad\quad
  \tikzfig{figures/generators/ZX/Z-embed}  \ &\interpTo\ \hspace{-10pt}\sum_{k=0}^{\min \{a, b\} - 1}\hspace{-10pt} \ket{k}\bra{k} & \\
  \tikzfig{figures/generators/ZX/X-mult} \ &\interpTo\ \sum_{k,\ell=0}^{a-1} \ket{k \plus \ell \hspace{-7pt} \mod a}\bra{k, \ell} & \quad
  \tikzfig{figures/generators/ZX/X-unit} \ &\interpTo\ \ket{a \minu j} & \quad\quad\quad
  \tikzfig{figures/generators/ZX/id} \ &\interpTo\ \sum_{k=0}^{a-1} \ket{k}\bra{k} & \\
  \tikzfig{figures/generators/ZX/braid} \ &\interpTo\ \sum_{k=0}^{a-1} \sum_{l=0}^{b-1} \ket{\ell, k}\bra{k, \ell} & \quad\quad
  & &
  & &
\end{align*}
where $\overrightarrow r \in \C^a$ and $r_0 \coloneqq 1$.
Any other morphisms can be defined compositionally: $\interp{D_1 \otimes D_2 } = \interp{D_1} \otimes \interp{D_2}$, and $\interp{D_1 \circ D_2 } = \interp{D_1} \circ \interp{D_2}$.
By functoriality of the standard interpretation, the general spiders have the following correspondence in $\FHilb$:
\[
  \tikzfig{figures/generators/ZX/mixed-Z-box}
  \quad \interpTo \quad
  \sum_{j=0}^{a - 1} r_j
  \ket{j, \cdots, j} \bra{j, \cdots, j},
\]
where $a = \min\limits_{i = 0}^{n + m} a_i$, $\overrightarrow{r} = (r_1, \cdots, r_{a-1})$, and $r_0 = 1$; moreover, for $0 \leq j_p, k_q < a$,
\[
  \tikzfig{figures/generators/ZX/X-spider}
  \quad \interpTo \quad
  \sum_{
    \substack{
      i + j_1+\cdots+ j_m\\
      \equiv \, k_1+\cdots +k_n\! \Mod{a}
    }
  }
  \ket{j_1, \cdots, j_m}\bra{k_1, \cdots, k_n}.
\]

The mixed-dimensional Z-spider can have legs of varying dimensions which needs some further explanation:
In the qudit setting, a Z-spider behaves as the generalized Kronecker delta --- it ensures that the same basis state is present on each of its legs.
In our mixed-dimensional calculus, this behaviour is preserved by selecting the $k$-th standard basis on each leg for any $k$ less than the minimal dimension.
As such behaviour cannot be defined for the remaining basis states, we set their coefficients to zero.
In other words, these basis states are not included in the sum.

\subsection{Notations}

Here, we define some useful notations that we use throughout the paper.

\begin{itemize}
  \item The original green circle spider~\cite{coeckeInteractingQuantumObservables2011, ranchinDepictingQuditQuantum2014} can be defined using the Z box:
  \[
    \tikzfig{figures/definitions/ZX/circlegspiders1}
    \quad \overset{\interp{\cdot}}{\longmapsto} \quad
    \sum_{j=0}^{a-1} e^{i \alpha_j} \ket{j}^{\otimes m}\bra{j}^{\otimes n}
  \]
  where
  $a = \min\limits_{i = 0}^{n + m} a_i$,
  $\overrightarrow{\alpha} = (\alpha_1, \cdots, \alpha_{a-1})$,
  $\alpha_0 \coloneqq 0$,
  $e^{i\overrightarrow{\alpha}}=(e^{i\alpha_1}, \cdots, e^{i\alpha_{a-1}})$,\, and
  $\alpha_i \in [0, 2\pi)$.

  \item A \emph{multiplier}~\cite{caretteSZXCalculusScalableGraphical2019} labelled by $m$ corresponds to a Z- and X-spider connected with $m$ wires.
  Unlike in the qubit case, a green and a red spider can be connected with more than one wire.
  In fact, the Hopf law generalizes to $a$ connections in dimension $a$ (\cref{hopfditlm}), implying that the multiplier may be labelled modulo $a$.
  \begin{gather}
    \tikzfig{figures/generators/ZX/multiplier}
    \quad\coloneqq\quad
    \tikzfig{figures/definitions/ZX/multiplier}
    \qquad \qquad
    \tikzfig{figures/generators/ZX/multiplier-t}
    \quad\coloneqq\quad
    \tikzfig{figures/definitions/ZX/multiplier-t-1}
    \quad=\quad
    \tikzfig{figures/definitions/ZX/multiplier-t-2}
    \namedLabel{Mu}
  \end{gather}
  Then, the interpretation of the multiplier is given as follows:
  \[
    \tikzfig{figures/generators/ZX/multiplier}
    \quad\interpTo\quad
    \sum_{i=0}^{a-1} \ket{m \cdot i \mod a}\bra{i}
  \]

  \item We can define the \emph{dimension splitter} of the qufinite ZX-calculus~\cite{wangQufiniteZXcalculusUnified2022} as follows:
  \begin{gather}
    \tikzfig{figures/generators/ZX/dsplit}
    \quad\coloneqq\quad
    \tikzfig{figures/definitions/ZX/dsplit}
    \quad\interpTo\quad
    \sum_{i=0}^{a-1}\sum_{j=0}^{b-1}\ket{i,j}\bra{ib+j}
    \namedLabel{DD}
  \end{gather}
  Note that this definition matches that of~\cite[Axiom (DD)]{wangCompletenessQufiniteZXW2024}.

  \item Throughout this paper, we extensively use the vector $N = (1, \cdots, a-1)$ (where $a$ is the dimension), and elementwise functions on this vector.
  For example, $\sqrt{N!}$ corresponds to the vector $(\sqrt{1!}, \cdots, \sqrt{(a - 1)!})$ and $r^N$ refers to $(r^1, \cdots, r^{a-1})$.
  \item When two vectors are multiplied or added in the parameter of a Z-spider, we refer to elementwise multiplication or addition of the vectors.
  \item When the only non-zero element of the parameter in a Z-box is the first or last, we use the following shorthands:
  \[
    \tikzfig{figures/definitions/ZX/Z-spider-x-tilde-label}
    \qquad \qquad \qquad
    \tikzfig{figures/definitions/ZX/Z-spider-x-label}
  \]

  \item We use the notations $\overrightarrow{1_k} = (\overbrace{1, \cdots, 1}^{k - 1}, 0, \cdots, 0)$, $K_j=\left(j\frac{2\pi}{d}, 2j\frac{2\pi}{d}, \cdots, (d-1)j\frac{2\pi}{d}\right)$ in dimension $d$, and $\overrightarrow{1} = (1, \cdots, 1)$.

  \item The qudit generalization of the yellow Hadamard box is defined as follows, for $\omega = e^{i\frac{2\pi}{a}}$:
  \begin{gather}
    \scalebox{0.9}{\tikzfig{figures/definitions/ZX/hadamard}}
    \namedLabel{HD}
  \end{gather}
  \item The inverse of the Hadamard box, the yellow $H^\dagger$ box, is defined as follows:
  \begin{gather}
    \tikzfig{figures/definitions/ZX/h-dagger}
    \tag{H$\null^\dagger$}\label{rule:HDagger}\refstepcounter{equation}
  \end{gather}

  \item We use a yellow $D$ box to denote the \emph{dualiser} as defined in~\cite{coeckeInteractingQuantumObservables2011}:
  \begin{gather}
    \tikzfig{figures/definitions/ZX/dualiser}
    \quad \overset{\interp{\cdot}}{\longmapsto} \quad
    \sum_{i = 0}^{a \minu 1} \ket{i} \bra{a-i \Mod{a}}.
    \tag{Du}\label{rule:Du}\refstepcounter{equation}
  \end{gather}

  \item The Z-spider state with phase $\overrightarrow{1}$ and the X-spider state with phase $K_0$ are denoted with empty spiders:
  \[
    \tikzfig{figures/definitions/ZX/empty-Z-spider}
    \qquad \qquad \qquad
    \tikzfig{figures/definitions/ZX/empty-X-spider}\ ,
  \]
\end{itemize}

\subsection{Axiomatization}\label{subsec:axiomatisation}

In this section, we give a set of graphical rewrite rules.
First, an equation we assume for the Z-spider is called flexsymmetry~\cite{caretteWieldingZXcalculusFlexsymmetry2021}, enabling us to freely swap the legs of a Z-spider.
That is, for an arbitrary permutation of wires $\sigma$, the following holds:
\[
  \tikzfig{figures/axioms/ZX/Z-OCM}
\]
We write $\ZXf \vdash D_1 = D_2$, if we can use the rules of the calculus to turn $D_1$ into $D_2$.
Now, let us present the set of axioms for the calculus to perform purely diagrammatic reasoning:

\begin{multicols}{2}
  \allowdisplaybreaks
  \noindent
  \begin{gather}
    \tikzfig{figures/axioms/ZX/Z-fusion}
    \namedLabel{S1}
  \end{gather}
  where
  $A = \{a_t\}_{t = 0}^j$, $B = \{b_t\}_{t = 0}^\ell$, $C = \{c_t\}_{t = 0}^p$,\\
  $M = \min(A {\cup} B {\cup} C)$,
  $m = \min(A {\cup} C)$,
  $n = \min(B {\cup} C)$,
  $v = \min(A {\cup} B)$,
  $w = \min(C)$,
  $\overrightarrow{r}=(r_1, \dotsc, r_{m-1})$,
  $\overrightarrow{s}=(s_1, \dotsc, s_{n-1})$, and
  $\overrightarrow{rs'}=(r_1 s_1,~ \dotsc,~ r_{M-1} s_{M-1}, \underbrace{0,~ \dotsc,~ 0}_{\max(v \minu w,\, 0)})$.
  \vspace*{-6pt}
  \begin{gather}
    \tikzfig{figures/axioms/ZX/color}
    \namedLabel{HZ}
  \end{gather}
  where $0 \leq j < a$ and $u_{m,n} = a^{\frac{m+n-2}{2}}$
  \begin{gather}
    \tikzfig{figures/axioms/ZX/bialgebra}
    \namedLabel{B2} \\
    \tikzfig{figures/axioms/ZX/p1}
    \namedLabel{P1} \\
    \tikzfig{figures/axioms/ZX/id}
    \namedLabel{S2} \\
    \tikzfig{figures/axioms/ZX/copy}
    \namedLabel{K0}
  \end{gather}
  where $N = \min\{a, b, c\}$.
  \begin{gather}
    \tikzfig{figures/axioms/ZX/empty}
    \namedLabel{Ept} \\
    \tikzfig{figures/axioms/ZX/d1}
    \namedLabel{D1}
  \end{gather}
  where $\overleftarrow{r} = (r_{a \minu 1}, \cdots, r_{1})$
  \begin{gather}
    \tikzfig{figures/axioms/ZX/piCommute2}
    \namedLabel{K2}
  \end{gather}
  where $\displaystyle \protect{\hat{k}(\overleftarrow{r})}=\left(\frac{r_{1-j}}{r_{a-j}}, \dotsc, \frac{r_{a-1-j}}{r_{a-j}}\right)$
  \begin{gather}
    \tikzfig{figures/axioms/ZX/hopf}
    \namedLabel{HP} \\
    \tikzfig{figures/axioms/ZX/X-spider-mod}
    \namedLabel{XM} \\
    \tikzfig{figures/axioms/ZX/dimInc}
    \namedLabel{DA}
  \end{gather}
  where $k \in \N$.
  \begin{gather}
    \tikzfig{figures/axioms/ZX/phaseAddQudit}
    \namedLabel{PA}
  \end{gather}
  where $\displaystyle r_k=\!\sum_{\substack{i = 0}}^{a \minu 1}  p_i q_{k \minu i \ \mathrm{mod}\ a}\!$ is the $k^{\text{th}\!}$ entry of $\!\overrightarrow{r}\!$.
  \begin{gather}
    \tikzfig{figures/axioms/ZX/zState}
    \namedLabel{ZNF} \\
    \tikzfig{figures/axioms/ZX/phaseCopy}
    \namedLabel{PC}
  \end{gather}
  where $\overrightarrow q$ is a vector such that for all $0 \leq i < b$ and $0 \leq j < c$ we have $p_i p_j = q_{i + j \! \Mod{a}}$.
  \begin{gather}
    \scalebox{0.85}{\tikzfig{figures/axioms/ZX/ugly}}
    \namedLabel{WW}
  \end{gather}
  where $c\geq\min(\sum a_i, \sum b_i)$, $\ell_{ij}=\min(a_i,b_j)$, $A_i = \sum_k{\ell_{i, k}}$, $B_i = \sum_k{\ell_{k, i}}$, $A = \sum_k a_k$, and $B = \sum_k b_k$.
\end{multicols}

\subsubsection{Explanation of Axioms}

While many of the axioms have already been presented in other papers, some appear here for the first time;
therefore, this section provides a brief explanation of the newly introduced axioms.

\begin{itemize}
  \item \eqref{rule:S1} is the generalization of the qudit fusion rule for mixed-dimensional Z-spiders. The phases are multiplied but the vector needs to be cut off at the minimal dimension.
  \item \eqref{rule:D1} generalizes \zxw{(D1)} for an arbitrary number of legs, making \zxw{(H1)} redundant.
  \item \eqref{rule:K2} generalizes \zxw{(K2)} and \zxw{(K1)}.
  \item \eqref{rule:K0} is the mixed-dimensional copy rule.
  The scalar is $0$, if the basis state $\ket{j}$ exceeds the capacity of the output dimension, and $1$ otherwise.
  \item \eqref{rule:PA}, standing for \emph{Phase Addition}, expresses how an X-spider sums up two Z-box states.
  \item \eqref{rule:PC}, standing for \emph{Phase Commute}, commutes a class of Z-boxes through an X-spider.
  \item \eqref{rule:HP} is the translation of the (h) rule of the finite-dimensional ZW-calculus.
  \item The diagrams in~\eqref{rule:DA} implement the scalar-less W-node, with the rule stating that we can change the internal dimension as long as it is sufficiently high.
  \item \eqref{rule:ZNF} maps an arbitrary Z-spider into its normal form.
  \item \eqref{rule:XM} asserts that an X-spider can also be implemented as a scalar-less W-node composed with a modulo box; see \cref{subsec:iXW} for further explanation.
  \item \eqref{rule:WW} directly translates the ($b_2$) rule of the finite-dimensional ZW-calculus.
\end{itemize}

\section{Finite-Dimensional ZW-calculus\texorpdfstring{~\cite{devismeMinimalityFiniteDimensionalZWCalculi2024}}{}}\label{sec:zw-calculus}

This section recapitulates the finite-dimensional ZW-calculus, in accordance with~\cite{devismeMinimalityFiniteDimensionalZWCalculi2024}, discussing its generators with their interpretations and presenting the complete axiomatization of the calculus.\footnote{This paper builds on the equational theory presented in Ref.~\cite{devismeMinimalityFiniteDimensionalZWCalculi2024};
however, the referenced paper has since been updated and published, with a slightly modified set of rewrite rules; see Ref.~\cite{devismeMinimalityFiniteDimensional2025}.}

\subsection{Generators}

We define the symmetric monoidal category $\ZWf$ with objects as lists of dimensions $\lst{d_i}_{i=1}^n$ where $d_i \in \N$ for all $0 < i \leq n$, and morphisms generated by the following diagrams,
for any $a, b, b_j, n, m \in \N$, $0 < j \leq n$, $0 < k \leq a$, and $r \in \C$:
\begin{align*}
  \tikzfig{figures/generators/ZW/Z-spider} &: \lst{a}^{\otimes n} \to \lst{a}^{\otimes m} & \quad\quad\quad
  \tikzfig{figures/generators/ZW/W-spider} &: \lst{a} \to \lst{b_i}_{i=1}^n & \quad\quad\quad
  \tikzfig{figures/generators/ZW/ket-k} &: \lst{} \to \lst{a} & \\
  \tikzfig{figures/generators/ZW/swap-annot} &: \lst{a,b} \to \lst{b,a} & \quad\quad\quad
  \tikzfig{figures/generators/ZW/cap-annot} &: \lst{} \to \lst{a,a} & \quad\quad\quad
  \tikzfig{figures/generators/ZW/cup-annot} &: \lst{a,a} \to \lst{} & \\
  r &: \lst{ } \to \lst{ } & \quad\quad\quad
  \tikzfig{figures/generators/ZW/id-annot} &: \lst{a} \to \lst{a} & \quad\quad\quad
  & &
\end{align*}
Compositions are given in the usual way.

\subsection{Interpretation}

The interpretation of a diagram in $\ZWf$ is a symmetric monoidal functor $\interp{\cdot}\strut : \ZWf \to \FHilb$.
On objects, it is defined as $\interp{\lst{a_i}_{i=1}^n} = \C^{A}$ where $A = \prod_{i=1}^n (a_i+1)$.
The interpretation of the Z-spider is as follows, for any $r \in \C$:
\[
  \tikzfig{figures/generators/ZW/Z-spider}
  \quad \interpTo \quad
  \sum_{k=0}^a r^k \sqrt{k!}^{n+m-2}\ketbra{k^m}{k^n}
\]
The W-node and its interpretation are given as follows:
\[
  \tikzfig{figures/generators/ZW/W-spider}
  \quad \interpTo \quad
  \hspace*{-1em}\sum_{\substack{0\leq k_i \leq b_i\\ k_1{+}...{+}k_n \leq a}}
  \sqrt{\binom{k_1{+}...{+}k_n}{k_1,...,k_n}}\ketbra{k_1,...,k_n}{k_1{+}...{+}k_n}
\]
where $a \geq \max_{i = 1}^{n} b_i$.
The interpretations of the remaining generators are as follows:
\begin{align*}
  \tikzfig{figures/generators/ZW/ket-k} \ \interpTo \ &
  \begin{cases}
    \sqrt{k!} \ket k  & \text{if } 0 < k \leq a \\
    \vec 0            & \text{otherwise}
  \end{cases} &
  r \ \interpTo \ & r & \quad\quad\quad
  \tikzfig{figures/generators/ZW/id-annot} \ \interpTo \ & \sum_{k=0}^{a} \ket{k}\bra{k} & \\
  \tikzfig{figures/generators/ZW/swap-annot} \ \interpTo \ & \sum_{k=0}^{a} \sum_{l=0}^{b} \ket{\ell, k}\bra{k, \ell} &  \quad\quad\quad
  \tikzfig{figures/generators/ZW/cap-annot} \ \interpTo \ & \sum_{k=0}^{a} \ket{k, k} & \quad\quad\quad
  \tikzfig{figures/generators/ZW/cup-annot} \ \interpTo \ & \sum_{k=0}^{a} \bra{k, k} &
\end{align*}

It is worth pointing out that the dimensions of the wire in $\ZXf$ and $\ZWf$ are not the same.
In the ZX-calculus $d$ is used to indicate that the wire carries a $d$-dimensional qudit while in the case of the ZW-calculus it means a $(d+1)$-dimensional qudit.
On ZX-diagrams, dimension is marked with grey text by the wire, whereas text in light grey bubble next to the wire indicates the dimension of the ZW-calculus.

\subsection{Axioms}\label{subsec:zw-axioms}

The equational theory $\ZWf$ for the finite-dimensional ZW-calculus is given as follows:\\
\boxed{
  \begin{minipage}{0.98\textwidth}
    \medskip
    \hspace*{1em}
    $\label{ax:Z-spider-qf}
    \tikzfig{figures/axioms/ZW/Z-spider-rule-annot-00}
    \eq{(s)}\tikzfig{figures/axioms/ZW/Z-spider-rule-annot-01}$
    \hfill
    $\label{ax:id-qf}
    \tikzfig{figures/axioms/ZW/Z-id-annot-02}
    \eq[~]{(i\hspace*{-1pt}d)}\tikzfig{figures/axioms/ZW/Z-id-annot-01}$
    \hfill
    $\label{ax:W-assoc-qf}
    \tikzfig{figures/axioms/ZW/W-assoc-annot-00}
    \overset{b\geq\min(c,\sum a_i)}{\eq{(a)}}\tikzfig{figures/axioms/ZW/W-assoc-annot-01}$
    \hspace*{1em}
    \medskip

    \hspace*{1em}
    $\label{ax:bialgebra-Z-W-qf}
    \tikzfig{figures/axioms/ZW/Z-W-bialgebra-annot-00}
    \overset{n\neq0}{\eq{(b_1)}}\tikzfig{figures/axioms/ZW/Z-W-bialgebra-annot-01}$
    \hfill
    $\label{ax:sum-qf}
    \tikzfig{figures/axioms/ZW/sum-annot-00}
    \eq{(+)}\tikzfig{figures/axioms/ZW/sum-annot-01}$
    \hfill
    $\label{ax:one-qf}
    1\eq[~]{(e)}\tikzfig{figures/axioms/ZW/empty}$
    \hfill
    $\label{ax:scalar-qf}
    \tikzfig{figures/axioms/ZW/copy-1-annot-00}
    \eq{(c\!p)} r\cdot \tikzfig{figures/axioms/ZW/copy-1-annot-01}$
    \hspace*{1em}
    \medskip

    \hspace*{1em}
    $\label{ax:bialgebra-W-qf}
    \tikzfig{figures/axioms/ZW/W-bialgebra-annot-00}
    \overset{\substack{c\geq\min(\sum a_i, \sum b_i)\\\ell_{ij}=\min(a_i,b_j)}}{\eq{(b_2)}}\tikzfig{figures/axioms/ZW/W-bialgebra-annot-01}$
    \hfill
    $\label{ax:hopf-qf}
    \tikzfig{figures/axioms/ZW/Hopf-annot-00}
    \eq{(h)}\tikzfig{figures/axioms/ZW/Hopf-annot-01}$
    \hfill
    $\label{ax:partition-qf}
    \tikzfig{figures/axioms/ZW/ket-1-ket-k-annot-00}
    ~\overset{0< k\leq a}{\eq{(p)}}~\tikzfig{figures/axioms/ZW/ket-1-ket-k-annot-01}$
    \hspace*{1em}
    \medskip

    \hspace*{1em}
    $\label{ax:e1-qf}
    \tikzfig{figures/axioms/ZW/e1-NF-annot-00}
    \eq{(e_1)}\tikzfig{figures/axioms/ZW/e1-NF-annot-01}$
    \hfill
    $\label{ax:W-assoc-2-qf}
    \tikzfig{figures/axioms/ZW/W-assoc-annot2-00}
    \eq[]{(a')}\tikzfig{figures/axioms/ZW/W-assoc-annot2-01}$
    \hfill
    $\label{ax:Z-copy-qf}
    \tikzfig{figures/axioms/ZW/Z-copy-annot-00}
    \eq[]{(z\!c\!p)}\tikzfig{figures/axioms/ZW/Z-copy-annot-01}$
    \hfill
    $\label{ax:Z-normal-form-qf}
    \tikzfig{figures/axioms/ZW/Z-NF-annot-00}
    \eq[]{(n\!f)}\tikzfig{figures/axioms/ZW/Z-NF-annot-01}$
    \hspace*{1em}
    \medskip
  \end{minipage}
}

\begin{proposition}[Completeness of ZW-calculus]
\label{prop:ZWcomplete}
For any two $\ZWf$-diagrams of the same type $D_1: A \to B$ and $D_2: A \to B$,
\[
  \interp{D_1} = \interp{D_2} \iff \ZWf \vdash D_1 = D_2
\]
\end{proposition}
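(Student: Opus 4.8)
The plan is to prove the two directions separately, noting that this statement recapitulates the completeness theorem of~\cite{devismeMinimalityFiniteDimensionalZWCalculi2024}. The soundness direction ($\Leftarrow$) is routine: since $\interp{\cdot}$ is defined compositionally as a symmetric monoidal functor, it suffices to check that the two sides of every axiom in the box above have the same interpretation as linear maps. Each such check reduces to a finite matrix identity obtained by evaluating the interpretations of the two generator-composites using the stated formulas for the Z-spider, the W-node, and the ket-$k$ states; I would verify these one axiom at a time.

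The substance lies in the completeness direction ($\Rightarrow$), for which the standard technique is a normal-form argument. First I would fix a canonical diagram $\mathrm{NF}(M)$ for each matrix $M \colon \interp{A} \to \interp{B}$, obtained via map--state duality: bending all inputs to outputs with the cup generators, it suffices to give a normal form for states $\ket{\psi} \in \interp{B}$, since $M$ and its name $\sum_{i,j} M_{ij}\,\ket{j}\otimes\ket{i}$ are provably interconvertible using the cap and cup generators together with the snake equations. A state is then written as a linear combination $\sum_{\vec k} c_{\vec k}\,\ket{\vec k}$ of basis kets; each $\ket{\vec k}$ is built from the ket-$k$ generators (correcting the $\sqrt{k!}$ normalisations with scalars), each coefficient $c_{\vec k}$ is supplied by the scalar generator $r$, and the terms are superposed using the additive structure of the W-node together with the $(+)$ rule. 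This produces, for every $M$, an explicit diagram whose interpretation is $M$ and whose shape is determined solely by the entries of $M$.

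The core of the argument is then to show that \emph{every} $\ZWf$-diagram $D$ is provably equal to $\mathrm{NF}(\interp{D})$. I would proceed by structural induction on $D$, maintaining normal forms under the two compositions: under parallel composition the normal forms tensor, while under sequential composition one must re-normalise the product of two normal forms. The rewriting toolkit is the spider-fusion rule $(s)$ to merge adjacent Z-spiders, the bialgebra rules $(b_1)$ and $(b_2)$ with the W-associativity rules $(a)$ and $(a')$ to commute W-nodes past Z-spiders, and the copy and normal-form rules $(z\!c\!p)$, $(c\!p)$, $(e_1)$, and $(n\!f)$ to collapse the resulting connected Z--W structure back into a single layer of kets, scalars, and sums. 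Once $D_1$ and $D_2$ are both reduced, faithfulness on normal forms finishes the proof: the interpretation of $\mathrm{NF}(M)$ reads off exactly the entries of $M$, so $\interp{D_1} = \interp{D_2}$ forces the two normal forms to be literally the same diagram, whence $\ZWf \vdash D_1 = D_2$.

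The main obstacle is the inductive step for sequential composition, that is, showing that the reduction to normal form is stable under $\circ$ in the mixed-dimensional setting. Here the multiplicative/copy behaviour of the Z-spider and the additive/multinomial behaviour of the W-node must be controlled simultaneously across wires of differing dimensions; the bialgebra rule $(b_2)$, with its $\ell_{ij} = \min(a_i,b_j)$ side condition, is the delicate tool, and orchestrating it so that the $\sqrt{k!}$ and multinomial factors recombine into precisely the coefficients of the composed matrix is where the bulk of the work resides.
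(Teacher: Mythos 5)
The first thing to note is that the paper does not prove this proposition at all: its ``proof'' is the single sentence deferring to \cite[Theorem 3]{devismeMinimalityFiniteDimensionalZWCalculi2024}. \Cref{prop:ZWcomplete} is an imported black box on which the paper's own completeness theorem for $\ZXf$ rests, so the expected answer here is a citation, not a derivation. Your decision to reconstruct the argument from scratch is therefore a genuinely different route from the paper, though the strategy you sketch --- soundness by checking each axiom semantically, completeness by map--state duality, a canonical normal form for states built from kets, scalars, and W-node sums, and a structural induction reducing every diagram to the normal form of its interpretation --- is essentially the strategy of the cited reference, and of the earlier qubit ZW completeness proofs it descends from. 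So as a description of \emph{how one would prove} the proposition, your outline points in the right direction.

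As a proof, however, it has a genuine gap, and you have named it yourself: the inductive step for sequential composition. Everything before that point is bookkeeping; the entire content of a normal-form completeness proof is showing that the composite of two normal forms can be rewritten, using only the listed axioms, back into a normal form. In the mixed-dimensional setting this requires, at minimum: (i) showing that a Z-spider or W-node applied to a normal-form state distributes over the W-node encoding the superposition, which is where $(b_1)$, $(b_2)$, $(a)$, $(a')$, $(z\!c\!p)$ and $(n\!f)$ must be orchestrated with their dimension side conditions $b\geq\min(c,\sum a_i)$ and $\ell_{ij}=\min(a_i,b_j)$ satisfied at every step; (ii) verifying that the $\sqrt{k!}$ and multinomial normalisations recombine into the matrix entries of the composite, which is a nontrivial combinatorial identity at the level of the rewrites, not just of the semantics; and (iii) establishing uniqueness of the normal form, i.e.\ that two normal-form diagrams with equal interpretation are syntactically identical up to the symmetric monoidal structure --- without this, ``faithfulness on normal forms'' does not follow. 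None of these is carried out in your proposal, and (i) in particular is exactly where the finite-dimensional case diverges from the well-understood qubit case. Since the statement is a recapitulation of an external theorem, the correct and complete proof for the purposes of this paper is the citation; your sketch cannot replace it without several pages of additional work filling in the step you flag as the ``main obstacle.''
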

\noindent The proof is the content of~\cite[Theorem 3]{devismeMinimalityFiniteDimensionalZWCalculi2024}.

\begin{remark}
  While in $\ZXf$ all wire dimensions must be greater than or equal to $2$, the
  finite-dimensional ZW-calculus allows wires of dimension $1$ (labelled by $0$).
  Therefore, we require the completeness of the finite-dimensional ZW-calculus restricted to having no $0$-labeled wires.
  This variant of the ZW completeness is still complete as the proof does not use $0$-labeled wires.
\end{remark}

\section{Completeness from Translation}\label{sec:translation-completeness}

In this section, we prove the completeness of the finite-dimensional ZX-calculus.
Our proof strategy is based on the one outlined in~\cite{jeandelCompleteAxiomatisationZXCalculus2018, ngUniversalCompletionZXcalculus2017}.
The idea of the proof is to define a translation from and to a complete language --- the finite-dimensional ZW-calculus in our case.
Then, if one can prove certain properties of these translation functors, such as soundness and invertibility, then we obtain completeness.

\subsection{ZX-to-ZW Translation}\label{subsec:iXW}

Here, we define the \emph{ZX-to-ZW translation functor}, $\iXW : \ZXf \to \ZWf$.  On objects, we have $\iXW[\lst{a_i}_{i=1}^n] = \lst{a_i-1}_{i=1}^n$. On morphisms, we show how it maps the generators of $\ZXf$ to the generators of $\ZWf$; the other morphisms can be defined compositionally:  $\iXW[D_1 \otimes D_2 ] =\iXW[D_1] \otimes \iXW[D_2]$, and  $\iXW[D_1 \circ D_2 ] =\iXW[D_1] \circ \iXW[D_2]$.
We first show how the generators defining a general mixed-dimensional Z-spider are translated,
starting with the Z-spider with a single output and the embedding:
\[
  \tikzfig{figures/generators/ZX/Z-box-state}
  \quad\ZXtoZW\quad
  \scalebox{0.9}{\tikzfig{figures/translation/from-ZW/Z-box-state}}
  \qquad \qquad
  \qquad \qquad
  \tikzfig{figures/generators/ZX/Z-embed}
  \quad\ZXtoZW\quad
  \begin{cases}
    \scalebox{0.9}{\tikzfig{figures/translation/from-ZW/Z-embed-1}}& \text{if } a \geq b\\[15pt]
    \scalebox{0.9}{\tikzfig{figures/translation/from-ZW/Z-embed-2}}& \text{if } a < b
  \end{cases}
\]
The former interpretation is based on the normal form of the qudit ZW-calculus~\cite[Definition 3]{devismeMinimalityFiniteDimensionalZWCalculi2024}, and the embedding follows from comparing the interpretations.
Since the Z-spiders of the two calculi match up to a $C = \left(\frac{1}{\sqrt{N!}}\right)^{m+n-1}$ vector parameter, its translation is as follows:
\[
  \tikzfig{figures/generators/ZX/Z-mult}
  \quad\ZXtoZW\quad
  \tikzfig{figures/translation/from-ZW/phase-Z-box}
\]

Now, we focus on expressing generators that include X-spiders.
The translation of computational basis states follows from the interpretations.
To express the X-spider in the finite-dimensional ZW-calculus, we first point out that the interpretation of the W- and X-spider are closely related.
Other than the scalar factors, the only difference is that the W-spider results in sums of the input states that are smaller than the output dimension while the X-spider sums elements modulo $a$:
\begin{align*}
  \tikzfig{figures/generators/ZW/W-spider-2-1}
  \quad&\interpTo\quad
  \ket{k + \ell}\bra{k, \ell} \quad\text{for}\quad k, \ell < a \ \text{and} \ k + \ell < a, \\
  \tikzfig{figures/generators/ZX/X-spider-2-1}
  \quad&\interpTo\quad
  \ket{k + \ell \!\!\mod a}\bra{k, \ell} \quad\text{for}\quad k, \ell < a
\end{align*}
Setting the output dimension of the W-spider to be $2a$, and applying a modulo $a$ gate,
we obtain the translation of the X spider, that is,
\begin{align*}
  \ket{j \!\!\mod a}\bra{j}\ket{k + \ell}\bra{k, \ell} \ \ \text{for}\ \ k, \ell < a \ \land\ k + \ell < 2a,
  \quad\Longleftrightarrow\quad
  \ket{k + \ell \!\!\mod a}\bra{k, \ell} \ \ \text{for}\ \ k, \ell < a,
\end{align*}
Diagrammatically, these translations are given as follows:
\[
  \tikzfig{figures/generators/ZX/ket}
  \quad\ZXtoZW\quad
  \tikzfig{figures/translation/from-ZW/ket}
  \qquad \quad
  \tikzfig{figures/generators/ZX/X-spider-2-1}
  \quad\ZXtoZW\quad
  \scalebox{0.9}{\tikzfig{figures/translation/from-ZW/X-spider-2-1}}
\]
where the modulo $a$ gadget is given as follows:
\[
  \tikzfig{figures/definitions/ZX/mod-d}
  \quad\ZXtoZW\quad
  \tikzfig{figures/generators/ZW/mod-d}
  \quad\coloneqq\quad
  \tikzfig{figures/definitions/ZW/mod-d}
\]
The translations of the remaining generators, the identity and the swap, are trivial.
\begin{lemma}
  \label{lem:ZW-sound}
  The ZX-to-ZW translation functor $\iXW$ preserves the semantics, that is, $\interp{\iXW[D]} = \interp{D}$ for any ZX-diagram $D$.
\end{lemma}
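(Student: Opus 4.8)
The plan is to reduce the statement to a finite check on the generators. Both interpretation functors $\interp{\cdot}:\ZXf\to\FHilb$ and $\interp{\cdot}:\ZWf\to\FHilb$ are symmetric monoidal, and $\iXW$ is itself a symmetric monoidal functor by construction, being fixed on generators and extended by $\iXW[D_1\otimes D_2]=\iXW[D_1]\otimes\iXW[D_2]$ and $\iXW[D_1\circ D_2]=\iXW[D_1]\circ\iXW[D_2]$. Hence the composite $\interp{\iXW[\cdot]}$ is again a symmetric monoidal functor $\ZXf\to\FHilb$. Since $\ZXf$ is freely generated as a symmetric monoidal category by the listed generators, any two symmetric monoidal functors out of it that agree on objects and on each generator agree on every morphism. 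Therefore it suffices to verify $\interp{\iXW[D]}=\interp{D}$ as $D$ ranges over the generators, after first confirming agreement on objects.

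On objects the check is immediate: $\interp{\iXW[\lst{a_i}_{i=1}^n]}=\interp{\lst{a_i-1}_{i=1}^n}=\C^{\prod_i((a_i-1)+1)}=\C^{\prod_i a_i}=\interp{\lst{a_i}_{i=1}^n}$, so the two functors share domains and codomains. For the identity and the braid the equation holds trivially, since $\iXW$ sends them to the ZW identity and swap, whose interpretations coincide entry by entry with the ZX ones. For the computational basis state and the single-output Z-box state I would expand both interpretations directly; the only subtlety is the normalizing scalar compensating the $\sqrt{k!}$ factor that the ZW generator $\ket{k}$ carries, which is precisely what the chosen translation supplies. The Z-embedding splits into the cases $a\ge b$ and $a<b$, and in each I would compare the truncation ranges $\min\{a,b\}$ of the two interpretations and read off equality.

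For the general Z-spider the two calculi differ only by the vector $C=\left(\frac{1}{\sNf}\right)^{m+n-1}$, so the verification reduces to the scalar bookkeeping showing that attaching $C$ cancels the weights $\sqrt{k!}^{\,n+m-2}$ appearing in the ZW Z-spider interpretation and restores the unit coefficients of the ZX Z-spider. I expect the \emph{main obstacle} to be the X-spider generator $\lst{a,a}\to\lst{a}$. Here $\iXW$ replaces it by a two-input W-node whose internal dimension is raised to $\ge 2a$, post-composed with a modulo-$a$ gadget, and one must check two things: first, that the gadget correctly folds $k+\ell$ into $k+\ell\Mod a$, which is the structural identity $\ket{k\Mod a}\bra{k}\,\ket{k+\ell}\bra{k,\ell}$ already sketched; and second, more delicately, that the binomial weights $\sqrt{\binom{k+\ell}{k,\ell}}$ introduced by the W-node, together with any scalars hidden in the modulo gadget's own interpretation, are neutralized so that every surviving matrix entry $\ket{k+\ell\Mod a}\bra{k,\ell}$ acquires coefficient exactly $1$, matching the ZX X-spider. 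Controlling these scalar factors across the wrap-around $k+\ell\ge a$ is the part of the argument that requires genuine care.
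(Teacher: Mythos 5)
Your proposal is correct and follows essentially the same route as the paper, which likewise reduces the claim to evaluating the original generators and their translations on computational basis states and confirming the interpretations agree; you simply make the reduction-to-generators step and the scalar bookkeeping (the $\left(\frac{1}{\sNf}\right)^{m+n-1}$ correction for the Z-spider and the cancellation of the W-node's binomial weights in the X-spider translation) explicit where the paper leaves them implicit.
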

\noindent To verify this lemma, it suffices to apply the computational basis states to both the original ZX-diagram $D$ and its translated ZW-diagram $\iXW[D]$.
By evaluating these diagrams under all possible computational basis inputs, we can verify their equivalence under interpretations.

\subsection{ZW-to-ZX Translation}

Here, we define the \emph{ZW-to-ZX translation functor}, $\iWX : \ZWf \to \ZXf$.  On objects, we have $\iWX[\lst{a_i}_{i=1}^n] = \lst{a_i+1}_{i=1}^n$. On morphisms, we show how it maps the generators of $\ZWf$ to the generators of $\ZXf$; the other morphisms can be defined compositionally:  $\iWX[D_1 \otimes D_2 ] =\iWX[D_1] \otimes \iWX[D_2]$, and  $\iWX[D_1 \circ D_2 ] =\iWX[D_1] \circ \iWX[D_2]$.
 For $B = \sum_{i = 1}^{n} b_i$,
\[
  \tikzfig{figures/generators/ZW/Z-spider}
  \quad\ZWtoZX\quad
  \tikzfig{figures/translation/from-ZX/Z-spider}
  \qquad\qquad\qquad
  \tikzfig{figures/generators/ZW/W-spider}
  \quad\ZWtoZX\quad
  \tikzfig{figures/translation/from-ZX/W-spider}
\]
The translation of the Z-spider follows directly from its interpretation.
The intuition behind the translation of the W-node is similar to that used for translating the X-spider from ZX to ZW\@.
Up to scalars, both nodes sum up basis states, but the X-spider also takes the result modulo the dimension.
Embedding the X-spider in a sufficiently high dimension, we can ensure that the sum of the basis states is always less than the dimension and thus the modulo is never taken.
Once we have this, the mixed-dimensional Z-spider can remove sums that are larger than the output dimension of the W-node resulting in the same interpretation.

The remaining translations follow directly from the interpretation:
\begin{align*}
  \tikzfig{figures/generators/ZW/ket-k} &\quad\ZWtoZX\quad \tikzfig{figures/translation/from-ZX/ket-k} & \quad\quad\quad
  \tikzfig{figures/generators/ZW/cap-annot} &\quad\ZWtoZX\quad \tikzfig{figures/translation/from-ZX/cap} & \quad\quad\quad
  \tikzfig{figures/generators/ZW/swap-annot} &\quad\ZWtoZX\quad \tikzfig{figures/translation/from-ZX/swap} & \\
  \tikzfig{figures/generators/ZW/cup-annot} &\quad\ZWtoZX\quad \tikzfig{figures/translation/from-ZX/cup} & \quad\quad\quad
  \tikzfig{figures/generators/ZW/id-annot} &\quad\ZWtoZX\quad \tikzfig{figures/translation/from-ZX/id} & \quad\quad\quad
  r &\quad\ZWtoZX\quad \tikzfig{figures/translation/from-ZX/global-scalar}&
\end{align*}

\begin{lemma}
  \label{lem:WX-sound}
  The ZW-to-ZX translation functor  preserves the semantics, that is, $\interp{\iWX[D]} = \interp{D}$ for any ZW-diagram $D$.
\end{lemma}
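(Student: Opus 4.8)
The plan is to exploit functoriality. Both $\interp{\cdot} : \ZWf \to \FHilb$ and the composite $\interp{\iWX[\cdot]}$ are symmetric monoidal functors, and such a functor out of a freely generated symmetric monoidal category is determined by its action on objects and on the generating morphisms. On objects the two agree by construction: $\iWX[\lst{a_i}_{i=1}^n] = \lst{a_i+1}_{i=1}^n$, whose ZX-interpretation is $\C^{\prod_i (a_i+1)}$, which is exactly $\interp{\lst{a_i}_{i=1}^n}$ in $\ZWf$ (recall that a ZW-wire labelled $a$ carries an $(a+1)$-dimensional system). Since $\iWX[D_1 \otimes D_2] = \iWX[D_1] \otimes \iWX[D_2]$ and $\iWX[D_1 \circ D_2] = \iWX[D_1] \circ \iWX[D_2]$, and since $\interp{\cdot}$ likewise preserves tensor and composition, it suffices to verify $\interp{\iWX[g]} = \interp{g}$ for each generator $g$ of $\ZWf$. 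Each such check is performed, exactly as for \cref{lem:ZW-sound}, by evaluating both sides on all computational basis inputs.

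The structural generators are immediate, since their ZX-translations were read off directly from the $\FHilb$-interpretation: feeding in basis states reproduces the defining sums $\sum_k \ket{k}\bra{k}$ for the identity, $\sum_{k,\ell}\ket{\ell,k}\bra{k,\ell}$ for the swap, $\sum_k \ket{k,k}$ and $\sum_k \bra{k,k}$ for the cap and cup, and the bare $r$ for the global scalar, with the basis-state generator recovering the factor $\sqrt{k!}$ in front of $\ket{k}$. For the Z-spider, the translation is designed to match its interpretation $\sum_k r^k \sqrt{k!}^{\,n+m-2}\ketbra{k^m}{k^n}$ termwise: the powers of $\sqrt{k!}$ are absorbed into the normalising vector parameter of the mixed-dimensional Z-spider, so a basis-state computation confirms equality with little further work.

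The substantive case is the W-node, whose interpretation carries the multinomial weights $\sqrt{\binom{k_1+\cdots+k_n}{k_1,\dots,k_n}}$ and restricts the summation to $k_1+\cdots+k_n \leq a$. As explained in the construction of $\iWX$, its translation embeds the combining X-spider into a dimension high enough that the modulo-$a$ reduction is never triggered, so that the X-spider genuinely \emph{adds} the incoming basis labels rather than summing them modulo $a$; a mixed-dimensional Z-spider then truncates any sum exceeding the output dimension, reproducing the constraint $k_1+\cdots+k_n\leq a$. The main obstacle is to confirm that the square-root-factorial scalar factors emitted by this composite of X- and Z-spiders reassemble into precisely the multinomial coefficient $\sqrt{\binom{k_1+\cdots+k_n}{k_1,\dots,k_n}}$ on each surviving basis term, rather than into some nearby product of factorials. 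Once this bookkeeping of the $\sqrt{N!}$-type weights is verified on every computational basis input, all generators match and the lemma follows by functoriality.
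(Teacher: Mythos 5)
Your proof takes essentially the same approach as the paper's: the paper likewise reduces the claim to verifying the generators and justifies each by evaluating on computational basis inputs, leaving the generator-by-generator details (including the multinomial bookkeeping for the W-node) implicit. Your write-up is correct and, if anything, more explicit than the paper's one-sentence justification.
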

\noindent To verify this lemma, it suffices to apply the computational basis states to both the original ZW-diagram $D$ and its translated ZX-diagram $\iWX[D]$.
By evaluating these diagrams under all possible computational basis inputs, we can confirm that their interpretations yield identical results.

\subsection{Proof of Completeness}

In this section, we prove the completeness of $\ZXf$, the finite-dimensional ZX-calculus.
\begin{restatable}{lemma}{bnf}
  \label{lem:back-and-forth}
  For an arbitrary ZX-diagram $D$, $\ZXf \vdash \iXW[\iWX[D]] = D$.
\end{restatable}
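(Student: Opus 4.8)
The plan is to exploit the fact that both translation functors are symmetric monoidal and defined compositionally on generators, so their composite $\iWX \circ \iXW$ is again a symmetric monoidal functor $\ZXf \to \ZXf$. Consequently $\iWX[\iXW[\cdot]]$ commutes with both sequential and parallel composition, and it suffices to establish $\iWX[\iXW[g]] = g$ in $\ZXf$ for each of the seven generators $g$ of the calculus: the phase-free Z-spider, the embedding, the Z-box state, the X-spider $\lst{a,a}\to\lst{a}$, the computational basis state, the identity, and the braid. Crucially, each of these equalities must be \emph{derived} from the axioms of \cref{subsec:axiomatisation}; we cannot invoke completeness of $\ZXf$, as that is precisely the statement we are working towards. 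The semantic agreement guaranteed by \cref{lem:ZW-sound} and \cref{lem:WX-sound} tells us the equations ought to hold, but it does not itself supply a diagrammatic derivation.

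The structural generators are immediate, since $\iXW$ sends the identity and braid to the ZW identity and swap and $\iWX$ returns them unchanged. For the state generators, the round trip must absorb the normalising $\sqrt{k!}$ and $\sqrt{N!}$ factors that $\iXW$ introduces to match the ZW interpretation. I would first treat the computational basis state, whose forward image is a ZW $\ket{k}$-generator carrying a $\sqrt{k!}$ scalar and whose backward image is again a ZX basis state; comparing coefficients and discharging the scalar bookkeeping recovers the original. The Z-box state is handled analogously: $\iXW$ maps it into the qudit ZW normal form, and \eqref{rule:ZNF} is exactly the axiom that reassembles an arbitrary Z-spider from this normal form, so after pulling back through $\iWX$ the defining compositions collapse to the original Z-box.

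The phase-free Z-spider and the embedding require carrying the vector $C = (1/\sqrt{N!})^{m+n-1}$ forward and cancelling it on return. Since the two calculi's Z-spiders agree up to this vector, I would use \eqref{rule:S1} to fuse the translated pieces and the Hadamard/normal-form definitions to verify that the forward $\sqrt{N!}$ scaling and the backward $1/\sqrt{N!}$ scaling cancel leg-by-leg. The embedding additionally splits into the cases $a \geq b$ and $a < b$ exactly as in its definition, each checked separately against \eqref{rule:S1} and the embedding semantics.

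The genuine obstacle is the X-spider generator $\lst{a,a}\to\lst{a}$. Its forward translation is a ZW W-node of output dimension $2a$ post-composed with the modulo-$a$ gadget, and pulling this back through $\iWX$ produces a composite of X-spiders embedded in dimension $2a$, mixed-dimensional Z-spiders that trim oversized summands, and the ZX expansion of the modulo gadget. The hard part will be showing this tangle collapses to a single X-spider: I expect to lean on \eqref{rule:DA} to renormalise the internal W-node dimension, on \eqref{rule:XM} to re-express the embedded W-node together with the modulo box as a genuine X-spider, and on the dimension-splitter and mixed-dimensional rules \eqref{rule:DD}, \eqref{rule:K0}, and \eqref{rule:S1} to discard the basis states exceeding the output dimension and enforce addition modulo $a$. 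Making the modulo gadget and the high-dimensional embedding cancel \emph{diagrammatically}, rather than merely semantically, is where the real work of the proof concentrates.
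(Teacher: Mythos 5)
Your proposal takes essentially the same route as the paper: reduce to the generators by functoriality of the composite monoidal functor, then derive $\iWX[\iXW[g]] = g$ diagrammatically for each of the seven generators, with the identity and braid immediate, the Z-family handled by scalar/\,$\sqrt{N!}$ bookkeeping, and the X-spider case resolved via the modulo gadget together with \eqref{rule:XM} and \eqref{rule:DA}. Your emphasis that these must be axiomatic derivations rather than semantic checks is exactly the right point, and the per-generator case split matches the paper's appendix lemmas.
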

\noindent Due to the functoriality of the monoidal functors $\iXW$ and $\iWX$, it suffices to show that the above lemma holds for all generators of $\ZXf$.
\cref{subsec:recovering-generators} discusses these lemmas.

\begin{restatable}{proposition}{rules}
  \label{prop:rule-interp}
  If $\ZWf \vdash D_1 = D_2$ then $\ZXf \vdash \iWX[D_1] = \iWX[D_2]$.
\end{restatable}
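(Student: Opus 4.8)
The plan is to proceed by induction on the length of the ZW-derivation witnessing $\ZWf \vdash D_1 = D_2$. A derivation is a finite sequence of rewrites, each of which replaces a subdiagram matching one side of a ZW-axiom by the other side inside some diagrammatic context. Since $\iWX$ is a symmetric monoidal functor that preserves both $\otimes$ and $\circ$ by definition, it commutes with the formation of such contexts, so applying $\iWX$ to a one-step rewrite in ZW yields a one-step rewrite (in the translated context) in ZX, provided the translated axiom instance is itself derivable. Hence it suffices to establish the base case: for every axiom $L = R$ of the finite-dimensional ZW-calculus, the translated equation $\ZXf \vdash \iWX[L] = \iWX[R]$ holds. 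Chaining these per-axiom derivations along the structure of the original ZW-proof then gives the full claim.

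The verification splits the ZW-axioms into those whose ZX-counterpart was introduced expressly as their translation, and the rest. For the former the result is essentially immediate once $\iWX$ is unfolded: by the explanation of axioms, \eqref{rule:HP} is the image of the Hopf rule $(h)$ and \eqref{rule:WW} is the image of the W-bialgebra $(b_2)$, while \eqref{rule:DA} encodes the dimension-change invariance of the scalar-less W-node. For these, one unfolds the translation on both sides and matches the outcome directly against the corresponding ZX-axiom, keeping track of the dimension shift $a_i \mapsto a_i + 1$ between the two calculi.

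The remaining axioms require an explicit ZX-derivation: the Z-spider rules $(s)$, $(\mathit{id})$, $(\mathit{zcp})$, $(\mathit{nf})$, and $(e_1)$, the partition and scalar-copy rules $(p)$ and $(\mathit{cp})$, the scalar axioms $(+)$ and $(e)$, the associativity rules $(a)$ and $(a')$, and the Z--W bialgebra $(b_1)$. The uniform strategy is to unfold the translation definitions — the Z-spider as a mixed-dimensional Z-box, the W-node as an X-spider embedded in a sufficiently high dimension and capped by a truncating mixed-dimensional Z-spider, and the ket generators — and then to reduce both sides using the ZX-axioms. The Z-spider rules should fall out of spider fusion \eqref{rule:S1} together with \eqref{rule:PA} and \eqref{rule:PC}, the scalar and normal-form rules from \eqref{rule:Ept} and \eqref{rule:ZNF}, and the associativity rules from repeated use of \eqref{rule:DA} and \eqref{rule:XM}.

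I expect the main obstacle to be the axioms coupling the W-node with Z-spiders, above all the Z--W bialgebra $(b_1)$ and the associativity rules, because under $\iWX$ the W-node unfolds into an X-spider living in a high internal dimension together with a mixed-dimensional Z-spider that truncates the oversized basis states. Re-establishing the bialgebraic interaction in ZX then demands careful control of these internal dimensions and truncating spiders: the derivation must route through \eqref{rule:XM}, \eqref{rule:B2}, \eqref{rule:HP}, and \eqref{rule:DA} to align the dimension bookkeeping before the structural rewrite can be applied.
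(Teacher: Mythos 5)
Your reduction is exactly the paper's: by functoriality of the symmetric monoidal functor $\iWX$, a $\ZWf$-derivation maps step-by-step into $\ZXf$ once each axiom instance $L=R$ satisfies $\ZXf \vdash \iWX[L] = \iWX[R]$, and the paper's proof of this proposition is precisely that one-line reduction followed by a pointer to an appendix of per-axiom lemmas. Your accounting of which ZX axioms underwrite which ZW axioms also tracks the paper's own explanation (\eqref{rule:HP} for $(h)$, \eqref{rule:WW} for $(b_2)$, \eqref{rule:DA} for the dimension bookkeeping in the associativity rules). The one caveat worth stating is that essentially all of the mathematical content of the proposition lives in the fourteen per-axiom derivations, which you forecast but do not carry out; in the paper each is a nontrivial diagrammatic computation --- e.g.\ the case split on $M=\min\{\sum a_i,\, b\}$ in the proof of $(a)$, the separate treatment of the phase-free part and the phase-copying part of $(b_1)$, and the convolution identity $c_k=\sum_i r^i s^{k-i}/(i!\,(k-i)!)=(r+s)^k/k!$ behind $(+)$ --- so as written your proposal is a correct outline of the paper's argument rather than a complete proof.
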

\noindent By the functoriality of $\iWX$, we only need to show that the translations of all axioms of $\ZWf$ are derivable in $\ZXf$.
These proofs are the content of \cref{subsec:proving-axioms}.

\begin{theorem}[Completeness]
  For finite-dimensional Hilbert spaces, the ZX-calculus is universally complete:
  For any two ZX-diagrams of the same type $D_1: A \to B$ and $D_2: A \to B$, if $\interp{D_1} = \interp{D_2}$, then $\ZXf \vdash D_1 = D_2$.
\end{theorem}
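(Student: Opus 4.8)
The plan is to use the standard round-trip argument between the two calculi, exploiting that we have semantics-preserving translations in both directions together with the completeness of the target language $\ZWf$. The key observation is that a semantic equality $\interp{D_1} = \interp{D_2}$ in $\ZXf$ can be transported into $\ZWf$ via $\iXW$, discharged there by \cref{prop:ZWcomplete}, and that the resulting \emph{syntactic} derivation can then be transported back into $\ZXf$ via $\iWX$ using \cref{prop:rule-interp}; finally \cref{lem:back-and-forth} identifies the round-trip $\iWX \circ \iXW$ with the identity, recovering the original diagrams.

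Concretely, suppose $D_1, D_2 : A \to B$ are $\ZXf$-diagrams with $\interp{D_1} = \interp{D_2}$. First I would apply $\iXW$ and invoke its soundness (\cref{lem:ZW-sound}): since $\interp{\iXW[D_i]} = \interp{D_i}$ for $i \in \{1,2\}$, we obtain $\interp{\iXW[D_1]} = \interp{\iXW[D_2]}$. Next, because $\ZWf$ is complete (\cref{prop:ZWcomplete}), this semantic equality yields a purely diagrammatic derivation $\ZWf \vdash \iXW[D_1] = \iXW[D_2]$. Then I would apply $\iWX$ to this derivation: \cref{prop:rule-interp} guarantees that every $\ZWf$-rewrite step it uses is derivable in $\ZXf$, hence $\ZXf \vdash \iWX[\iXW[D_1]] = \iWX[\iXW[D_2]]$. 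Finally, \cref{lem:back-and-forth} gives $\iWX[\iXW[D_i]] = D_i$, so by transitivity $\ZXf \vdash D_1 = D_2$, as required. Since soundness of the standard interpretation already supplies the converse implication, this chain in fact upgrades the statement to the full \enquote{iff} form of universal completeness.

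The final assembly is short: essentially all the difficulty lies in the three supporting results it consumes, which are assumed here. The genuine obstacle is \cref{prop:rule-interp}, which requires deriving each axiom of $\ZWf$ inside $\ZXf$ --- in particular the counterparts of the more intricate rules such as $(b_2)$, $(h)$, and the normal-form axioms, which correspond to the $\ZXf$-axioms \eqref{rule:WW}, \eqref{rule:HP}, \eqref{rule:ZNF}, and \eqref{rule:XM}. The other delicate ingredient is \cref{lem:back-and-forth}: one must verify, generator by generator, that pushing a $\ZXf$-generator into $\ZWf$ and back reproduces it up to $\ZXf$-provable equality, which hinges on the definitional unfoldings of the mixed-dimensional Z-spider and of the X-spider together with the modulo gadget. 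Once these are in hand, the completeness theorem follows by the chain above with no further computation.
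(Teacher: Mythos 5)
Your proposal is correct and follows exactly the same round-trip argument as the paper: soundness of $\iXW$, completeness of $\ZWf$, transport of the derivation via \cref{prop:rule-interp}, and recovery of the originals via \cref{lem:back-and-forth}. Your remarks correctly identify where the real work lies (in the supporting results), so nothing further is needed.
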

\begin{proof}
  Suppose $D_1, D_2 \in \ZXf$ such that $\interp{D_1} = \interp{D_2}$ and they have the same type.
  By \cref{lem:ZW-sound}, $\interp{\iXW[D_1]} = \interp{D_1} = \interp{D_2} = \interp{\iXW[D_2]}$.
  By the completeness of ZW-calculus, $\ZWf \vdash \iXW[D_1] = \iXW[D_2]$.
  Now by \cref{prop:rule-interp}, $\ZXf \vdash \iXW[\iWX[D_1]] = \iXW[\iWX[D_2]]$.
  Finally, by \cref{lem:back-and-forth}, $\ZXf \vdash \iXW[\iWX[D_1]] = D_1, ~ \vdash \iXW[\iWX[D_2]] = D_2$, therefore, $\ZXf \vdash D_1 = D_2$.
\end{proof}


\section{Conclusion and Further Work}\label{sec:future}

In this paper, we presented the finite-dimensional ZX-calculus, which generalizes the qudit ZX-calculus with mixed-dimensional Z-spiders.
We proved the completeness of the language by translating to and from the finite-dimensional ZW-calculus and showing the translation is invertible.

While the ZW-calculus is close to minimal, we did not consider the minimality of the ZX-calculus.
It would be interesting to investigate the smallest set of rules needed for the ZX-calculus to be complete for finite-dimensional Hilbert spaces.

Secondly, our calculus generalizes the Z-spider to mixed dimensions but leaves the X-spider unchanged from qudits.
Figuring out a nice mixed-dimensional generalization of the X-spider would allow us more flexibility in doing mixed-dimensional reasoning.

Next, we used the ZW-calculus to prove the completeness of the ZX-calculus.
By translating the ZW normal form, we easily obtain a normal form for the ZX-calculus.
This normal form can be used to prove the completeness of the ZX-calculus directly, by showing that every diagram can be reduced to the normal form.
We leave proving this direct completeness without relying on the result of a different calculus for future work.

Another idea to explore could be obtaining completeness through translation to and from the qufinite ZXW-calculus.
As it is a superset of finite-dimensional ZX-calculus, one side of the translation is trivial.
However, we suspect that given the close-minimality of the finite-dimensional ZW-calculus, the translation presented in this paper results in a more minimal set of rules.

Finally, we note that the ZX-calculus presented here allows for Z-boxes labelled with arbitrary complex numbers, in contrast with the original ZX-calculus where spider coefficients are restricted to be phases.
Further work could be done to probe whether the ZX-calculus with only phase coefficients is complete for finite-dimensional Hilbert spaces.

\section*{Acknowledgements}

We would like to thank Alexander Cowtan and Lia Yeh for their detailed feedback and several suggestions for improvement.
We thank the anonymous reviewers at QPL 2024 for their valuable feedback and pointing out a mistake in a translation of the previous version of this paper.
We would also like to thank the reviewers at QPL 2025 for their very detailed reviews and helpful suggestions.
BP is supported by the Engineering and Physical Sciences Research Council grant number EP/Z002230/1, \enquote{(De)constructing quantum software (DeQS)}.
RS is supported by the Clarendon Fund Scholarship.

\bibliographystyle{eptcs}
\bibliography{preamble/references-bibtex}

\begin{thebibliography}{10}
\providecommand{\bibitemdeclare}[2]{}
\providecommand{\surnamestart}{}
\providecommand{\surnameend}{}
\providecommand{\urlprefix}{Available at }
\providecommand{\url}[1]{\texttt{#1}}
\providecommand{\href}[2]{\texttt{#2}}
\providecommand{\urlalt}[2]{\href{#1}{#2}}
\providecommand{\doi}[1]{doi:\urlalt{https://doi.org/#1}{#1}}
\providecommand{\eprint}[1]{arXiv:\urlalt{https://arxiv.org/abs/#1}{#1}}
\providecommand{\bibinfo}[2]{#2}

\bibitemdeclare{article}{backensZXcalculusCompleteStabilizer2014}
\bibitem{backensZXcalculusCompleteStabilizer2014}
\bibinfo{author}{Miriam \surnamestart Backens\surnameend}
  (\bibinfo{year}{2014}): \emph{\bibinfo{title}{The {{ZX-calculus}} Is Complete
  for Stabilizer Quantum Mechanics}}.
\newblock {\slshape \bibinfo{journal}{New Journal of Physics}}
  \bibinfo{volume}{16}(\bibinfo{number}{9}), p. \bibinfo{pages}{093021},
  \doi{10.1088/1367-2630/16/9/093021}.

\bibitemdeclare{article}{backensZXcalculusCompleteSinglequbit2014}
\bibitem{backensZXcalculusCompleteSinglequbit2014}
\bibinfo{author}{Miriam \surnamestart Backens\surnameend}
  (\bibinfo{year}{2014}): \emph{\bibinfo{title}{The {{ZX-calculus}} Is Complete
  for the Single-Qubit {{Clifford}}+{{T}} Group}}.
\newblock {\slshape \bibinfo{journal}{Electronic Proceedings in Theoretical
  Computer Science}} \bibinfo{volume}{172}, pp. \bibinfo{pages}{293--303},
  \doi{10.4204/EPTCS.172.21}.

\bibitemdeclare{inproceedings}{backensZHCompleteGraphical2019}
\bibitem{backensZHCompleteGraphical2019}
\bibinfo{author}{Miriam \surnamestart Backens\surnameend} \&
  \bibinfo{author}{Aleks \surnamestart Kissinger\surnameend}
  (\bibinfo{year}{2019}): \emph{\bibinfo{title}{{{ZH}}: {{A}} Complete
  Graphical Calculus for Quantum Computations Involving Classical
  Non-Linearity}}.
\newblock In \bibinfo{editor}{Peter \surnamestart Selinger\surnameend} \&
  \bibinfo{editor}{Giulio \surnamestart Chiribella\surnameend}, editors:
  {\slshape \bibinfo{booktitle}{Proceedings of the 15th International
  Conference on Quantum Physics and Logic}}, {\slshape
  \bibinfo{series}{Electronic Proceedings in Theoretical Computer Science}}
  \bibinfo{volume}{287}, \bibinfo{publisher}{Open Publishing Association},
  \bibinfo{address}{Halifax, Canada}, pp. \bibinfo{pages}{23--42},
  \doi{10.4204/EPTCS.287.2}.

\bibitemdeclare{article}{backensThereBackAgain2021}
\bibitem{backensThereBackAgain2021}
\bibinfo{author}{Miriam \surnamestart Backens\surnameend},
  \bibinfo{author}{Hector \surnamestart {Miller-Bakewell}\surnameend},
  \bibinfo{author}{Giovanni \surnamestart de~Felice\surnameend},
  \bibinfo{author}{Leo \surnamestart Lobski\surnameend} \&
  \bibinfo{author}{John \surnamestart van~de Wetering\surnameend}
  (\bibinfo{year}{2021}): \emph{\bibinfo{title}{There and Back Again: {{A}}
  Circuit Extraction Tale}}.
\newblock {\slshape \bibinfo{journal}{Quantum}} \bibinfo{volume}{5}, p.
  \bibinfo{pages}{421}, \doi{10.22331/q-2021-03-25-421}.

\bibitemdeclare{article}{backensMinimalStabilizerZXcalculus2020}
\bibitem{backensMinimalStabilizerZXcalculus2020}
\bibinfo{author}{Miriam \surnamestart Backens\surnameend},
  \bibinfo{author}{Simon \surnamestart Perdrix\surnameend} \&
  \bibinfo{author}{Quanlong \surnamestart Wang\surnameend}
  (\bibinfo{year}{2020}): \emph{\bibinfo{title}{Towards a Minimal Stabilizer
  {{ZX-calculus}}}}.
\newblock {\slshape \bibinfo{journal}{Logical Methods in Computer Science}}
  \bibinfo{volume}{16}(\bibinfo{number}{4}), \doi{10.23638/LMCS-16(4:19)2020}.

\bibitemdeclare{inproceedings}{debeaudrapFastEffective2020}
\bibitem{debeaudrapFastEffective2020}
\bibinfo{author}{Niel \surnamestart de~Beaudrap\surnameend},
  \bibinfo{author}{Xiaoning \surnamestart Bian\surnameend} \&
  \bibinfo{author}{Quanlong \surnamestart Wang\surnameend}
  (\bibinfo{year}{2020}): \emph{\bibinfo{title}{Fast and effective techniques
  for t-count reduction via spider nest identities}}.
\newblock In \bibinfo{editor}{Steven~T. \surnamestart Flammia\surnameend},
  editor: {\slshape \bibinfo{booktitle}{15th conference on the theory of
  quantum computation, communication and cryptography (TQC 2020)}}, {\slshape
  \bibinfo{series}{Leibniz international proceedings in informatics (lipics)}}
  \bibinfo{volume}{158}, \bibinfo{publisher}{Schloss Dagstuhl–Leibniz-Zentrum
  für Informatik}, \bibinfo{address}{Dagstuhl, Germany}, p.
  \bibinfo{pages}{11:1–11:23}, \doi{10.4230/LIPIcs.TQC.2020.11}.

\bibitemdeclare{article}{bombinUnifyingFlavors2024}
\bibitem{bombinUnifyingFlavors2024}
\bibinfo{author}{Hector \surnamestart Bombin\surnameend},
  \bibinfo{author}{Daniel \surnamestart Litinski\surnameend},
  \bibinfo{author}{Naomi \surnamestart Nickerson\surnameend},
  \bibinfo{author}{Fernando \surnamestart Pastawski\surnameend} \&
  \bibinfo{author}{Sam \surnamestart Roberts\surnameend}
  (\bibinfo{year}{2024}): \emph{\bibinfo{title}{Unifying Flavors of Fault
  Tolerance with the {{ZX}} Calculus}}.
\newblock {\slshape \bibinfo{journal}{Quantum}} \bibinfo{volume}{8}, p.
  \bibinfo{pages}{1379}, \doi{10.22331/q-2024-06-18-1379}.

\bibitemdeclare{inproceedings}{boothCompleteZXcalculi2022}
\bibitem{boothCompleteZXcalculi2022}
\bibinfo{author}{Robert~I. \surnamestart Booth\surnameend} \&
  \bibinfo{author}{Titouan \surnamestart Carette\surnameend}
  (\bibinfo{year}{2022}): \emph{\bibinfo{title}{Complete {{ZX-calculi}} for the
  Stabiliser Fragment in Odd Prime Dimensions}}.
\newblock In \bibinfo{editor}{Stefan \surnamestart Szeider\surnameend},
  \bibinfo{editor}{Robert \surnamestart Ganian\surnameend} \&
  \bibinfo{editor}{Alexandra \surnamestart Silva\surnameend}, editors:
  {\slshape \bibinfo{booktitle}{47th International Symposium on Mathematical
  Foundations of Computer Science ({{MFCS}} 2022)}}, {\slshape
  \bibinfo{series}{Leibniz International Proceedings in Informatics (Lipics)}}
  \bibinfo{volume}{241}, \bibinfo{publisher}{Schloss Dagstuhl --
  Leibniz-Zentrum f{\"u}r Informatik}, \bibinfo{address}{Dagstuhl, Germany},
  pp. \bibinfo{pages}{24:1--24:15}, \doi{10.4230/LIPIcs.MFCS.2022.24}.
\newblock \urlprefix\url{https://drops.dagstuhl.de/opus/volltexte/2022/16822}.

\bibitemdeclare{misc}{boothGraphicalSymplecticAlgebra2024}
\bibitem{boothGraphicalSymplecticAlgebra2024}
\bibinfo{author}{Robert~I. \surnamestart Booth\surnameend},
  \bibinfo{author}{Titouan \surnamestart Carette\surnameend} \&
  \bibinfo{author}{Cole \surnamestart Comfort\surnameend}
  (\bibinfo{year}{2024}): \emph{\bibinfo{title}{Graphical {{Symplectic
  Algebra}}}}.
\newblock \eprint{2401.07914}.

\bibitemdeclare{misc}{camSpeedingQuantumCircuits2023}
\bibitem{camSpeedingQuantumCircuits2023}
\bibinfo{author}{Tristan \surnamestart Cam\surnameend} \&
  \bibinfo{author}{Simon \surnamestart Martiel\surnameend}
  (\bibinfo{year}{2023}): \emph{\bibinfo{title}{Speeding up Quantum Circuits
  Simulation Using {{ZX-Calculus}}}}.
\newblock \eprint{2305.02669}.

\bibitemdeclare{phdthesis}{caretteWieldingZXcalculusFlexsymmetry2021}
\bibitem{caretteWieldingZXcalculusFlexsymmetry2021}
\bibinfo{author}{Titouan \surnamestart Carette\surnameend}
  (\bibinfo{year}{2021}): \emph{\bibinfo{title}{Wielding the {{ZX-calculus}},
  Flexsymmetry, Mixed States, and Scalable Notations}}.
\newblock Ph.D. thesis, \bibinfo{school}{Loria, Universit{\'e} de Lorraine}.
\newblock \urlprefix\url{https://hal.archives-ouvertes.fr/tel-03468027}.

\bibitemdeclare{inproceedings}{caretteSZXCalculusScalableGraphical2019}
\bibitem{caretteSZXCalculusScalableGraphical2019}
\bibinfo{author}{Titouan \surnamestart Carette\surnameend},
  \bibinfo{author}{Dominic \surnamestart Horsman\surnameend} \&
  \bibinfo{author}{Simon \surnamestart Perdrix\surnameend}
  (\bibinfo{year}{2019}): \emph{\bibinfo{title}{{{SZX-Calculus}}: {{Scalable
  Graphical Quantum Reasoning}}}}.
\newblock In \bibinfo{editor}{Peter \surnamestart Rossmanith\surnameend},
  \bibinfo{editor}{Pinar \surnamestart Heggernes\surnameend} \&
  \bibinfo{editor}{Joost-Pieter \surnamestart Katoen\surnameend}, editors:
  {\slshape \bibinfo{booktitle}{44th {{International Symposium}} on
  {{Mathematical Foundations}} of {{Computer Science}} ({{MFCS}} 2019)}},
  {\slshape \bibinfo{series}{Leibniz {{International Proceedings}} in
  {{Informatics}} ({{LIPIcs}})}} \bibinfo{volume}{138},
  \bibinfo{publisher}{Schloss Dagstuhl--Leibniz-Zentrum fuer Informatik},
  \bibinfo{address}{Dagstuhl, Germany}, pp. \bibinfo{pages}{55:1--55:15},
  \doi{10.4230/LIPIcs.MFCS.2019.55}.

\bibitemdeclare{misc}{codsiClassicallySimulatingQuantum2023}
\bibitem{codsiClassicallySimulatingQuantum2023}
\bibinfo{author}{Julien \surnamestart Codsi\surnameend} \&
  \bibinfo{author}{John \surnamestart van~de Wetering\surnameend}
  (\bibinfo{year}{2023}): \emph{\bibinfo{title}{Classically Simulating Quantum
  Supremacy IQP Circuits through a Random Graph Approach}}.
\newblock \eprint{2212.08609}.

\bibitemdeclare{misc}{coeckeInteractingQuantumObservables2007}
\bibitem{coeckeInteractingQuantumObservables2007}
\bibinfo{author}{B.~\surnamestart Coecke\surnameend} \&
  \bibinfo{author}{R.~\surnamestart Duncan\surnameend} (\bibinfo{year}{2007}):
  \emph{\bibinfo{title}{Interacting Quantum Observables}}.
\newblock \urlprefix\url{www.cs.ox.ac.uk/people/bob.coecke/GreenRed.pdf}.

\bibitemdeclare{inproceedings}{coeckeInteractingQuantumObservables2008}
\bibitem{coeckeInteractingQuantumObservables2008}
\bibinfo{author}{Bob \surnamestart Coecke\surnameend} \& \bibinfo{author}{Ross
  \surnamestart Duncan\surnameend} (\bibinfo{year}{2008}):
  \emph{\bibinfo{title}{Interacting {{Quantum Observables}}}}.
\newblock In \bibinfo{editor}{Luca \surnamestart Aceto\surnameend},
  \bibinfo{editor}{Ivan \surnamestart Damg{\aa}rd\surnameend},
  \bibinfo{editor}{Leslie~Ann \surnamestart Goldberg\surnameend},
  \bibinfo{editor}{Magn{\'u}s~M. \surnamestart Halld{\'o}rsson\surnameend},
  \bibinfo{editor}{Anna \surnamestart Ing{\'o}lfsd{\'o}ttir\surnameend} \&
  \bibinfo{editor}{Igor \surnamestart Walukiewicz\surnameend}, editors:
  {\slshape \bibinfo{booktitle}{Automata, {{Languages}} and {{Programming}}}},
  \bibinfo{series}{Lecture {{Notes}} in {{Computer Science}}},
  \bibinfo{publisher}{Springer}, \bibinfo{address}{Berlin, Heidelberg}, pp.
  \bibinfo{pages}{298--310}, \doi{10.1007/978-3-540-70583-3\_25}.
\newblock
  \urlprefix\url{http://personal.strath.ac.uk/ross.duncan/papers/iqo-icalp.pdf}.

\bibitemdeclare{article}{coeckeInteractingQuantumObservables2011}
\bibitem{coeckeInteractingQuantumObservables2011}
\bibinfo{author}{Bob \surnamestart Coecke\surnameend} \& \bibinfo{author}{Ross
  \surnamestart Duncan\surnameend} (\bibinfo{year}{2011}):
  \emph{\bibinfo{title}{Interacting Quantum Observables: Categorical Algebra
  and Diagrammatics}}.
\newblock {\slshape \bibinfo{journal}{New Journal of Physics}}
  \bibinfo{volume}{13}(\bibinfo{number}{4}), p. \bibinfo{pages}{043016},
  \doi{10.1088/1367-2630/13/4/043016}.

\bibitemdeclare{article}{coeckeThreeQubitEntanglement2011}
\bibitem{coeckeThreeQubitEntanglement2011}
\bibinfo{author}{Bob \surnamestart Coecke\surnameend} \& \bibinfo{author}{Bill
  \surnamestart Edwards\surnameend} (\bibinfo{year}{2011}):
  \emph{\bibinfo{title}{Three Qubit Entanglement within Graphical
  {{Z}}/{{X-calculus}}}}.
\newblock {\slshape \bibinfo{journal}{Electronic Proceedings in Theoretical
  Computer Science}} \bibinfo{volume}{52}, pp. \bibinfo{pages}{22--33},
  \doi{10.4204/EPTCS.52.3}.

\bibitemdeclare{book}{coeckeQuantumPictures2022}
\bibitem{coeckeQuantumPictures2022}
\bibinfo{author}{Bob \surnamestart Coecke\surnameend} \&
  \bibinfo{author}{Stefano \surnamestart Gogioso\surnameend}
  (\bibinfo{year}{2022}): \emph{\bibinfo{title}{Quantum in Pictures}}.
\newblock \bibinfo{publisher}{Quantinuum}.

\bibitemdeclare{book}{coeckePicturingQuantumProcesses2017}
\bibitem{coeckePicturingQuantumProcesses2017}
\bibinfo{author}{Bob \surnamestart Coecke\surnameend} \& \bibinfo{author}{Aleks
  \surnamestart Kissinger\surnameend} (\bibinfo{year}{2017}):
  \emph{\bibinfo{title}{Picturing Quantum Processes}}.
\newblock \bibinfo{publisher}{Cambridge University Press},
  \doi{10.1017/9781316219317}.

\bibitemdeclare{article}{coeckeGHZWcalculusContains2011}
\bibitem{coeckeGHZWcalculusContains2011}
\bibinfo{author}{Bob \surnamestart Coecke\surnameend}, \bibinfo{author}{Aleks
  \surnamestart Kissinger\surnameend}, \bibinfo{author}{Alex \surnamestart
  Merry\surnameend} \& \bibinfo{author}{Shibdas \surnamestart Roy\surnameend}
  (\bibinfo{year}{2011}): \emph{\bibinfo{title}{The {{GHZ}}/{{W-calculus}}
  Contains Rational Arithmetic}}.
\newblock {\slshape \bibinfo{journal}{Electronic Proceedings in Theoretical
  Computer Science}} \bibinfo{volume}{52}, pp. \bibinfo{pages}{34--48},
  \doi{10.4204/EPTCS.52.4}.

\bibitemdeclare{misc}{comfortAlgebraStabilizerCodes2023}
\bibitem{comfortAlgebraStabilizerCodes2023}
\bibinfo{author}{Cole \surnamestart Comfort\surnameend} (\bibinfo{year}{2023}):
  \emph{\bibinfo{title}{The Algebra for Stabilizer Codes}}.
\newblock \eprint{2304.10584}.

\bibitemdeclare{misc}{cowtanQuditLatticeSurgery2022}
\bibitem{cowtanQuditLatticeSurgery2022}
\bibinfo{author}{Alexander \surnamestart Cowtan\surnameend}
  (\bibinfo{year}{2022}): \emph{\bibinfo{title}{Qudit Lattice Surgery}}.
\newblock \eprint{2204.13228}.

\bibitemdeclare{article}{cowtanCSSCode2024}
\bibitem{cowtanCSSCode2024}
\bibinfo{author}{Alexander \surnamestart Cowtan\surnameend} \&
  \bibinfo{author}{Simon \surnamestart Burton\surnameend}
  (\bibinfo{year}{2024}): \emph{\bibinfo{title}{{{CSS}} Code Surgery as a
  Universal Construction}}.
\newblock {\slshape \bibinfo{journal}{Quantum}} \bibinfo{volume}{8}, p.
  \bibinfo{pages}{1344}, \doi{10.22331/q-2024-05-14-1344}.

\bibitemdeclare{article}{debeaudrapZXCalculusLanguage2020}
\bibitem{debeaudrapZXCalculusLanguage2020}
\bibinfo{author}{Niel \surnamestart {de Beaudrap}\surnameend} \&
  \bibinfo{author}{Dominic \surnamestart Horsman\surnameend}
  (\bibinfo{year}{2020}): \emph{\bibinfo{title}{The {{ZX}} Calculus Is a
  Language for Surface Code Lattice Surgery}}.
\newblock {\slshape \bibinfo{journal}{Quantum}} \bibinfo{volume}{4}, p.
  \bibinfo{pages}{218}, \doi{10.22331/q-2020-01-09-218}.

\bibitemdeclare{inproceedings}{debeaudrapCircuitExtractionZXDiagrams2022}
\bibitem{debeaudrapCircuitExtractionZXDiagrams2022}
\bibinfo{author}{Niel \surnamestart {de Beaudrap}\surnameend},
  \bibinfo{author}{Aleks \surnamestart Kissinger\surnameend} \&
  \bibinfo{author}{John \surnamestart {van de Wetering}\surnameend}
  (\bibinfo{year}{2022}): \emph{\bibinfo{title}{Circuit {{Extraction}} for
  {{ZX-Diagrams Can Be}} \#{{P-Hard}}}}.
\newblock In \bibinfo{editor}{Miko{\l}aj \surnamestart
  Boja{\'n}czyk\surnameend}, \bibinfo{editor}{Emanuela \surnamestart
  Merelli\surnameend} \& \bibinfo{editor}{David~P. \surnamestart
  Woodruff\surnameend}, editors: {\slshape \bibinfo{booktitle}{49th
  International Colloquium on Automata, Languages, and Programming ({{ICALP}}
  2022)}}, {\slshape \bibinfo{series}{Leibniz International Proceedings in
  Informatics ({{LIPIcs}})}} \bibinfo{volume}{229}, \bibinfo{publisher}{Schloss
  Dagstuhl -- Leibniz-Zentrum f{\"u}r Informatik}, \bibinfo{address}{Dagstuhl,
  Germany}, pp. \bibinfo{pages}{119:1--119:19},
  \doi{10.4230/LIPIcs.ICALP.2022.119}.

\bibitemdeclare{inproceedings}{defeliceQuantumLinear2023}
\bibitem{defeliceQuantumLinear2023}
\bibinfo{author}{Giovanni \surnamestart {de Felice}\surnameend} \&
  \bibinfo{author}{Bob \surnamestart Coecke\surnameend} (\bibinfo{year}{2023}):
  \emph{\bibinfo{title}{Quantum Linear Optics via String Diagrams}}.
\newblock In \bibinfo{editor}{Stefano \surnamestart Gogioso\surnameend} \&
  \bibinfo{editor}{Matty \surnamestart Hoban\surnameend}, editors: {\slshape
  \bibinfo{booktitle}{Proceedings 19th International Conference on Quantum
  Physics and Logic, Wolfson College, Oxford, {{UK}}, 27 June - 1 July 2022}},
  {\slshape \bibinfo{series}{Electronic Proceedings in Theoretical Computer
  Science}} \bibinfo{volume}{394}, \bibinfo{publisher}{Open Publishing
  Association}, pp. \bibinfo{pages}{83--100}, \doi{10.4204/EPTCS.394.6}.

\bibitemdeclare{inproceedings}{defeliceLightMatterInteractionZXW2023}
\bibitem{defeliceLightMatterInteractionZXW2023}
\bibinfo{author}{Giovanni \surnamestart {de Felice}\surnameend},
  \bibinfo{author}{Razin~A. \surnamestart Shaikh\surnameend},
  \bibinfo{author}{Boldizs{\'a}r \surnamestart Po{\'o}r\surnameend},
  \bibinfo{author}{Lia \surnamestart Yeh\surnameend}, \bibinfo{author}{Quanlong
  \surnamestart Wang\surnameend} \& \bibinfo{author}{Bob \surnamestart
  Coecke\surnameend} (\bibinfo{year}{2023}):
  \emph{\bibinfo{title}{Light-{{Matter Interaction}} in the {{ZXW Calculus}}}}.
\newblock In \bibinfo{editor}{Shane \surnamestart Mansfield\surnameend},
  \bibinfo{editor}{Benoit \surnamestart Val{\^i}ron\surnameend} \&
  \bibinfo{editor}{Vladimir \surnamestart Zamdzhiev\surnameend}, editors:
  {\slshape \bibinfo{booktitle}{Proceedings of the Twentieth International
  Conference on Quantum Physics and Logic, Paris, France, 17-21st July 2023}},
  {\slshape \bibinfo{series}{Electronic Proceedings in Theoretical Computer
  Science}} \bibinfo{volume}{384}, \bibinfo{publisher}{Open Publishing
  Association}, pp. \bibinfo{pages}{20--46}, \doi{10.4204/EPTCS.384.2}.

\bibitemdeclare{misc}{devismeMinimalityFiniteDimensionalZWCalculi2024}
\bibitem{devismeMinimalityFiniteDimensionalZWCalculi2024}
\bibinfo{author}{Marc \surnamestart {de Visme}\surnameend} \&
  \bibinfo{author}{Renaud \surnamestart Vilmart\surnameend}
  (\bibinfo{year}{2024}): \emph{\bibinfo{title}{Minimality in
  {{Finite-Dimensional ZW-Calculi}}}}.
\newblock \eprint{2401.16225}.

\bibitemdeclare{inproceedings}{devismeMinimalityFiniteDimensional2025}
\bibitem{devismeMinimalityFiniteDimensional2025}
\bibinfo{author}{Marc \surnamestart {de Visme}\surnameend} \&
  \bibinfo{author}{Renaud \surnamestart Vilmart\surnameend}
  (\bibinfo{year}{2025}): \emph{\bibinfo{title}{Minimality in
  {{Finite-Dimensional ZW-Calculi}}}}.
\newblock In \bibinfo{editor}{J{\"o}rg \surnamestart Endrullis\surnameend} \&
  \bibinfo{editor}{Sylvain \surnamestart Schmitz\surnameend}, editors:
  {\slshape \bibinfo{booktitle}{33rd {{EACSL Annual Conference}} on {{Computer
  Science Logic}} ({{CSL}} 2025)}}, {\slshape \bibinfo{series}{Leibniz
  {{International Proceedings}} in {{Informatics}} ({{LIPIcs}})}}
  \bibinfo{volume}{326}, \bibinfo{publisher}{Schloss Dagstuhl --
  Leibniz-Zentrum f{\"u}r Informatik}, \bibinfo{address}{Dagstuhl, Germany},
  pp. \bibinfo{pages}{49:1--49:22}, \doi{10.4230/LIPIcs.CSL.2025.49}.
\newblock
  \urlprefix\url{https://drops.dagstuhl.de/entities/document/10.4230/LIPIcs.CSL.2025.49}.

\bibitemdeclare{article}{duncanGraphtheoreticSimplificationQuantum2020}
\bibitem{duncanGraphtheoreticSimplificationQuantum2020}
\bibinfo{author}{Ross \surnamestart Duncan\surnameend}, \bibinfo{author}{Aleks
  \surnamestart Kissinger\surnameend}, \bibinfo{author}{Simon \surnamestart
  Perdrix\surnameend} \& \bibinfo{author}{John \surnamestart van~de
  Wetering\surnameend} (\bibinfo{year}{2020}):
  \emph{\bibinfo{title}{Graph-Theoretic {{Simplification}} of {{Quantum
  Circuits}} with the {{ZX-calculus}}}}.
\newblock {\slshape \bibinfo{journal}{Quantum}} \bibinfo{volume}{4}, p.
  \bibinfo{pages}{279}, \doi{10.22331/q-2020-06-04-279}.

\bibitemdeclare{inproceedings}{duncanRewritingMeasurementBasedQuantum2010}
\bibitem{duncanRewritingMeasurementBasedQuantum2010}
\bibinfo{author}{Ross \surnamestart Duncan\surnameend} \&
  \bibinfo{author}{Simon \surnamestart Perdrix\surnameend}
  (\bibinfo{year}{2010}): \emph{\bibinfo{title}{Rewriting {{Measurement-Based
  Quantum Computations}} with {{Generalised Flow}}}}.
\newblock In \bibinfo{editor}{Samson \surnamestart Abramsky\surnameend},
  \bibinfo{editor}{Cyril \surnamestart Gavoille\surnameend},
  \bibinfo{editor}{Claude \surnamestart Kirchner\surnameend},
  \bibinfo{editor}{Friedhelm \surnamestart {Meyer auf der Heide}\surnameend} \&
  \bibinfo{editor}{Paul~G. \surnamestart Spirakis\surnameend}, editors:
  {\slshape \bibinfo{booktitle}{Automata, {{Languages}} and {{Programming}}}},
  \bibinfo{series}{Lecture {{Notes}} in {{Computer Science}}},
  \bibinfo{publisher}{Springer}, \bibinfo{address}{Berlin, Heidelberg}, pp.
  \bibinfo{pages}{285--296}, \doi{10.1007/978-3-642-14162-1\_24}.
\newblock
  \urlprefix\url{http://personal.strath.ac.uk/ross.duncan/papers/gflow.pdf}.

\bibitemdeclare{inproceedings}{dundar-coeckeQuantumPicturalismLearning2023}
\bibitem{dundar-coeckeQuantumPicturalismLearning2023}
\bibinfo{author}{Selma \surnamestart {D{\"u}ndar-Coecke}\surnameend},
  \bibinfo{author}{Lia \surnamestart Yeh\surnameend}, \bibinfo{author}{Caterina
  \surnamestart Puca\surnameend}, \bibinfo{author}{Sieglinde M.-L.
  \surnamestart Pfaendler\surnameend}, \bibinfo{author}{Muhammad~Hamza
  \surnamestart Waseem\surnameend}, \bibinfo{author}{Thomas \surnamestart
  Cervoni\surnameend}, \bibinfo{author}{Aleks \surnamestart
  Kissinger\surnameend}, \bibinfo{author}{Stefano \surnamestart
  Gogioso\surnameend} \& \bibinfo{author}{Bob \surnamestart Coecke\surnameend}
  (\bibinfo{year}{2023}): \emph{\bibinfo{title}{Quantum {{Picturalism}}:
  {{Learning Quantum Theory}} in {{High School}}}}.
\newblock In: {\slshape \bibinfo{booktitle}{2023 {{IEEE International
  Conference}} on {{Quantum Computing}} and {{Engineering}} ({{QCE}})}},
  \bibinfo{volume}{03}, pp. \bibinfo{pages}{21--32},
  \doi{10.1109/QCE57702.2023.20321}.
\newblock \eprint{2312.03653}.

\bibitemdeclare{misc}{feliceFusionFlow2024}
\bibitem{feliceFusionFlow2024}
\bibinfo{author}{Giovanni \surnamestart de~Felice\surnameend},
  \bibinfo{author}{Boldizs{\'a}r \surnamestart Po{\'o}r\surnameend},
  \bibinfo{author}{Lia \surnamestart Yeh\surnameend} \&
  \bibinfo{author}{William \surnamestart Cashman\surnameend}
  (\bibinfo{year}{2024}): \emph{\bibinfo{title}{Fusion and Flow: Formal
  Protocols to Reliably Build Photonic Graph States}}.
\newblock \eprint{2409.13541}.

\bibitemdeclare{inproceedings}{hadzihasanovicDiagrammaticAxiomatisationQubit2015}
\bibitem{hadzihasanovicDiagrammaticAxiomatisationQubit2015}
\bibinfo{author}{Amar \surnamestart Hadzihasanovic\surnameend}
  (\bibinfo{year}{2015}): \emph{\bibinfo{title}{A {{Diagrammatic
  Axiomatisation}} for {{Qubit Entanglement}}}}.
\newblock In: {\slshape \bibinfo{booktitle}{Proceedings of the 2015 30th
  {{Annual ACM}}/{{IEEE Symposium}} on {{Logic}} in {{Computer Science}}}},
  \bibinfo{series}{{{LICS}} '15}, \bibinfo{publisher}{IEEE Computer Society},
  \bibinfo{address}{USA}, pp. \bibinfo{pages}{573--584},
  \doi{10.1109/LICS.2015.59}.
\newblock \eprint{1501.07082}.

\bibitemdeclare{phdthesis}{hadzihasanovicAlgebraEntanglementGeometry2017}
\bibitem{hadzihasanovicAlgebraEntanglementGeometry2017}
\bibinfo{author}{Amar \surnamestart Hadzihasanovic\surnameend}
  (\bibinfo{year}{2017}): \emph{\bibinfo{title}{The Algebra of Entanglement and
  the Geometry of Composition}}.
\newblock Ph.D. thesis, \bibinfo{school}{University of Oxford}.
\newblock \eprint{1709.08086}.

\bibitemdeclare{inproceedings}{huangGraphicalCSSCode2023}
\bibitem{huangGraphicalCSSCode2023}
\bibinfo{author}{Jiaxin \surnamestart Huang\surnameend},
  \bibinfo{author}{Sarah~Meng \surnamestart Li\surnameend},
  \bibinfo{author}{Lia \surnamestart Yeh\surnameend}, \bibinfo{author}{Aleks
  \surnamestart Kissinger\surnameend}, \bibinfo{author}{Michele \surnamestart
  Mosca\surnameend} \& \bibinfo{author}{Michael \surnamestart
  Vasmer\surnameend} (\bibinfo{year}{2023}): \emph{\bibinfo{title}{Graphical
  {{CSS}} Code Transformation Using {{ZX}} Calculus}}.
\newblock In \bibinfo{editor}{Shane \surnamestart Mansfield\surnameend},
  \bibinfo{editor}{Benoit \surnamestart Val{\^i}ron\surnameend} \&
  \bibinfo{editor}{Vladimir \surnamestart Zamdzhiev\surnameend}, editors:
  {\slshape \bibinfo{booktitle}{Proceedings of the Twentieth International
  Conference on Quantum Physics and Logic}}, {\slshape
  \bibinfo{series}{Electronic Proceedings in Theoretical Computer Science}}
  \bibinfo{volume}{384}, \bibinfo{publisher}{Open Publishing Association}, pp.
  \bibinfo{pages}{1--19}, \doi{10.4204/EPTCS.384.1}.

\bibitemdeclare{inproceedings}{jeandelCompleteAxiomatisationZXCalculus2018}
\bibitem{jeandelCompleteAxiomatisationZXCalculus2018}
\bibinfo{author}{Emmanuel \surnamestart Jeandel\surnameend},
  \bibinfo{author}{Simon \surnamestart Perdrix\surnameend} \&
  \bibinfo{author}{Renaud \surnamestart Vilmart\surnameend}
  (\bibinfo{year}{2018}): \emph{\bibinfo{title}{A {{Complete Axiomatisation}}
  of the {{ZX-Calculus}} for {{Clifford}}+{{T Quantum Mechanics}}}}.
\newblock In: {\slshape \bibinfo{booktitle}{Proceedings of the 33rd {{Annual
  ACM}}/{{IEEE Symposium}} on {{Logic}} in {{Computer Science}}}},
  \bibinfo{series}{{{LICS}} '18}, \bibinfo{publisher}{Association for Computing
  Machinery}, \bibinfo{address}{New York, NY, USA}, pp.
  \bibinfo{pages}{559--568}, \doi{10.1145/3209108.3209131}.
\newblock \eprint{1705.11151}.

\bibitemdeclare{inproceedings}{jeandelDiagrammaticReasoningClifford2018}
\bibitem{jeandelDiagrammaticReasoningClifford2018}
\bibinfo{author}{Emmanuel \surnamestart Jeandel\surnameend},
  \bibinfo{author}{Simon \surnamestart Perdrix\surnameend} \&
  \bibinfo{author}{Renaud \surnamestart Vilmart\surnameend}
  (\bibinfo{year}{2018}): \emph{\bibinfo{title}{Diagrammatic {{Reasoning}}
  beyond {{Clifford}}+{{T Quantum Mechanics}}}}.
\newblock In: {\slshape \bibinfo{booktitle}{Proceedings of the 33rd {{Annual
  ACM}}/{{IEEE Symposium}} on {{Logic}} in {{Computer Science}}}},
  \bibinfo{series}{{{LICS}} '18}, \bibinfo{publisher}{Association for Computing
  Machinery}, \bibinfo{address}{New York, NY, USA}, pp.
  \bibinfo{pages}{569--578}, \doi{10.1145/3209108.3209139}.
\newblock \eprint{1801.10142}.

\bibitemdeclare{misc}{kissingerPhasefreeZXDiagrams2022}
\bibitem{kissingerPhasefreeZXDiagrams2022}
\bibinfo{author}{Aleks \surnamestart Kissinger\surnameend}
  (\bibinfo{year}{2022}): \emph{\bibinfo{title}{Phase-free ZX diagrams are CSS
  codes (...or how to graphically grok the surface code)}}.
\newblock \eprint{2204.14038}.

\bibitemdeclare{article}{kissingerReducingTcountZXcalculus2020}
\bibitem{kissingerReducingTcountZXcalculus2020}
\bibinfo{author}{Aleks \surnamestart Kissinger\surnameend} \&
  \bibinfo{author}{John \surnamestart {van de Wetering}\surnameend}
  (\bibinfo{year}{2020}): \emph{\bibinfo{title}{Reducing {{T-count}} with the
  {{ZX-calculus}}}}.
\newblock {\slshape \bibinfo{journal}{Physical Review A}}
  \bibinfo{volume}{102}(\bibinfo{number}{2}), p. \bibinfo{pages}{022406},
  \doi{10.1103/PhysRevA.102.022406}.
\newblock \eprint{1903.10477}.

\bibitemdeclare{article}{kochContractionZX2024}
\bibitem{kochContractionZX2024}
\bibinfo{author}{Mark \surnamestart Koch\surnameend}, \bibinfo{author}{Richie
  \surnamestart Yeung\surnameend} \& \bibinfo{author}{Quanlong \surnamestart
  Wang\surnameend} (\bibinfo{year}{2024}): \emph{\bibinfo{title}{Contraction of
  {{ZX}} Diagrams with Triangles via Stabiliser Decompositions}}.
\newblock {\slshape \bibinfo{journal}{Physica Scripta}}
  \bibinfo{volume}{99}(\bibinfo{number}{10}), p. \bibinfo{pages}{105122},
  \doi{10.1088/1402-4896/ad6fd8}.
\newblock \eprint{2307.01803}.

\bibitemdeclare{inproceedings}{laakkonenPicturingCountingReductions2023}
\bibitem{laakkonenPicturingCountingReductions2023}
\bibinfo{author}{Tuomas \surnamestart Laakkonen\surnameend},
  \bibinfo{author}{Konstantinos \surnamestart Meichanetzidis\surnameend} \&
  \bibinfo{author}{John \surnamestart {van de Wetering}\surnameend}
  (\bibinfo{year}{2023}): \emph{\bibinfo{title}{Picturing Counting Reductions
  with the {{ZH-Calculus}}}}.
\newblock In \bibinfo{editor}{Shane \surnamestart Mansfield\surnameend},
  \bibinfo{editor}{Benoit \surnamestart Val{\^i}ron\surnameend} \&
  \bibinfo{editor}{Vladimir \surnamestart Zamdzhiev\surnameend}, editors:
  {\slshape \bibinfo{booktitle}{Proceedings of the Twentieth International
  Conference on Quantum Physics and Logic}}, {\slshape
  \bibinfo{series}{Electronic Proceedings in Theoretical Computer Science}}
  \bibinfo{volume}{384}, \bibinfo{publisher}{Open Publishing Association},
  \bibinfo{address}{Paris, France}, pp. \bibinfo{pages}{89--113},
  \doi{10.4204/EPTCS.384.6}.

\bibitemdeclare{inproceedings}{mcelvanneyFlowpreservingZXcalculusRewrite2023}
\bibitem{mcelvanneyFlowpreservingZXcalculusRewrite2023}
\bibinfo{author}{Tommy \surnamestart McElvanney\surnameend} \&
  \bibinfo{author}{Miriam \surnamestart Backens\surnameend}
  (\bibinfo{year}{2023}): \emph{\bibinfo{title}{Flow-Preserving {{ZX-calculus}}
  Rewrite Rules for Optimisation and Obfuscation}}.
\newblock In \bibinfo{editor}{Shane \surnamestart Mansfield\surnameend},
  \bibinfo{editor}{Benoit \surnamestart Val{\^i}ron\surnameend} \&
  \bibinfo{editor}{Vladimir \surnamestart Zamdzhiev\surnameend}, editors:
  {\slshape \bibinfo{booktitle}{Proceedings of the Twentieth International
  Conference on Quantum Physics and Logic, Paris, France, 17-21st July 2023}},
  {\slshape \bibinfo{series}{Electronic Proceedings in Theoretical Computer
  Science}} \bibinfo{volume}{384}, \bibinfo{publisher}{Open Publishing
  Association}, pp. \bibinfo{pages}{203--219}, \doi{10.4204/EPTCS.384.12}.

\bibitemdeclare{misc}{ngUniversalCompletionZXcalculus2017}
\bibitem{ngUniversalCompletionZXcalculus2017}
\bibinfo{author}{Kang~Feng \surnamestart Ng\surnameend} \&
  \bibinfo{author}{Quanlong \surnamestart Wang\surnameend}
  (\bibinfo{year}{2017}): \emph{\bibinfo{title}{A Universal Completion of the
  {{ZX-calculus}}}}.
\newblock \eprint{1706.09877}.

\bibitemdeclare{misc}{pankovichFlexibleEntangledState2023}
\bibitem{pankovichFlexibleEntangledState2023}
\bibinfo{author}{Brendan \surnamestart Pankovich\surnameend},
  \bibinfo{author}{Alex \surnamestart Neville\surnameend},
  \bibinfo{author}{Angus \surnamestart Kan\surnameend},
  \bibinfo{author}{Srikrishna \surnamestart Omkar\surnameend},
  \bibinfo{author}{Kwok~Ho \surnamestart Wan\surnameend} \&
  \bibinfo{author}{Kamil \surnamestart Br{\'a}dler\surnameend}
  (\bibinfo{year}{2023}): \emph{\bibinfo{title}{Flexible Entangled State
  Generation in Linear Optics}}.
\newblock \eprint{2310.06832}.

\bibitemdeclare{inproceedings}{perdrixSupplementarityNecessaryQuantum2016}
\bibitem{perdrixSupplementarityNecessaryQuantum2016}
\bibinfo{author}{Simon \surnamestart Perdrix\surnameend} \&
  \bibinfo{author}{Quanlong \surnamestart Wang\surnameend}
  (\bibinfo{year}{2016}): \emph{\bibinfo{title}{Supplementarity Is Necessary
  for Quantum Diagram Reasoning}}.
\newblock In: {\slshape \bibinfo{booktitle}{41st International Symposium on
  Mathematical Foundations of Computer Science ({{MFCS}} 2016)}}, {\slshape
  \bibinfo{series}{Leibniz International Proceedings in Informatics
  ({{LIPIcs}})}}~\bibinfo{volume}{58}, \bibinfo{address}{Krakow, Poland}, pp.
  \bibinfo{pages}{76:1--76:14}, \doi{10.4230/LIPIcs.MFCS.2016.76}.

\bibitemdeclare{inproceedings}{poorQupitStabiliserZXtravaganza2023}
\bibitem{poorQupitStabiliserZXtravaganza2023}
\bibinfo{author}{Boldizs{\'a}r \surnamestart Po{\'o}r\surnameend},
  \bibinfo{author}{Robert~I. \surnamestart Booth\surnameend},
  \bibinfo{author}{Titouan \surnamestart Carette\surnameend},
  \bibinfo{author}{John \surnamestart {van de Wetering}\surnameend} \&
  \bibinfo{author}{Lia \surnamestart Yeh\surnameend} (\bibinfo{year}{2023}):
  \emph{\bibinfo{title}{The {{Qupit Stabiliser ZX-travaganza}}: {{Simplified
  Axioms}}, {{Normal Forms}} and {{Graph-Theoretic Simplification}}}}.
\newblock In \bibinfo{editor}{Shane \surnamestart Mansfield\surnameend},
  \bibinfo{editor}{Benoit \surnamestart Val{\^i}ron\surnameend} \&
  \bibinfo{editor}{Vladimir \surnamestart Zamdzhiev\surnameend}, editors:
  {\slshape \bibinfo{booktitle}{Proceedings of the Twentieth International
  Conference on Quantum Physics and Logic, Paris, France, 17-21st July 2023}},
  {\slshape \bibinfo{series}{Electronic Proceedings in Theoretical Computer
  Science}} \bibinfo{volume}{384}, \bibinfo{publisher}{Open Publishing
  Association}, pp. \bibinfo{pages}{220--264}, \doi{10.4204/EPTCS.384.13}.

\bibitemdeclare{inproceedings}{poorCompletenessArbitraryFinite2023}
\bibitem{poorCompletenessArbitraryFinite2023}
\bibinfo{author}{Boldizs{\'a}r \surnamestart Po{\'o}r\surnameend},
  \bibinfo{author}{Quanlong \surnamestart Wang\surnameend},
  \bibinfo{author}{Razin~A. \surnamestart Shaikh\surnameend},
  \bibinfo{author}{Lia \surnamestart Yeh\surnameend}, \bibinfo{author}{Richie
  \surnamestart Yeung\surnameend} \& \bibinfo{author}{Bob \surnamestart
  Coecke\surnameend} (\bibinfo{year}{2023}): \emph{\bibinfo{title}{Completeness
  for Arbitrary Finite Dimensions of {{ZXW-calculus}}, a Unifying Calculus}}.
\newblock In: {\slshape \bibinfo{booktitle}{2023 38th {{Annual ACM}}/{{IEEE
  Symposium}} on {{Logic}} in {{Computer Science}} ({{LICS}})}},
  \bibinfo{address}{Boston, MA, USA}, pp. \bibinfo{pages}{1--14},
  \doi{10.1109/LICS56636.2023.10175672}.
\newblock \eprint{2302.12135}.

\bibitemdeclare{article}{ranchinDepictingQuditQuantum2014}
\bibitem{ranchinDepictingQuditQuantum2014}
\bibinfo{author}{Andr{\'e} \surnamestart Ranchin\surnameend}
  (\bibinfo{year}{2014}): \emph{\bibinfo{title}{Depicting Qudit Quantum
  Mechanics and Mutually Unbiased Qudit Theories}}.
\newblock {\slshape \bibinfo{journal}{Electronic Proceedings in Theoretical
  Computer Science}} \bibinfo{volume}{172}, pp. \bibinfo{pages}{68--91},
  \doi{10.4204/EPTCS.172.6}.

\bibitemdeclare{misc}{rodatzFloquetifyingStabiliser2024}
\bibitem{rodatzFloquetifyingStabiliser2024}
\bibinfo{author}{Benjamin \surnamestart Rodatz\surnameend},
  \bibinfo{author}{Boldizs{\'a}r \surnamestart Po{\'o}r\surnameend} \&
  \bibinfo{author}{Aleks \surnamestart Kissinger\surnameend}
  (\bibinfo{year}{2024}): \emph{\bibinfo{title}{Floquetifying Stabiliser Codes
  with Distance-Preserving Rewrites}}.
\newblock \eprint{2410.17240}.

\bibitemdeclare{inproceedings}{royQuditZHCalculusGeneralised2023}
\bibitem{royQuditZHCalculusGeneralised2023}
\bibinfo{author}{Patrick \surnamestart Roy\surnameend}, \bibinfo{author}{John
  \surnamestart {van de Wetering}\surnameend} \& \bibinfo{author}{Lia
  \surnamestart Yeh\surnameend} (\bibinfo{year}{2023}):
  \emph{\bibinfo{title}{The {{Qudit ZH-Calculus}}: {{Generalised
  Toffoli}}+{{Hadamard}} and {{Universality}}}}.
\newblock In \bibinfo{editor}{Shane \surnamestart Mansfield\surnameend},
  \bibinfo{editor}{Benoit \surnamestart Val{\^i}ron\surnameend} \&
  \bibinfo{editor}{Vladimir \surnamestart Zamdzhiev\surnameend}, editors:
  {\slshape \bibinfo{booktitle}{Proceedings of the Twentieth International
  Conference on Quantum Physics and Logic, Paris, France, 17-21st July 2023}},
  {\slshape \bibinfo{series}{Electronic Proceedings in Theoretical Computer
  Science}} \bibinfo{volume}{384}, \bibinfo{publisher}{Open Publishing
  Association}, pp. \bibinfo{pages}{142--170}, \doi{10.4204/EPTCS.384.9}.

\bibitemdeclare{inproceedings}{schroderdewittZXcalculusIncompleteQuantum2014}
\bibitem{schroderdewittZXcalculusIncompleteQuantum2014}
\bibinfo{author}{Christian \surnamestart {Schr{\"o}der de Witt}\surnameend} \&
  \bibinfo{author}{Vladimir \surnamestart Zamdzhiev\surnameend}
  (\bibinfo{year}{2014}): \emph{\bibinfo{title}{The {{ZX-calculus}} Is
  Incomplete for Quantum Mechanics}}.
\newblock In \bibinfo{editor}{Bob \surnamestart Coecke\surnameend},
  \bibinfo{editor}{Ichiro \surnamestart Hasuo\surnameend} \&
  \bibinfo{editor}{Prakash \surnamestart Panangaden\surnameend}, editors:
  {\slshape \bibinfo{booktitle}{Proceedings of the 11th Workshop on {{Quantum
  Physics}} and {{Logic}}, {{Kyoto}}, {{Japan}}}}, {\slshape
  \bibinfo{series}{Electronic Proceedings in Theoretical Computer Science}}
  \bibinfo{volume}{172}, \bibinfo{publisher}{Open Publishing Association}, pp.
  \bibinfo{pages}{285--292}, \doi{10.4204/EPTCS.172.20}.

\bibitemdeclare{misc}{shaikhCategoricalSemanticsFeynman2022}
\bibitem{shaikhCategoricalSemanticsFeynman2022}
\bibinfo{author}{Razin~A. \surnamestart Shaikh\surnameend} \&
  \bibinfo{author}{Stefano \surnamestart Gogioso\surnameend}
  (\bibinfo{year}{2022}): \emph{\bibinfo{title}{Categorical {{Semantics}} for
  {{Feynman Diagrams}}}}.
\newblock \eprint{2205.00466}.

\bibitemdeclare{inproceedings}{shaikhHowSum2023}
\bibitem{shaikhHowSum2023}
\bibinfo{author}{Razin~A. \surnamestart Shaikh\surnameend},
  \bibinfo{author}{Quanlong \surnamestart Wang\surnameend} \&
  \bibinfo{author}{Richie \surnamestart Yeung\surnameend}
  (\bibinfo{year}{2023}): \emph{\bibinfo{title}{How to Sum and Exponentiate
  Hamiltonians in {{ZXW}} Calculus}}.
\newblock In \bibinfo{editor}{Stefano \surnamestart Gogioso\surnameend} \&
  \bibinfo{editor}{Matty \surnamestart Hoban\surnameend}, editors: {\slshape
  \bibinfo{booktitle}{Proceedings 19th International Conference on Quantum
  Physics and Logic, Wolfson College, Oxford, {{UK}}, 27 June - 1 July 2022}},
  {\slshape \bibinfo{series}{Electronic Proceedings in Theoretical Computer
  Science}} \bibinfo{volume}{394}, \bibinfo{publisher}{Open Publishing
  Association}, pp. \bibinfo{pages}{236--261}, \doi{10.4204/EPTCS.394.14}.

\bibitemdeclare{article}{sivarajahTKetRetargetableCompiler2020}
\bibitem{sivarajahTKetRetargetableCompiler2020}
\bibinfo{author}{Seyon \surnamestart Sivarajah\surnameend},
  \bibinfo{author}{Silas \surnamestart Dilkes\surnameend},
  \bibinfo{author}{Alexander \surnamestart Cowtan\surnameend},
  \bibinfo{author}{Will \surnamestart Simmons\surnameend},
  \bibinfo{author}{Alec \surnamestart Edgington\surnameend} \&
  \bibinfo{author}{Ross \surnamestart Duncan\surnameend}
  (\bibinfo{year}{2020}): \emph{\bibinfo{title}{T{\textbar}ket{$\rangle$}: A
  Retargetable Compiler for {{NISQ}} Devices}}.
\newblock {\slshape \bibinfo{journal}{Quantum Science and Technology}}
  \bibinfo{volume}{6}(\bibinfo{number}{1}), p. \bibinfo{pages}{014003},
  \doi{10.1088/2058-9565/ab8e92}.
\newblock \eprint{2003.10611}.

\bibitemdeclare{inproceedings}{townsend-teagueFloquetifyingColourCode2023}
\bibitem{townsend-teagueFloquetifyingColourCode2023}
\bibinfo{author}{Alex \surnamestart {Townsend-Teague}\surnameend},
  \bibinfo{author}{Julio \surnamestart {Magdalena de la Fuente}\surnameend} \&
  \bibinfo{author}{Markus \surnamestart Kesselring\surnameend}
  (\bibinfo{year}{2023}): \emph{\bibinfo{title}{Floquetifying the Colour
  Code}}.
\newblock In \bibinfo{editor}{Shane \surnamestart Mansfield\surnameend},
  \bibinfo{editor}{Benoit \surnamestart Val{\^i}ron\surnameend} \&
  \bibinfo{editor}{Vladimir \surnamestart Zamdzhiev\surnameend}, editors:
  {\slshape \bibinfo{booktitle}{Proceedings of the Twentieth International
  Conference on Quantum Physics and Logic}}, {\slshape
  \bibinfo{series}{Electronic Proceedings in Theoretical Computer Science}}
  \bibinfo{volume}{384}, \bibinfo{publisher}{Open Publishing Association}, pp.
  \bibinfo{pages}{265--303}, \doi{10.4204/EPTCS.384.14}.

\bibitemdeclare{misc}{townsend-teagueSimplificationStrategiesQutrit2022}
\bibitem{townsend-teagueSimplificationStrategiesQutrit2022}
\bibinfo{author}{Alex \surnamestart {Townsend-Teague}\surnameend} \&
  \bibinfo{author}{Konstantinos \surnamestart Meichanetzidis\surnameend}
  (\bibinfo{year}{2022}): \emph{\bibinfo{title}{Simplification {{Strategies}}
  for the {{Qutrit ZX-Calculus}}}}.
\newblock \eprint{2103.06914}.

\bibitemdeclare{misc}{vandeweteringZXcalculusWorkingQuantum2020}
\bibitem{vandeweteringZXcalculusWorkingQuantum2020}
\bibinfo{author}{John \surnamestart {van de Wetering}\surnameend}
  (\bibinfo{year}{2020}): \emph{\bibinfo{title}{{{ZX-calculus}} for the Working
  Quantum Computer Scientist}}.
\newblock \eprint{2012.13966}.

\bibitemdeclare{inproceedings}{vandeweteringBuildingQutritDiagonal2023}
\bibitem{vandeweteringBuildingQutritDiagonal2023}
\bibinfo{author}{John \surnamestart {van de Wetering}\surnameend} \&
  \bibinfo{author}{Lia \surnamestart Yeh\surnameend} (\bibinfo{year}{2023}):
  \emph{\bibinfo{title}{Building Qutrit Diagonal Gates from Phase Gadgets}}.
\newblock In \bibinfo{editor}{Stefano \surnamestart Gogioso\surnameend} \&
  \bibinfo{editor}{Matty \surnamestart Hoban\surnameend}, editors: {\slshape
  \bibinfo{booktitle}{Proceedings 19th International Conference on Quantum
  Physics and Logic, Wolfson College, Oxford, {{UK}}, 27 June - 1 July 2022}},
  {\slshape \bibinfo{series}{Electronic Proceedings in Theoretical Computer
  Science}} \bibinfo{volume}{394}, \bibinfo{publisher}{Open Publishing
  Association}, pp. \bibinfo{pages}{46--65}, \doi{10.4204/EPTCS.394.4}.

\bibitemdeclare{misc}{vandeweteringOptimalCompilationParametrised2024}
\bibitem{vandeweteringOptimalCompilationParametrised2024}
\bibinfo{author}{John \surnamestart {van de Wetering}\surnameend},
  \bibinfo{author}{Richie \surnamestart Yeung\surnameend},
  \bibinfo{author}{Tuomas \surnamestart Laakkonen\surnameend} \&
  \bibinfo{author}{Aleks \surnamestart Kissinger\surnameend}
  (\bibinfo{year}{2024}): \emph{\bibinfo{title}{Optimal Compilation of
  Parametrised Quantum Circuits}}.
\newblock \eprint{2401.12877}.

\bibitemdeclare{inproceedings}{vilmartNearMinimalAxiomatisationZXCalculus2019}
\bibitem{vilmartNearMinimalAxiomatisationZXCalculus2019}
\bibinfo{author}{Renaud \surnamestart Vilmart\surnameend}
  (\bibinfo{year}{2019}): \emph{\bibinfo{title}{A {{Near-Minimal
  Axiomatisation}} of {{ZX-Calculus}} for {{Pure Qubit Quantum Mechanics}}}}.
\newblock In: {\slshape \bibinfo{booktitle}{2019 34th {{Annual ACM}}/{{IEEE
  Symposium}} on {{Logic}} in {{Computer Science}} ({{LICS}})}}, pp.
  \bibinfo{pages}{1--10}, \doi{10.1109/LICS.2019.8785765}.
\newblock \eprint{1812.09114}.

\bibitemdeclare{inproceedings}{wangQutritZXcalculusComplete2018}
\bibitem{wangQutritZXcalculusComplete2018}
\bibinfo{author}{Quanlong \surnamestart Wang\surnameend}
  (\bibinfo{year}{2018}): \emph{\bibinfo{title}{Qutrit {{ZX-calculus}} Is
  Complete for Stabilizer Quantum Mechanics}}.
\newblock In \bibinfo{editor}{Bob \surnamestart Coecke\surnameend} \&
  \bibinfo{editor}{Aleks \surnamestart Kissinger\surnameend}, editors:
  {\slshape \bibinfo{booktitle}{Proceedings 14th International Conference on
  Quantum Physics and Logic, Nijmegen, the Netherlands}}, {\slshape
  \bibinfo{series}{Electronic Proceedings in Theoretical Computer Science}}
  \bibinfo{volume}{266}, \bibinfo{publisher}{Open Publishing Association}, pp.
  \bibinfo{pages}{58--70}, \doi{10.4204/EPTCS.266.3}.

\bibitemdeclare{misc}{wangQufiniteZXcalculusUnified2022}
\bibitem{wangQufiniteZXcalculusUnified2022}
\bibinfo{author}{Quanlong \surnamestart Wang\surnameend}
  (\bibinfo{year}{2022}): \emph{\bibinfo{title}{Qufinite {{ZX-calculus}}: A
  Unified Framework of Qudit {{ZX-calculi}}}}.
\newblock \eprint{2104.06429}.

\bibitemdeclare{misc}{wangCompletenessQufiniteZXW2024}
\bibitem{wangCompletenessQufiniteZXW2024}
\bibinfo{author}{Quanlong \surnamestart Wang\surnameend},
  \bibinfo{author}{Boldizs{\'a}r \surnamestart Po{\'o}r\surnameend} \&
  \bibinfo{author}{Razin~A. \surnamestart Shaikh\surnameend}
  (\bibinfo{year}{2024}): \emph{\bibinfo{title}{Completeness of Qufinite
  {{ZXW}} Calculus, a Graphical Language for Finite-Dimensional Quantum
  Theory}}.
\newblock \eprint{2309.13014}.

\bibitemdeclare{article}{wangDifferentiatingIntegrating2024}
\bibitem{wangDifferentiatingIntegrating2024}
\bibinfo{author}{Quanlong \surnamestart Wang\surnameend},
  \bibinfo{author}{Richie \surnamestart Yeung\surnameend} \&
  \bibinfo{author}{Mark \surnamestart Koch\surnameend} (\bibinfo{year}{2024}):
  \emph{\bibinfo{title}{Differentiating and {{Integrating ZX Diagrams}} with
  {{Applications}} to {{Quantum Machine Learning}}}}.
\newblock {\slshape \bibinfo{journal}{Quantum}} \bibinfo{volume}{8}, p.
  \bibinfo{pages}{1491}, \doi{10.22331/q-2024-10-04-1491}.

\end{thebibliography}

\newpage
\appendix

\allowdisplaybreaks
\setlength{\jot}{20pt}

\section{Lemmas}\label{sec:lemmas}

\subsection{General Derivations}

\begin{lemma}
  \label{scalargeneralmult}
  \tikzfig{figures/lemmas/scalargeneralmult}
\end{lemma}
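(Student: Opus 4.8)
The plan is to establish the identity in two passes: first a semantic verification, then a syntactic derivation. Since the statement equates two diagrams built from Z- and X-spiders (a scalar identity involving the general multiplier, as the label suggests), I would begin by computing $\interp{\cdot}$ of each side using the generator interpretations of \cref{subsec:interpretation}. This confirms the equality holds in $\FHilb$ and, crucially, pins down exactly which elementwise product of phase vectors the right-hand side must carry and which scalar prefactors are expected. The semantic check is inexpensive and serves as a target that the diagrammatic rewriting must reproduce coordinate-by-coordinate.

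For the syntactic proof I would unfold the multiplier via its definition \eqref{rule:Mu}, rewriting it as a Z-spider joined to an X-spider by the appropriate number of wires, and similarly expand any Z-box states to their defining form. The engine of the derivation is repeated application of the Z-fusion rule \eqref{rule:S1}: fusing adjacent Z-boxes multiplies their phase vectors elementwise while truncating at the minimal dimension $M$, which is precisely the operation needed to collect the scalar parameters into a single vector. Where an X-spider separates two Z-boxes, I would push the scalar through using the phase-commute rule \eqref{rule:PC} or sum it in with the phase-addition rule \eqref{rule:PA}, and I would use the copy rule \eqref{rule:K0} to duplicate or delete basis states as the leg count changes, reconciling the emerging scalar against the semantic computation from the first pass.

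The main obstacle I anticipate is the bookkeeping of the vector parameters under the min-dimension cutoff inherent to the mixed-dimensional setting. Because \eqref{rule:S1} truncates the fused vector $\overrightarrow{rs'}$ at $M-1$, I must track carefully which coordinates survive each fusion and check that the surviving elementwise product matches the scalar claimed on the right-hand side; a secondary subtlety is ensuring that the scalar factors introduced by \eqref{rule:Mu} and by any cap/cup manipulations are neither lost nor double-counted. Once the vector arithmetic is shown to agree entry-by-entry with the semantic target, the diagrammatic equality follows, and the lemma is then available as a building block for the scalar manipulations required elsewhere in \cref{sec:lemmas}.
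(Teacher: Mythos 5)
There is a genuine gap: your submission is a plan for a proof, not a proof. The lemma's entire content is a specific diagrammatic equation in $\ZXf$, so what is required is a concrete chain of rewrites, each justified by a named axiom or previously established lemma. You identify a plausible toolbox --- unfolding \eqref{rule:Mu}, fusing with \eqref{rule:S1}, commuting scalars with \eqref{rule:PC} or \eqref{rule:PA}, copying with \eqref{rule:K0} --- and you correctly anticipate that the delicate point is tracking the truncation of the phase vector at the minimal dimension under \eqref{rule:S1}. But you then defer exactly that work (\enquote{once the vector arithmetic is shown to agree entry-by-entry \ldots the equality follows}). That bookkeeping \emph{is} the proof; the paper's argument is precisely such a rewrite chain, carried out step by step. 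Listing candidate rules and naming the obstacle without resolving it leaves the derivation unestablished.

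A secondary issue is the role you assign to the first \enquote{pass}. Computing $\interp{\cdot}$ of both sides only verifies that the equation holds in $\FHilb$; it is a soundness check on the lemma statement, not a step toward derivability in $\ZXf$, and it cannot substitute for any part of the syntactic argument. Using it as a \enquote{target} to reconcile against is reasonable scratch-work, but in the context of a completeness-oriented development it contributes nothing that can be cited in the derivation itself. To repair the attempt, commit to an explicit sequence: expand the definitions on the left-hand side, apply \eqref{rule:S1} (and whichever of the other rules are genuinely needed --- for a scalar-multiplication identity of this kind, fusion is likely to carry most of the weight), and exhibit the resulting phase vector coordinate-by-coordinate so that the right-hand side is reached by equalities internal to the calculus.
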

\begin{proof}
  \[
    \tikzfig{figures/lemmas/scalargeneralmultprf}
  \]
\end{proof}


\begin{lemma}
  \label{scalarinverseditlm}
  Suppose $x \neq 0$.
  Then
  \tikzfig{figures/lemmas/scalarinversedit2}
\end{lemma}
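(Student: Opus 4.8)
The plan is to establish this as a purely diagrammatic identity between two scalar diagrams, rewriting the left-hand side into the right-hand side through local applications of the $\ZXf$ axioms; since there are no free wires, every rewrite is a manipulation of spiders and their vector parameters. I expect to build directly on \cref{scalargeneralmult}, which already packages the multiplication of scalar Z-boxes, so that the present lemma reduces to recognising the combination $x \cdot x^{-1}$ and collapsing it.

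First I would unfold any composite generators in the statement --- the Hadamard box \eqref{rule:HD}, the multiplier \eqref{rule:Mu}, or the empty-spider shorthands --- into their defining Z- and X-spider forms, so that the diagram consists only of primitive generators whose behaviour is governed directly by the axioms. I would then fuse the Z-spiders with \eqref{rule:S1}: its effect is to multiply the parameter vectors elementwise, so a factor carrying $x$ and a factor carrying $x^{-1}$ combine into a parameter whose relevant entry is $x \cdot x^{-1} = 1$. This is exactly the point at which the hypothesis $x \neq 0$ enters, since $x^{-1}$ is only defined, and the cancellation only valid, when $x \neq 0$. Having produced a parameter that is trivial in the relevant slot, I would use the phase rules \eqref{rule:PA} and \eqref{rule:PC} to push it into the empty-spider form and finally discard it with \eqref{rule:Ept}, leaving the claimed scalar.

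The step I expect to be the main obstacle is the bookkeeping of the vector-parameter arithmetic rather than the algebraic cancellation itself. Because fusion in \eqref{rule:S1} truncates the combined parameter at the minimal dimension $M$ and multiplies entries slot-by-slot, I must ensure that the entry carrying $x$ and the entry carrying $x^{-1}$ land at the same index after accounting for any differing leg dimensions, and that no spurious entries survive the cutoff. Verifying that the indices align and that the truncation behaves as intended --- so that the cancellation genuinely collapses the diagram to the desired scalar --- is where the care is needed.
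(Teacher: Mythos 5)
Your plan matches the paper's proof in essence: the derivation combines the $x$- and $x^{-1}$-labelled scalar gadgets into a single gadget carrying $x \cdot x^{-1} = 1$ and then discards the resulting trivial scalar via \eqref{rule:Ept}, with the hypothesis $x \neq 0$ entering exactly where you say. The only caveat is that the two scalar gadgets share no wire, so the multiplication step is delivered by \cref{scalargeneralmult} (your first instinct) rather than a bare application of \eqref{rule:S1}, and \eqref{rule:PA} and \eqref{rule:PC} play no role.
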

\begin{proof}
  \[
    \tikzfig{figures/lemmas/scalarinverseditprf}
  \]
\end{proof}

\begin{lemma}
  \label{k1zbox1}
  \tikzfig{figures/lemmas/k1-zbox1}
\end{lemma}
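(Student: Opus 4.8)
The claim is a purely diagrammatic identity in $\ZXf$ relating a mixed-dimensional Z-box to a composite involving an X-generator, so the plan is to prove it by exhibiting a finite rewrite sequence using only the axioms of \cref{subsec:axiomatisation}. From the label I expect this lemma to recover a \enquote{$K_1$}-style fact for the Z-box --- namely how an X-unit commutes past, or is copied by, a single-output Z-box --- realised as a specialisation of the rule \eqref{rule:K2}, which the authors deliberately retain in place of the original $(K1)$. Since the whole point of keeping \eqref{rule:K2} is to make $(K1)$ derivable, it is natural that \eqref{rule:K2} should be the workhorse of this derivation.

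The plan is to begin from the side carrying the X-generator and manoeuvre it adjacent to the Z-box so that \eqref{rule:K2} applies directly. Concretely, I would first use \eqref{rule:S1} to fuse or split the Z-boxes so that exactly one leg carries the relevant phase vector $\overrightarrow r$; then apply \eqref{rule:K2} to push the X-unit through, which rewrites the parameter to the reversed-and-normalised vector $\hat k(\overleftarrow r)$; and finally clean up using the identity and colour rules \eqref{rule:S2} and \eqref{rule:P1}, together with the scalar manipulations of \cref{scalargeneralmult} and \cref{scalarinverseditlm}, to bring the diagram into the target form. The earlier scalar lemmas are almost certainly needed here because $\hat k(\overleftarrow r)$ divides each component by $r_{d-j}$, which is exactly the normalisation forcing the new $0$-th coefficient back to $1$, at the cost of a residual global scalar that must be reabsorbed; \cref{scalarinverseditlm}, with its hypothesis $x \neq 0$, is the natural tool for cancelling such an inverse scalar.

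The main obstacle I anticipate is the bookkeeping of the vector parameters rather than the topology of the rewriting. One must verify that the elementwise vector produced by \eqref{rule:K2}, after the index reversal $\overleftarrow r$ and the componentwise division defining $\hat k(\overleftarrow r)=\left(\frac{r_{1-j}}{r_{d-j}}, \dotsc, \frac{r_{d-1-j}}{r_{d-j}}\right)$, agrees componentwise with the parameter on the right-hand side, while respecting the minimal-dimension cut-offs implicit in the mixed-dimensional Z-box, where out-of-range coefficients are set to zero. Treating the boundary index $j=0$ and the convention $r_0 = 1$ as special cases, and confirming that the residual scalar from the division is exactly the one cancelled by \cref{scalarinverseditlm}, is where the delicate part of the argument lies; once these parameter identities are checked, the remaining steps reduce to routine applications of fusion and colour-change.
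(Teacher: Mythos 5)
There is a genuine gap here, and it starts before the proof does: you never pin down what the lemma actually asserts, and your reconstruction of it from the label is not secure. The paper's notation section defines $K_j=\left(j\frac{2\pi}{d}, \dotsc, (d-1)j\frac{2\pi}{d}\right)$ as a concrete phase vector, and the neighbouring lemmas (\cref{multiplier-kj}, \cref{kjga1}) visibly use $K_j$ in that sense in their names; so \enquote{k1zbox1} is at least as plausibly a statement about the Z-box carrying the phase $K_1$ (i.e.\ the generalized Pauli-$Z$ phase $\omega^N$) as it is a recovery of the ZXW axiom $(K1)$ from \eqref{rule:K2}. A proof whose entire architecture is premised on a guessed statement cannot be assessed as correct, and if the $K_1$-phase reading is the right one, the \eqref{rule:K2}-centred strategy is simply aimed at the wrong target.

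Even granting your reading, what you have written is a plan rather than a derivation. In a diagrammatic calculus the proof \emph{is} the explicit rewrite sequence together with the verification that the vector parameters match at each step; you explicitly defer exactly that part (\enquote{the bookkeeping of the vector parameters\ldots is where the delicate part of the argument lies}), including the $j=0$ boundary case, the $r_0=1$ convention, and the identification of the residual scalar. There is also a soundness issue lurking in the route you choose: \eqref{rule:K2} produces $\hat{k}(\overleftarrow{r})$ by dividing componentwise by $r_{d-j}$, and \cref{scalarinverseditlm} carries the hypothesis $x\neq 0$, so a derivation channelled through these two steps only applies when the relevant coefficient is nonzero. The mixed-dimensional Z-box is explicitly allowed to carry vanishing coefficients (the interpretation sets out-of-range coefficients to zero), so unless the lemma itself carries a nonvanishing hypothesis, your argument would at best establish the identity on a restricted parameter range. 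The paper's own proof is a short, self-contained chain of rewrites; to match it you would need to commit to the statement, write out the sequence, and check the parameter identities rather than flag them as anticipated obstacles.
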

\begin{proof}
  \[
    \tikzfig{figures/lemmas/k1-zbox1-pf}
  \]
\end{proof}

\begin{lemma}
  \label{hilm}\cite{wangQufiniteZXcalculusUnified2022}
  \tikzfig{figures/lemmas/h1scalar}
\end{lemma}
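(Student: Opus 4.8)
The plan is to reduce the identity to a purely diagrammatic computation of scalars. Since the statement is built from yellow Hadamard boxes, I would first eliminate them: each $H$ box is replaced by its definition \eqref{rule:HD} (and each $H^\dagger$ box by \eqref{rule:HDagger}), rewriting it as the composite of a Z-spider carrying a phase of the form $K_j$ and an X-spider, with $\omega = e^{i 2\pi/a}$. After this step the diagram contains only green and red generators, so its value can be obtained by the spider calculus alone.

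Next I would bring the resulting spiders into contact. The colour rule \eqref{rule:HZ} lets me convert X-spiders to Z-spiders (or vice versa) at the cost of the normalization factor $u_{m,n}=d^{(m+n-2)/2}$, after which spider fusion \eqref{rule:S1} merges same-coloured spiders into one. The phase data then combines through phase addition \eqref{rule:PA}, while the copy rule \eqref{rule:K0} determines which basis states survive the contraction. The outcome is a single spider (or a bare scalar) together with an accumulated product of $u_{m,n}$ factors and of the $\tfrac{1}{\sqrt{N!}}$-type parameters carried by the Z-boxes.

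The final step is to collapse this accumulated scalar to the claimed value. I would use \cref{scalargeneralmult} to contract the product of scalar Z-boxes into a single box and \cref{scalarinverseditlm} to cancel reciprocal factors, the hypothesis $x\neq 0$ there justifying each cancellation; \cref{k1zbox1} may also be needed to normalize the remaining Z-box. The main difficulty I expect is not the structural rewriting but the scalar bookkeeping: the powers of $d$ introduced by repeated use of \eqref{rule:HZ} and the $\sqrt{N!}$ factors must be tracked exactly, and it is easy to slip by a factor of $\sqrt{d}$. This identity is precisely the (H1) rule of the qufinite ZX-calculus of \cite{wangQufiniteZXcalculusUnified2022}, which the generalized rule \eqref{rule:D1} renders redundant, so the target scalar is known ahead of time and the task reduces to verifying that these accumulated factors reproduce it.
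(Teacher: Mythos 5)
Your overall direction --- unfold the yellow boxes via \eqref{rule:HD} and \eqref{rule:HDagger} into Z-- and X--spiders, then collapse everything with colour change \eqref{rule:HZ}, fusion \eqref{rule:S1}, and the scalar lemmas (\cref{scalargeneralmult}, \cref{scalarinverseditlm}, \cref{k1zbox1}) --- is the right one, and it is consistent with what the paper does: its proof is a single short chain of diagram rewrites of exactly this flavour. So there is no wrong idea here.

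The genuine gap is that you never actually carry out the computation. You correctly identify the scalar bookkeeping as ``the main difficulty,'' and then defer it entirely: you do not commit to a concrete rewrite sequence, you do not track the powers of $d^{\frac{m+n-2}{2}}$ produced by \eqref{rule:HZ} against the $\frac{1}{\sqrt{N!}}$--type parameters, and you conclude by saying the target scalar ``is known ahead of time'' from \cite{wangQufiniteZXcalculusUnified2022}. That last move is not available: knowing the semantic value of the (H1) scalar does not establish that it is \emph{derivable} from the axioms of $\ZXf$ as presented here, which is the entire content of the lemma. The verification you postpone is the proof. Two smaller points: a scalar identity of this kind should not need \eqref{rule:PA} or \eqref{rule:K0} at all, so invoking them signals that you have not pinned down the actual diagram; and mind the dependency order --- this lemma sits in the appendix before \eqref{rule:HX} and \eqref{rule:B3} are established, so your derivation may only lean on \eqref{rule:HZ}, \eqref{rule:S1}, and the scalar lemmas that precede it.
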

\begin{proof}
  \[
    \tikzfig{figures/lemmas/h1scalarprf}
  \]
\end{proof}

\begin{lemma}
  \label{hheqhdhd}
  \tikzfig{figures/lemmas/hheqhdhd}
\end{lemma}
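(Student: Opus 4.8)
The plan is to reduce both sides of the asserted equality to one and the same diagram, namely the charge-conjugation gate sending $\ket{k} \mapsto \ket{a \minu k \bmod a}$, which is exactly the multiplier labelled $a \minu 1$ of \eqref{rule:Mu}. Semantically this lemma is the statement that the qudit Fourier transform squared is self-adjoint, so that $H^2 = (H^\dagger)^2$; the task is to witness this purely diagrammatically. Accordingly, I would treat the two sides symmetrically, showing each collapses to the conjugation gate, rather than trying to transform one side directly into the other.

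First I would unfold the two Hadamard boxes on the left using their definition \eqref{rule:HD}, and the two $H^\dagger$ boxes on the right using \eqref{rule:HDagger}. The two definitions differ only by the sign of the Fourier phase, so each side becomes a green--red alternation carrying conjugate phase data together with the normalising scalars. On each side the two inner spiders that meet in the middle share a colour, so I would fuse them with the spider-fusion rule \eqref{rule:S1}, collapsing the two phase vectors into one. I would then push the colours past one another using the colour-change interaction \eqref{rule:HZ} — with a bialgebra step \eqref{rule:B2} where the legs branch — and invoke \cref{k1zbox1} to rewrite the resulting phase Z-box. Applying the phase-addition rule \eqref{rule:PA} (and \eqref{rule:K2} to bring the commuted phases into the same shape) lets the summed phases on each side be identified, and both sides then present as the single multiplier labelled $a \minu 1$.

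The main obstacle is scalar bookkeeping rather than structure. Each Hadamard box carries a $\tfrac{1}{\sqrt a}$ normalisation, so each composite accrues a factor of $\tfrac 1a$, and I must check that the scalars generated on the two sides genuinely coincide after the loops created by fusion are contracted. Here I would lean on \cref{hilm}, the Hadamard-scalar lemma, to resolve those loops into explicit scalars, and on \cref{scalargeneralmult} to multiply them through cleanly. A secondary, purely arithmetic check is that the conjugate phase vectors of \eqref{rule:HD} and \eqref{rule:HDagger} add to equal vectors modulo $a$, which is precisely the content guaranteed by the side condition of \eqref{rule:PA}; once that is confirmed, the two reductions land on the identical diagram and the lemma follows.
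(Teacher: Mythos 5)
Your overall strategy is semantically sound: $H^2$ and $(H^\dagger)^2$ are both the antipode $\ket{k}\mapsto\ket{a\minu k \bmod a}$, i.e.\ the multiplier labelled $a\minu 1$ of \eqref{rule:Mu}, and reducing each side to that common diagram is a legitimate way to prove the lemma. Be aware, though, that the paper gives no fresh derivation at all: its proof is the single line ``Same as \zxw{Lemma 12}'', importing the argument from the qudit ZXW completeness paper. So you are attempting more than the paper records, and the question is whether your sketch would actually close.

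As written, it would not. The entire mathematical content of the lemma is the character-orthogonality step $\sum_{k}\omega^{k(m+\ell)} = a\,\delta_{m+\ell\equiv 0 \Mod a}$, which is what turns the composite obtained by unfolding \eqref{rule:HD} and \eqref{rule:HDagger} into a permutation, and your sketch never identifies the rule that performs this contraction. The rules you name are not obviously the right tools for it: \eqref{rule:PA} convolves the phase vectors of two Z-box \emph{states} fed into an X-spider, whereas here you must contract an internal green--red alternation carrying $K$-type phases; \eqref{rule:K2} commutes a basis state through a Z-box; and \cref{k1zbox1} and \cref{hilm} are state and scalar lemmas. The delta-producing mechanism in this calculus is the interplay of the bialgebra \eqref{rule:B2} with the Hopf law \eqref{rule:HP} (or the copy rule \eqref{rule:K0} applied to the $K_j$ phase states), and your proposal never invokes \eqref{rule:HP} at all. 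Relatedly, the claim that ``the two inner spiders that meet in the middle share a colour, so I would fuse them'' presupposes a particular shape for the unfoldings of \eqref{rule:HD} and \eqref{rule:HDagger} that you have not verified; if the definitions alternate colours, the spiders meeting at the middle edge do not share one. Your scalar bookkeeping (the two factors of $\tfrac{1}{\sqrt a}$ cancelling against the $a$ from the delta) is correct, but until the delta-producing contraction is exhibited explicitly, the reduction of either side to the $(a\minu 1)$-multiplier is asserted rather than proved.
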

\begin{proof}
  Same as \zxw{Lemma 12}.
\end{proof}

\begin{lemma}
  \label{dboxsquarelm}\cite{wangQufiniteZXcalculusUnified2022}
  \tikzfig{figures/lemmas/dboxsquare}
\end{lemma}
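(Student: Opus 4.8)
The plan is to prove the identity by unfolding the $d$-box (dimension splitter) into the elementary Z- and X-spiders and Hadamard boxes out of which it is built, and then collapsing the resulting composite with the structural axioms. Since the statement concerns the \emph{square} of the dimension box, I would first replace each $d$-box occurrence by its defining diagram from \eqref{rule:DD}, together with the Hadamard box \eqref{rule:HD}, so that both sides of the claimed equation are expressed purely in terms of the generators of $\ZXf$. This reduces the lemma to a rewrite between two concrete spider diagrams.

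First I would fuse the adjacent Z-spiders exposed by the unfolding using \eqref{rule:S1}, taking care that each phase vector is truncated at the minimal dimension along the fused legs. Wherever a Z-spider meets an X-spider through several wires, I would apply the bialgebra rule \eqref{rule:B2} and remove the redundant loops via the Hopf law \eqref{rule:HP}. The Hadamard boxes generated in the process can be paired off and simplified with the identities already established in \cref{hheqhdhd} and \cref{hilm}: two boxes of opposite orientation cancel by the definition \eqref{rule:HDagger}, while a genuine Hadamard square contributes only the scalar that the colour rule \eqref{rule:HZ} already tracks through its factor $u_{m,n}$. Iterating these moves should bring the square into the single $d$-box (or identity, up to scalar) claimed on the right-hand side.

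Because the lemma is quoted from~\cite{wangQufiniteZXcalculusUnified2022}, an alternative and arguably cleaner route is to transport the original derivation, verifying only that each rewrite it invokes is one of our axioms or has been re-derived in the preceding lemmas of this section; since our mixed-dimensional Z-spider specialises to the qudit Z-spider when all legs share a dimension, and our $d$-box is defined exactly as in \eqref{rule:DD}, this correspondence should be immediate. The hardest part will be the dimension bookkeeping: unlike the qudit case, the fusion rule \eqref{rule:S1} truncates phase vectors at the minimal leg dimension, so at every fusion step I must check that no basis state is silently dropped by a cutoff and that the surviving coefficients agree on both sides. Reconciling the scalar prefactors — the powers of $d$ hidden in $u_{m,n}$ and the $\sqrt{N!}$ normalisations carried by the boxes — is where I expect the calculation to be most delicate.
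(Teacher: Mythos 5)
The paper's entire proof of this lemma is the single line \enquote{Same as \zxw{Lemma 16}} --- it transports the derivation of the corresponding lemma of the ZXW-calculus verbatim, checking nothing further. Your second, \enquote{alternative} route is therefore exactly what the paper does, and it is the right instinct here: the lemma is a purely single-dimension statement (every leg carries the same dimension $d$), so the mixed-dimensional fusion rule \eqref{rule:S1} specialises to the ordinary qudit fusion rule and none of the truncation subtleties you flag in your final paragraph ever arise. The \enquote{dimension bookkeeping} you expect to be the hardest part is precisely what the uniform-dimension setting renders trivial.

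Your primary route, however, starts from a misidentification. The box in this lemma is not the dimension splitter of \eqref{rule:DD}, which is a map of type $\lst{ab}\to\lst{a,b}$ between wires of different dimensions; it is an endomorphism of a single $d$-dimensional wire, namely the dualiser/antipode built from Hadamard boxes --- note that the immediately preceding \cref{hheqhdhd} relates $H\circ H$ to $H^\dagger\circ H^\dagger$, and \cref{xdualiserlm} later continues the same dualiser thread, all imported from the same source. Unfolding via \eqref{rule:DD} would therefore have you rewriting the wrong diagram from the outset, and the bialgebra/Hopf manipulations that follow would not be the ones the statement requires. If you drop that first route and carry out the second --- verifying that each rewrite invoked in \cite[Lemma 16]{poorCompletenessArbitraryFinite2023} is either an axiom of $\ZXf$ or one of the lemmas already re-derived above, such as \cref{hilm}, \cref{hheqhdhd}, and the Hadamard identities \eqref{rule:HZ}, \eqref{rule:HD}, \eqref{rule:HDagger} --- you recover the paper's proof.
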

\begin{proof}
  Same as \zxw{Lemma 16}.
\end{proof}

\begin{lemma}
  \label{s4lm}\cite{wangQufiniteZXcalculusUnified2022}
  \begin{gather}
    \tikzfig{figures/lemmas/redspider0pfusedit2}
    \tag{S4}\label{rule:S4}
  \end{gather}
\end{lemma}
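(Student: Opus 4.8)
The plan is to reduce the phase-free red (X-)spider fusion asserted by \textsc{S4} to the green (Z-)spider fusion \eqref{rule:S1}, which is already available as an axiom, by changing colour. Since in $\ZXf$ the X-spider is not a primitive generator but is built inductively from the X-multiplication and the phase-free state, I would first apply the colour rule \eqref{rule:HZ} to rewrite each of the two X-spiders occurring on the left-hand side as a Z-spider surrounded by a Hadamard box on every leg, collecting the associated $u_{m,n}=d^{\frac{m+n-2}{2}}$ scalar factors as I go. After this step both nodes are green, and the wire that connects them now carries the composite of the two Hadamard boxes that previously sat on the fused legs.

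The heart of the argument is then to clear that internal wire. I would use the Hadamard identities already established in the excerpt --- in particular \cref{hheqhdhd} together with the definitions \eqref{rule:HD} and \eqref{rule:HDagger}, and the scalar bookkeeping of \cref{hilm} and \cref{dboxsquarelm} --- to show that the back-to-back Hadamard boxes on the connecting wire collapse to a bare identity wire up to a dimension-dependent scalar. With the internal Hadamards removed, I would apply green spider fusion \eqref{rule:S1} to merge the two Z-spiders into a single Z-spider carrying the combined set of legs.

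Finally, I would invoke the colour rule \eqref{rule:HZ} once more, this time in the reverse direction, to turn the fused Z-spider back into a single X-spider with the expected inputs and outputs, and then verify that the $u_{m,n}$ factors accumulated from the two forward applications and the one backward application, combined with the scalar produced while cancelling the internal Hadamards, multiply out to exactly the coefficient demanded by the statement.

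The main obstacle I anticipate is precisely this internal-wire and scalar bookkeeping: for dimension $d>2$ the qudit Hadamard is not self-inverse, so the two boxes meeting on the fused wire must be tracked with care as to orientation ($H$ against $H$ versus $H$ against $H^\dagger$), and it is exactly here that the $d$-dependent factors enter and must be shown to cancel against the $u_{m,n}$ normalisations. An alternative and more self-contained route would bypass colour change entirely and prove fusion directly from the inductive definition of the X-spider, deriving associativity and commutativity of the X-multiplication from \eqref{rule:B2} and \eqref{rule:HP}; I expect this to be conceptually cleaner but heavier in diagrammatic case-work, so I would attempt the colour-change route first and fall back on the direct derivation only if the scalar tracking proves unwieldy.
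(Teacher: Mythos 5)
Your proposal matches the paper's derivation: the proof of \eqref{rule:S4} is a single diagrammatic chain that changes colour via \eqref{rule:HZ}, clears the Hadamard pair on the internal wire using the $H$-box and scalar lemmas established just before it, fuses the resulting green spiders with \eqref{rule:S1}, and changes colour back, with the $u_{m,n}$ and other $d$-dependent factors cancelling exactly as you describe (the paper then notes the remaining equalities of the lemma follow by the same rules). The one delicate point is the one you already flag --- for $d>2$ two like-oriented $H$-boxes compose to the antipode rather than a scalar multiple of the identity, so the internal wire only collapses once the $H$/$H^\dagger$ orientations are tracked correctly --- and that is precisely the bookkeeping the paper's chain carries out.
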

\begin{proof}
  \[
    \tikzfig{figures/lemmas/redspider0pfusedit2prf2}
  \]
  The other equalities can be proved similarly using the same rules.
\end{proof}

\begin{lemma}
  \label{redspiderforgrlm}\cite{wangQufiniteZXcalculusUnified2022}
  \begin{gather}
    \tikzfig{figures/lemmas/redspiderforgr2}
    \tag{HX}\label{rule:HX}
  \end{gather}
  where $v_{m,n}= d^{\frac{-m-n+2}{2}}$, $j \in \{ 0,1,\cdots, d-1\}$.
  We also call this equality~\eqref{rule:HX}.
\end{lemma}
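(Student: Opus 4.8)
The plan is to derive \eqref{rule:HX} from the colour-change axiom \eqref{rule:HZ}, exploiting the fact that the two scalar factors are reciprocals: $v_{m,n} = d^{(-m-n+2)/2} = u_{m,n}^{-1}$. The axiom \eqref{rule:HZ} already expresses a red spider as a green spider carrying a Hadamard box \eqref{rule:HD} on each of its $m+n$ legs, rescaled by $u_{m,n}$. Since \eqref{rule:HX} is the complementary statement, in which the inverse Hadamard box \eqref{rule:HDagger} appears and the green spider is recovered from the red one, I would obtain it by composing every leg of \eqref{rule:HZ} with an $H^\dagger$ box and then cancelling the resulting Hadamard pairs.

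Concretely, I would first attach an $H^\dagger$ box to each of the $m+n$ legs on both sides of \eqref{rule:HZ}. On the green-spider side this produces an $H$ followed by an $H^\dagger$ on every leg, which collapses to the identity by the definitions \eqref{rule:HD} and \eqref{rule:HDagger}, leaving a bare green spider. On the red-spider side the $H^\dagger$ boxes persist, so the equation now reads: a red spider dressed with $H^\dagger$ boxes equals $u_{m,n}$ times a green spider. Solving this for the dressed red spider introduces the inverse scalar, which I would manage with the scalar-manipulation results \cref{scalargeneralmult} and \cref{scalarinverseditlm} to confirm that $u_{m,n}^{-1}$ is exactly $v_{m,n}$.

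The delicate point is the behaviour of the phase index $j$ under Hadamard conjugation. Conjugating a spider by Hadamard boxes does not only flip its colour; it also acts on the phase, typically reversing it, in the manner recorded by the $\overleftarrow{r}$ and $K_j$ notation appearing in \eqref{rule:K2} and the surrounding definitions. To make sure the index $j$ reappears correctly on the right-hand side of \eqref{rule:HX}, I would invoke \cref{hheqhdhd}, which identifies the composite of two $H$ boxes with that of two $H^\dagger$ boxes and thereby pins down the basis-reversal induced by a Hadamard pair, so that the phase on the surviving X-spider can be matched with $j$ rather than with some reversed value.

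I expect the main obstacle to be precisely this phase-and-scalar bookkeeping: verifying that, once the Hadamard boxes have been cancelled, the phase label on the remaining spider is exactly $j$ (and not $-j \bmod d$ or a reversed vector), and that every stray coefficient generated along the way multiplies out to exactly $v_{m,n} = d^{(-m-n+2)/2}$. The colour change itself and the cancellation of $H H^\dagger$ pairs are routine applications of \eqref{rule:HZ} and \eqref{rule:HD}; the genuine care lies in matching the phase and the power of $d$ on the nose.
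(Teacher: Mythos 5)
Your proposal follows essentially the same route as the paper: \eqref{rule:HX} is obtained from \eqref{rule:HZ} by dressing the legs with $H^\dagger$ boxes, cancelling the resulting $HH^\dagger$ pairs via \eqref{rule:HDagger}, and moving the scalar $u_{m,n}$ across as $v_{m,n}=u_{m,n}^{-1}$ using the scalar lemmas. The phase-reversal worry you flag as the delicate point does not in fact arise, since you cancel $H$ against $H^\dagger$ rather than $H$ against $H$, so the label $j$ is never conjugated through an antipode.
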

\begin{proof}
  \begin{align*}
    &\tikzfig{figures/lemmas/redspiderforgrprf-1v2} \\
    &\tikzfig{figures/lemmas/redspiderforgrprf-2v2}
  \end{align*}
  where $u_{m,n}= d^{\frac{m+n-2}{2}}$, $v_{m,n}= d^{\frac{-m-n+2}{2}}$, $j \in \{ 0,1,\cdots, d-1\}$.
\end{proof}

\begin{lemma}
  \label{b3quditlm}
  \begin{gather}
    \tikzfig{figures/lemmas/b3qudit}
    \tag{B3}\label{rule:B3}
  \end{gather}
\end{lemma}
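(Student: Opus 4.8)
The plan is to derive \eqref{rule:B3} purely diagrammatically from the axioms and the lemmas already established above, since at this point in the development we cannot yet invoke completeness (this lemma is itself a building block toward it). The identity is a bialgebra-style interaction between the Z- and X-spiders, so the natural strategy is to reduce both sides to a common form using the bialgebra axiom \eqref{rule:B2} together with the colour-change and Hopf machinery proved earlier. If instead the identity turns out to be of copy type, the same scheme applies with the mixed-dimensional copy rule \eqref{rule:K0} playing the structural role of \eqref{rule:B2}.

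First I would unfold the composite spiders on each side into the elementary generators, using the inductive definition of the X-spider together with the multiplier and Z-box notation, so that the diagram is expressed only in terms of the basic Z-multiplication, X-multiplication, and unit generators. This exposes a bipartite Z/X pattern to which the bialgebra axiom directly applies.

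Next I would push the X-spiders through the Z-spiders by repeated application of \eqref{rule:B2}, turning each adjacent Z--X pair into a complete bipartite graph of connecting wires. The redundant parallel wires created this way are eliminated with the Hopf law \eqref{rule:HP}, using its generalisation to $a$ connections in dimension $a$, and the resulting same-colour spiders are merged using the fusion rule \eqref{rule:S1}. Where the two colours must be aligned, I would invoke the colour rule \eqref{rule:HZ} and the derived Hadamard--X identity \eqref{rule:HX} from \cref{redspiderforgrlm} to interchange red and green through Hadamard boxes, and \eqref{rule:S4} from \cref{s4lm} to absorb any phase-free red spiders produced in the process.

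The main obstacle I anticipate is the scalar and dimension bookkeeping. Each application of \eqref{rule:B2}, \eqref{rule:HZ}, and \eqref{rule:HX} introduces dimension-dependent prefactors --- the $u_{m,n}=d^{\frac{m+n-2}{2}}$ and $v_{m,n}=d^{\frac{-m-n+2}{2}}$ scalars --- and these must be shown to cancel exactly for the two sides to coincide. I would manage this with the scalar lemmas \cref{scalargeneralmult} and \cref{scalarinverseditlm}, which allow me to combine and invert the stray scalar boxes, together with \cref{hilm} to discharge any leftover Hadamard-generated scalar. Verifying that the net accumulated scalar is trivial, rather than merely checking that the underlying diagrams match, is the step I expect to demand the most care.
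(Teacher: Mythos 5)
Your toolkit is the right one --- the paper's derivation of \eqref{rule:B3} is a short rewrite chain whose only recorded bookkeeping is the pair of scalars $u_{2,1}=d^{\frac{1}{2}}$ and $v_{1,0}=d^{\frac{1}{2}}$, i.e.\ one application of \eqref{rule:HZ} to a three-legged spider and one application of \eqref{rule:HX} to a state, which is exactly the colour-change machinery you name --- but the proposal has a genuine gap: it never commits to what \eqref{rule:B3} actually asserts. An argument that is conditional on whether ``the identity turns out to be of copy type,'' with \eqref{rule:K0} substituted for \eqref{rule:B2} in that event, is a strategy template rather than a proof; until the left- and right-hand diagrams are fixed, the scalar cancellation that you yourself single out as the delicate step cannot even be set up, let alone verified. \eqref{rule:B3} is a copy-type interaction between a single X basis state and a small Z-spider, so the machinery you place at the centre of the argument --- iterated \eqref{rule:B2} applications producing complete bipartite graphs of wires, followed by \eqref{rule:HP} elimination of parallel edges and \eqref{rule:S1} re-fusion --- never enters. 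A derivation of that shape would leave scalar traces involving $u_{m,n}$ for large $m+n$, not just the two $d^{\frac{1}{2}}$ factors that actually occur, so the proposed route does not match the lemma being proved.

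What is missing is the verification itself: a concrete sequence of single rewrites in which \eqref{rule:HX} trades the X state for a Hadamard box on a Z state at cost $v_{1,0}$, \eqref{rule:HZ} trades the Hadamard-conjugated spider for the opposite colour at cost $u_{2,1}$, the resulting same-colour configuration is resolved by the copy or fusion axioms, and the accumulated $d^{\frac{1}{2}}\cdot d^{\frac{1}{2}}$ is discharged using \cref{hilm} and \cref{scalargeneralmult}. Writing out that chain for the actual statement of \eqref{rule:B3} is the proof; none of it is present in the proposal.
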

\begin{proof}
  \[
    \tikzfig{figures/lemmas/b3quditprf2}
  \]
  where $u_{2,1}= d^{\frac{1}{2}}$, $v_{1,0}= d^{\frac{1}{2}}$.
\end{proof}

\begin{lemma}
  \label{multiplier-kj}
  \[
    \tikzfig{figures/lemmas/multipliers/multiplier-kj}
  \]
\end{lemma}
\begin{proof}
  \[
    \tikzfig{figures/lemmas/multipliers/multiplier-kj-pf}
  \]
\end{proof}

\begin{lemma}
  \label{kjga1}\cite{wangQufiniteZXcalculusUnified2022}
  \tikzfig{figures/lemmas/kjga1}
  where $\overrightarrow{r}=(r_1,\cdots, r_{a-1})$.
\end{lemma}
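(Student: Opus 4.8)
The plan is to read \cref{kjga1} as a derived rewrite describing how the clock-phase $K_j$ interacts with a Z-box carrying the arbitrary complex vector $\overrightarrow{r}=(r_1,\dots,r_{a-1})$, and to reduce it to the primitive generators so that the $\ZXf$ axioms apply directly. Since the statement is inherited from the qufinite ZX-calculus~\cite{wangQufiniteZXcalculusUnified2022}, one clean route is to transcribe that derivation into the present axiom set and check that every rewrite step it uses is still available here; the obvious candidates are the fusion rule \eqref{rule:S1}, the phase-commutation rule \eqref{rule:K2}, and the copy rule \eqref{rule:K0}. A fully self-contained derivation would then proceed as outlined below.

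First I would unfold the composite boxes: expand the general Z-box with phase $\overrightarrow{r}$ through its defining composition of a Z-unit with embeddings, and expand the $K_j$ label as a green phase so that the underlying multiplier structure becomes visible. Second, I would invoke the preceding \cref{multiplier-kj}, which already records how a multiplier acts on a $K_j$ phase; this absorbs the clock action and replaces it by the appropriate index shift. Third, I would push the resulting phase through the X-spider using \eqref{rule:K2}: this is exactly where the cyclic relabelling $r_k \mapsto r_{k-j \bmod a}$, normalised by $r_{a-j}$ (the $\hat{k}(\overleftarrow{r})$ appearing in that axiom), is produced. Finally, I would re-fuse the spiders with \eqref{rule:S1} and discharge the leftover scalars using \cref{scalargeneralmult} and \cref{scalarinverseditlm}, the latter handling the $1/r_{a-j}$ normalisation, which is legitimate precisely because the relevant entry is nonzero.

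The main obstacle is bookkeeping rather than anything conceptual: one must track the cyclic permutation of the indices of $\overrightarrow{r}$ modulo $a$ simultaneously with the $\sqrt{N!}$ conversion factors that appear when moving between the $\ZXf$ Z-box and its underlying spider, and one must ensure the normalising entry $r_{a-j}$ never vanishes where it is divided out. As a safeguard against an index or scalar error, I would first verify the identity semantically by evaluating both sides of \cref{kjga1} on the computational basis states $\ket{k}$; by the semantic preservation established in \cref{lem:ZW-sound} this numerical check pins down the correct shift and normalisation, after which the diagrammatic derivation only has to reproduce what the interpretation already dictates.
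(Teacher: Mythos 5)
Your proposal is a plan rather than a proof: the entire mathematical content of \cref{kjga1} is the precise index and scalar bookkeeping that you explicitly defer (``the main obstacle is bookkeeping''), and the paper's proof consists exactly of carrying that bookkeeping out as one explicit chain of diagram rewrites. Beyond that, the route you sketch has a concrete defect. You propose to realise the cyclic relabelling by pushing the phase through the X-spider with \eqref{rule:K2} and then discharging the resulting $1/r_{a \minu j}$ normalisation with \cref{scalarinverseditlm}. That lemma carries the explicit hypothesis $x \neq 0$, and \eqref{rule:K2} itself divides every entry by $r_{d \minu j}$, so your derivation is only valid when $r_{a \minu j} \neq 0$. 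But the statement of \cref{kjga1} places no such restriction on $\overrightarrow{r}=(r_1,\dots,r_{a-1})$, whose entries are arbitrary complex numbers; a derivation that divides by a distinguished entry of $\overrightarrow{r}$ cannot establish the lemma in the generality in which it is stated and later used.

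The paper's derivation is structured differently in precisely the way that sidesteps this: the only auxiliary data it introduces are the truncated vector $\overrightarrow{r_J}=(r_1,\dots,r_j)$ and its reversal $\overleftarrow{r_J}=(r_j,\dots,r_1)$, i.e.\ it splits $\overrightarrow{r}$ at position $j$ and manipulates the two pieces (in the style of \eqref{rule:D1}, whose side condition is likewise a reversal of the parameter vector), rather than renormalising by $r_{a \minu j}$ as \eqref{rule:K2} would force. Neither of these sub-vectors appears anywhere in your outline, which is a further sign that you are aiming at a different, strictly less general derivation. Finally, the semantic sanity check you describe is harmless for guessing the right-hand side, but \cref{lem:ZW-sound} is a statement about the translation functor and is not the tool for it --- you would simply evaluate $\interp{\cdot}$ on computational basis states --- and, as you acknowledge, no semantic computation can substitute for the $\ZXf$-derivation, since this lemma feeds into the completeness argument itself.
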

\begin{proof}
  \[
    \tikzfig{figures/lemmas/kjga1prf}
  \]
  where $\overrightarrow{r_J} = (r_1, \cdots, r_j)$ and $\overleftarrow{r_J} = (r_j, \cdots, r_1)$.
\end{proof}


%
%
%
%
%
%
%
%
%
%


\begin{lemma}
  \label{xdualiserlm}
  \begin{align*}
    &\tikzfig{figures/lemmas/xdualiser-1}\\
    &\tikzfig{figures/lemmas/xdualiser-2}
  \end{align*}
\end{lemma}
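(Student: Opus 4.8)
The plan is to reduce the stated dualiser identity to its action on the elementary X-spider and then propagate it through the inductive definition of the general X-spider. Both sides of each equation are assembled from generators whose behaviour under bending a wire is controlled by the cap and cup, which in this calculus are themselves phase-free Z-spiders. So the first move is to unfold every cap and cup appearing in the dualiser into phase-free Z-spiders and to rewrite the adjacent X-structure into Z-structure via the color rule \eqref{rule:HZ}, so that the entire left-hand side is expressed in a single colour with Hadamard boxes tracking the basis change. This turns a statement about transposing an X-spider into a statement about sliding phase-free Z-spiders along wires, where the spider fusion rule \eqref{rule:S1} and the Hopf law \eqref{rule:HP} apply directly.

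First I would settle the base case on the smallest X-spider, using its definition through the scalar-less W-node together with a modulo box \eqref{rule:XM}. After unfolding the cap (or cup) and fusing with \eqref{rule:S1}, the index reversal that bending a wire induces on the X-basis, namely $k \mapsto a - k \bmod a$, should emerge from the interaction of the Hadamard box with the phase-free Z-spider; I would extract this from the Hadamard identities \eqref{rule:HX}, \eqref{rule:B3}, and \cref{hheqhdhd}, discharging any leftover scalar with \cref{scalargeneralmult} and \cref{hilm}. The aim of this step is to show that a single bent leg sends the X-spider to an X-spider whose phase has been passed through the antipode, with all scalars accounted for.

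Next I would run the inductive step. Since the general X-spider is built by stacking elementary X-spiders and fusing them, and since the monoidal structure lets me treat the extra legs uniformly, the dualiser identity for the general spider follows by applying the base case leg-by-leg and recombining with \eqref{rule:S1}. The phase-commutation rules \eqref{rule:PA} and \eqref{rule:PC}, together with \cref{kjga1} and \cref{multiplier-kj}, are what let me transport the conjugated phase into the position demanded by the right-hand side rather than leaving it stranded on an interior wire. The second displayed equation, being the companion (opposite-orientation) identity, should then drop out by the same argument with the roles of cap and cup exchanged, or else directly from the first by pre- and post-composing with caps and cups and using \eqref{rule:S2}.

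The main obstacle I anticipate is the bookkeeping of the modulo-$a$ arithmetic: bending a wire reverses the summation index modulo $a$, and I must verify that the phase on the right-hand side is \emph{exactly} the antipode $r_{a-k}$ of the original phase and not merely its negation or a cyclic shift, and that the W-node-plus-modulo presentation of the X-spider transposes without generating a spurious dimensional scalar where the cap meets the W-node. Confirming this alignment is the delicate part; if the purely graphical chain becomes ambiguous at the point where \eqref{rule:XM} and the cap interact, I would pin down the correct coefficient by evaluating both sides on computational basis states under \eqref{rule:HZ} and then back-translate that bookkeeping into the rule-based derivation.
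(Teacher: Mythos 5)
There is a genuine gap, and it sits exactly where you flag it yourself. The paper does not spell out a derivation here at all --- its proof is the single line \enquote{Same as \zxw{Lemma 25}}, i.e.\ the dualiser-on-X identity is inherited verbatim from the qudit ZXW development, where it is obtained by the short colour-change route: rewrite the X-spider as a Hadamard-conjugated Z-spider via \eqref{rule:HZ}, observe that caps and cups are themselves phase-free Z-spiders so the Z-core transposes trivially by \eqref{rule:S1}, and then collect the leftover Hadamard boxes into the antipode using \cref{hheqhdhd}, \cref{dboxsquarelm} and the scalar lemmas. Your plan gestures at this for the base case (via \eqref{rule:HX}, \eqref{rule:B3}, \cref{hheqhdhd}), but the step that actually carries the content --- that bending a leg sends the phase to \emph{exactly} the antipode $k \mapsto a-k \bmod a$ with no stray dimensional scalar --- is never derived; you explicitly defer it and propose to \enquote{pin down the correct coefficient by evaluating both sides on computational basis states} and then \enquote{back-translate}. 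That fallback is inadmissible in this context: the lemma is a link in a purely syntactic completeness chain, so every coefficient must be produced by the axioms, not read off from the semantics and retrofitted. Until that Hadamard/antipode computation is carried out graphically, the proof is not complete.

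Separately, the scaffolding you build around the base case is heavier than the statement requires and introduces its own risks. Transposition is a leg-local operation, so there is no need to induct over the inductive construction of the general X-spider, nor to unfold the X-spider through \eqref{rule:XM} into the W-node-plus-modulo presentation, nor to invoke \eqref{rule:PA} and \eqref{rule:PC} to \enquote{transport} phases: once \eqref{rule:HZ} has turned the X-spider into a Z-spider sandwiched by Hadamard boxes, \eqref{rule:S1} fuses the bent legs in one step for any arity, and the whole argument reduces to bookkeeping of $H$ versus $H^\dagger$ on each leg. Routing through \eqref{rule:XM} in particular reintroduces the very cap-meets-W-node scalar ambiguity you worry about at the end, which the colour-change route avoids entirely.
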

\begin{proof}
  Same as \zxw{Lemma 25}.
\end{proof}

%
%
%
%

\begin{lemma}
  \label{hopfditlm}\cite{wangQufiniteZXcalculusUnified2022}
  \begin{gather}
    \tikzfig{figures/lemmas/hopfdit}
    \tag{Hopf}\label{rule:Hopf}
  \end{gather}
\end{lemma}
\begin{proof}
  Same as \zxw{Lemma 29}.
\end{proof}

\begin{lemma}
  \label{redgreenmchangelm}\cite{wangQufiniteZXcalculusUnified2022}
  \[
    \tikzfig{figures/lemmas/redgreenmchange}
  \]
  where $1 \leq k \leq d-1$.
\end{lemma}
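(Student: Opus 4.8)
The plan is to treat this as the $\ZXf$-counterpart of the colour-change identity already available in the qufinite ZX-calculus~\cite{wangQufiniteZXcalculusUnified2022}: I would port that derivation, verifying along the way that every rule it uses is either one of our axioms or a lemma already proved in \cref{sec:lemmas}. First I would unfold both sides into their defining pieces, expanding the multiplier via its definition~\eqref{rule:Mu} so that the red--green junction becomes an explicit Z-spider and X-spider joined by $k$ wires, and replacing any Hadamard box by~\eqref{rule:HD}. This reduces the claim to an equality of two plain spider networks whose only free data are the label $k$ and the dimension $d$.

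Next I would perform the colour change itself using~\eqref{rule:HZ} together with \cref{redspiderforgrlm} (the~\eqref{rule:HX} identity); this is the step that converts a red node into a green one and in which the label $k$ gets reindexed. I would then collapse the resulting chain with spider fusion~\eqref{rule:S1}, cancel the redundant multiplier legs using the Hopf rule~\eqref{rule:HP} and the bialgebra rule~\eqref{rule:B2}, and rewrite the $K_j$-phases that appear by appealing to \cref{multiplier-kj} and \cref{kjga1}. A final pass with the scalar lemmas \cref{scalargeneralmult} and \cref{hilm} should bring the normalisation into the form demanded by the right-hand side.

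The hard part will be bookkeeping rather than any isolated deep step: I must track precisely how the label $k$ is transformed under the colour change (a reindexing of the associated phase vector, with the hypothesis $1 \le k \le d-1$ ensuring the relevant leg structure is well defined) and confirm that the scalar factors $u_{m,n}$ and $v_{m,n}$ introduced by~\eqref{rule:HZ} and~\eqref{rule:HX} cancel exactly. Keeping these transformations and scalars consistent across the whole rewrite is where the care is needed, but no step requires machinery beyond the axioms and the lemmas cited above.
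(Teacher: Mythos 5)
Your approach matches the paper's: the paper's entire proof of this lemma is the one-line citation \enquote{Same as \zxw{Lemma 30}}, i.e.\ it likewise just ports an existing derivation, so your plan to replay that derivation while checking that each rule it uses is an axiom of $\ZXf$ or an already-proved lemma is exactly what is intended. The only small correction is that the derivation being ported is Lemma~30 of the ZXW-completeness paper~\cite{poorCompletenessArbitraryFinite2023}, not of the qufinite ZX-calculus paper cited in the lemma's statement.
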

\begin{proof}
  Same as \zxw{Lemma 30}.
\end{proof}

%
%
%
%

\begin{lemma}
  \label{dualisermultipliermultlm}
  Suppose $x \in \{0, \dotsc, d - 1 \}$.
  Then
  \[
    \tikzfig{figures/lemmas/multipliers/dualiser-multiplier-mult}
  \]
\end{lemma}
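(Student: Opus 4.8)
The plan is to reduce the identity to a common form by unfolding both gadgets into primitive spiders and then fusing. First I would expand the multiplier labelled $x$ through its definition \eqref{rule:Mu}, writing it as a green Z-spider joined to a red X-spider by $x$ parallel wires, and unfold the X-dualiser appearing in the statement using the characterisation established in \cref{xdualiserlm}. This replaces the compound gadgets on each side with diagrams built solely from Z- and X-spiders, putting both sides into a shape where the structural rewrites apply directly.

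Next I would slide the dualiser past the multiplier. The essential move is to recolour the multiplier with \cref{redgreenmchangelm}, which converts between a green-headed and a red-headed multiplier so that the dualiser's X-spider component meets a matching X-spider of the multiplier. Once the colours agree, the red-spider fusion rule \eqref{rule:S4} collapses the adjacent spiders, and the bialgebra rule \eqref{rule:B2} together with the Hopf law \eqref{rule:HP} prunes the redundant edges created by the $x$-fold connection. What remains should be a single multiplier whose label is the composite of the scaling $i \mapsto x\cdot i$ with the negation induced by the dualiser, together with at most a residual phase on the Z-box.

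To finish, I would read off that the surviving label and phase match the right-hand side of the claimed equation, invoking \cref{multiplier-kj} to rewrite any leftover $K_j$-style phase into the standard form and \eqref{rule:S1} to absorb the stray Z-spider into the multiplier head. The constraint $x \in \{0, \dotsc, d-1\}$ in the hypothesis is exactly what ensures the composite label already lies in its reduced range, so no additional modulo gadget is needed to close the calculation.

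The main obstacle will be the index bookkeeping modulo $d$ throughout the fusions: the $x$ parallel wires interact with the dualiser's negation in a way that is transparent semantically but requires care in tracking which residues survive the Hopf cancellations and which are killed. I expect the bulk of the difficulty — and the place where the hypothesis on $x$ is genuinely used — to be verifying that the Hopf-and-bialgebra simplification terminates in a \emph{single} multiplier rather than an unreduced chain, so that the diagram matches the intended right-hand side on the nose.
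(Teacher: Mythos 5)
The paper gives no derivation of its own here: its entire proof is the line ``Same as \zxw{Lemma 34}'', importing the corresponding derivation from the qudit ZXW-calculus paper on the understanding that every rule it uses is available in $\ZXf$. Your toolkit --- unfold the multiplier via \eqref{rule:Mu} and the dualiser via \cref{xdualiserlm}, recolour with \cref{redgreenmchangelm}, collapse with \eqref{rule:S4}, bialgebra and Hopf, and tidy up with \cref{multiplier-kj} and \eqref{rule:S1} --- is the same style of argument as that imported proof, so you have not taken a genuinely different route; you have sketched the intended one.

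The problem is that the sketch stops exactly where the content of the lemma begins. The whole point is that the $x$-fold connection, composed with the dualiser and hit with bialgebra, terminates in a \emph{single} multiplier carrying the correct label; you assert this outcome, promise to ``read it off'', and then flag it yourself as the bulk of the difficulty. Until that fusion-and-cancellation bookkeeping is actually carried out, there is no proof. Two specific points need repair. First, the edge-pruning you attribute to \eqref{rule:HP} is really the ``$d$ parallel wires between a Z- and an X-spider disconnect'' law; in this paper that statement appears only informally in the notation section and is not an axiom, so it must itself be derived (or explicitly cited) before you may use it --- \eqref{rule:HP} as stated is the translation of the ZW rule $(h)$ and does not obviously license the step. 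Second, your account of where the hypothesis $x \in \{0, \dotsc, d-1\}$ enters (``so that no modulo gadget is needed on the final label'') is a guess that does not obviously hold up: multiplier labels are already read modulo $d$ by \eqref{rule:Mu}, so the restriction more plausibly governs the literal $x$-wire expansion of the multiplier at the start of the argument. A complete answer should either execute the rewrite sequence explicitly or verify, rule by rule, that the ZXW derivation of \zxw{Lemma 34} goes through with the $\ZXf$ axiom set, which is what the paper's one-line proof is implicitly claiming.
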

\begin{proof}
  Same as \zxw{Lemma 34}.
\end{proof}

\begin{lemma}
  \label{multiplierkjcap}
  \[
    \tikzfig{figures/lemmas/multipliers/multiplier-kj-cap}
  \]
\end{lemma}
\begin{proof}
  \[
    \tikzfig{figures/lemmas/multipliers/multiplier-kj-cap-pf}
  \]
\end{proof}

%
%

\subsection{Recovering Generators}\label{subsec:recovering-generators}

\begin{lemma}
  \label{lem:Z-state}
  \[
    \iXW[\iWX[\tikzfig{figures/generators/ZX/Z-box-state}\ ]]
    \quad=\quad
    \tikzfig{figures/generators/ZX/Z-box-state}
  \]
\end{lemma}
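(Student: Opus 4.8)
The statement is the Z-box-state instance of the back-and-forth identity \cref{lem:back-and-forth}, so the plan is to unfold both translation functors on this single generator, $\iWX[\iXW[\,\cdot\,]]$, and reduce the result back to it using the ZX axioms.

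First I would unfold $\iXW$. By the definition of the ZX-to-ZW functor, the Z-box-state --- whose interpretation is $\sum_{k=0}^{a-1} r_k \ket{k}$ with $r_0 = 1$ --- is sent to the ZW diagram exhibited in the definition, which encodes this state through the qudit ZW normal form. I would record this target as an explicit composite of ZW generators (Z-spiders, ket states, and, where the normal form requires it, a W-node) so that functoriality can be applied cleanly in the next step.

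Next, since $\iWX$ is a monoidal functor, $\iWX[\iXW[\text{Z-box-state}]]$ is obtained by replacing each ZW generator in that composite by its $\iWX$-image read off the translation table: each ZW Z-spider becomes a ZX Z-spider, each ket-$k$ becomes the matching ZX computational-basis state, and a W-node becomes a high-dimensional X-spider post-composed with a mixed-dimensional Z-spider. I would then reduce this ZX diagram back to the Z-box-state --- fusing Z-spiders with \eqref{rule:S1}, resolving any X-/Z-spider interaction with the bialgebra \eqref{rule:B2} and copy \eqref{rule:K0} rules together with the Z-box lemmas of \cref{sec:lemmas}, and finishing by recognising the right-hand side of the normal-form axiom \eqref{rule:ZNF}, which rewrites directly to the Z-box-state.

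The main obstacle is the normalisation bookkeeping in this last step. The ZW normal form carries $\sqrt{k!}$ factors --- coming from the ket-$k$ states and, if present, from the W-node's binomial weights --- that are absent from the ZX generators, so these factors must be propagated through the reduction and shown to cancel exactly, so that the reconstructed coefficient vector is precisely $\overrightarrow r$ rather than a rescaling of it. Keeping the two dimension conventions aligned --- dimension $a$ on the ZX side versus $a-1$ on the ZW side --- while these factors collapse is where the real content of the lemma lies; because the Z-box-state has no inputs, the dimension-change and modulo gadgets introduced by the W-node translation simplify substantially, which should keep the computation tractable.
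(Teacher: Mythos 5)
Your proposal is correct and follows essentially the same route as the paper: unfold $\iXW$ on the Z-box-state to its ZW normal-form translation, push each resulting ZW generator back through $\iWX$, and then reduce the composite ZX diagram to the original generator by spider fusion and the normal-form machinery, with the $\sqrt{k!}$ normalisation factors cancelling along the way. The paper's proof is exactly this two-step translate-and-simplify diagram chase, so no further comparison is needed.
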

\begin{proof}
  \begin{align*}
    \tikzfig{figures/generators/ZX/Z-box-state}
    \quad &\ZXtoZW \quad
    \tikzfig{figures/translation/from-ZW/Z-box-state}
    \quad\ZWtoZX\quad
    \tikzfig{figures/proofs/Generators/Z-box-state-0}\\
    \quad &
    \tikzfig{figures/proofs/Generators/Z-box-state}
  \end{align*}
\end{proof}

\begin{lemma}
  \label{lem:Z-embed}
  \[
    \iXW[\iWX[\tikzfig{figures/generators/ZX/Z-embed}\ ]]
    \quad=\quad
    \tikzfig{figures/generators/ZX/Z-embed}
  \]
\end{lemma}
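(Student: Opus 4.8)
The plan is to prove the identity by the same two-step round-trip computation used in the proof of \cref{lem:Z-state}: I would first apply the ZX-to-ZW translation $\iXW$ to the embedding generator, then apply the ZW-to-ZX translation $\iWX$ to the resulting ZW-diagram, and finally simplify the composite ZX-diagram with the axioms of $\ZXf$ until the original $\tikzfig{generators/ZX/Z-embed}$ reappears. Since the definition of $\iXW$ on the embedding is given by cases, the proof naturally splits into the two regimes $a \geq b$ and $a < b$, and each must be handled separately, matching the two translation diagrams $\tikzfig{translation/from-ZW/Z-embed-1}$ and $\tikzfig{translation/from-ZW/Z-embed-2}$.

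In each case I would compute $\iXW$ of the embedding explicitly and then push $\iWX$ through it generator-by-generator, using functoriality ($\iWX[D_1 \circ D_2] = \iWX[D_1]\circ\iWX[D_2]$ and $\iWX[D_1\otimes D_2]=\iWX[D_1]\otimes\iWX[D_2]$) to replace each ZW generator by its ZX image: the ZW Z-spider by the corresponding mixed-dimensional Z-spider, any W-node by the X-spider embedded in a sufficiently high dimension together with a truncating mixed-dimensional Z-spider, and the structural maps (identity, cap, scalar) by their trivial ZX counterparts. This yields, in each regime, a concrete chain of Z- and X-spiders at the relevant dimensions.

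The second step is the reduction of this chain back to the bare embedding. I would collapse adjacent Z-spiders with Z-fusion \eqref{rule:S1}, discard identity spiders with \eqref{rule:S2}, and—where the translation has reintroduced a W-node—use the Hopf and bialgebra interactions \eqref{rule:HP}, \eqref{rule:B2} to cancel the W-node reconstruction against the modulo/embedding structure. The essential point is that the cut-off of the mixed-dimensional Z-spider at its minimal leg dimension is exactly what forces the reconstructed support to be $\{0,\dots,\min\{a,b\}-1\}$, reproducing the interpretation of the embedding; the normalisation factors $\sqrt{N!}$ introduced on the Z-spider and W-node sides must then cancel in pairs, which I would verify with \cref{scalargeneralmult} and \cref{scalarinverseditlm}.

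The main obstacle I anticipate is the case $a < b$, where the embedding genuinely extends into a strictly larger space: there the reconstructed diagram carries a W-node living in the higher dimension together with a mixed-dimensional Z-spider that must trim it back, so one has to check that the combination neither admits spurious basis states above $\min\{a,b\}-1 = a-1$ nor drops any of the allowed ones, and that every $\sqrt{N!}$ factor cancels exactly. This dimension-and-scalar bookkeeping, rather than any single clever rewrite, is the delicate part; once the support and the scalars are pinned down in both regimes, the remaining simplifications are routine applications of \eqref{rule:S1} and \eqref{rule:S2}.
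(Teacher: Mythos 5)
Your proposal matches the paper's proof: the paper likewise splits on the two cases of the translation, works out the round trip $\ZXtoZW$ then $\ZWtoZX$ explicitly for $a \geq b$, and reduces the resulting composite back to the embedding with the spider-fusion machinery, dismissing the $a < b$ case as analogous. Your extra attention to the scalar and support bookkeeping in the $a < b$ regime is sensible but does not constitute a different route.
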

\begin{proof}
  Suppose $a \geq b$. Then
  \begin{align*}
    \tikzfig{figures/generators/ZX/Z-embed}
    \quad\ZXtoZW\quad
    \tikzfig{figures/translation/from-ZW/Z-embed-1}
    \quad\ZWtoZX\quad
    \tikzfig{figures/proofs/Generators/Z-embed}
  \end{align*}
  The other case can be proved similarly.
\end{proof}

\begin{lemma}
  \label{lem:Z-mult}
  \[
    \iXW[\iWX[\tikzfig{figures/generators/ZX/Z-mult}\ ]]
    \quad=\quad
    \tikzfig{figures/generators/ZX/Z-mult}
  \]
\end{lemma}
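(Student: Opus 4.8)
The plan is to follow the same outside-in strategy used for \cref{lem:Z-state} and \cref{lem:Z-embed}: first push the Z-mult generator through the ZX-to-ZW translation $\iXW$, then through the ZW-to-ZX translation $\iWX$, and finally simplify the resulting $\ZXf$-diagram back to the original generator using the axioms. The intermediate ZW-diagram and the final ZX-diagram are already pinned down by the two translation tables given in the excerpt, so the only real work is the closing simplification.

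First I would record the two translation steps. By the definition of $\iXW$, the phase-free $n$-to-$m$ Z-spider maps to the finite-dimensional ZW Z-spider decorated with the correction vector $C = \left(\frac{1}{\sqrt{N!}}\right)^{m+n-1}$, which is precisely what compensates the $\sqrt{k!}^{\,m+n-2}$ weights carried by the ZW Z-spider interpretation $\sum_{k} r^{k}\sqrt{k!}^{\,m+n-2}\ketbra{k^m}{k^n}$. Applying $\iWX$ to this ZW-diagram, the ZW Z-spider translates (directly from its interpretation) to a mixed-dimensional $\ZXf$ Z-box whose parameter reintroduces the $\sqrt{N!}$-weights, while the $C$-correction translates to a second Z-box carrying $\left(\frac{1}{\sqrt{N!}}\right)^{m+n-1}$; any normalization scalars spawned by the W-node gadget used to realise the general Z-spider would be collected on the side.

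Next I would fuse the two resulting Z-boxes with the Z-fusion rule \eqref{rule:S1}. Since \eqref{rule:S1} multiplies the parameter vectors elementwise and truncates at the minimal dimension, the combined parameter is the elementwise product of the $\sqrt{N!}$-weight coming from the ZW Z-spider and the $C$-correction. The surviving exponent of $\sqrt{N!}$ then has to vanish identically, collapsing the parameter to the all-ones vector $\overrightarrow{1}$, which is by definition the phase-free Z-mult generator. The residual scalars left behind by the W-node construction would be discharged using \cref{scalargeneralmult} and \cref{scalarinverseditlm}, after which the diagram is literally the original Z-mult.

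The main obstacle I anticipate is precisely the exponent bookkeeping: one must confirm that the weight introduced on the way into ZW (exponent $m+n-2$ from the ZW Z-spider interpretation), the explicit correction $C$ (exponent $-(m+n-1)$), and the further $\sqrt{N!}$ and scalar contributions hidden inside the W-node gadget realising the general spider all cancel to give \emph{exactly} $\overrightarrow{1}$, rather than leaving an off-by-one power of $\sqrt{N!}$. Getting the elementwise vector conventions and the truncation at the minimal dimension right in \eqref{rule:S1} is where the care is needed; once the fused parameter is shown to equal $\overrightarrow{1}$, equality with the original Z-mult generator is immediate.
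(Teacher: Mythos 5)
Your plan matches the paper's proof: the paper likewise translates the Z-mult generator forward to the ZW Z-spider decorated with the correction $C = \left(\frac{1}{\sqrt{N!}}\right)^{m+n-1}$, translates back, and then closes with a diagrammatic simplification that fuses the resulting Z-boxes via \eqref{rule:S1} and cancels the $\sqrt{N!}$-weights and residual scalars. The exponent bookkeeping you flag is indeed the only delicate point, and it works out exactly as you anticipate (the extra leg used to attach the correction state accounts for the apparent off-by-one between $m+n-2$ and $m+n-1$), so this is essentially the paper's argument.
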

\begin{proof}
  For $C = \left(\frac{1}{\sqrt{N!}}\right)^{m+n-1}$,
  \begin{align*}
    \tikzfig{figures/generators/ZX/Z-mult}
    \quad\ZXtoZW\quad
    \tikzfig{figures/translation/from-ZW/phase-Z-box}
    \ZWtoZX
    \tikzfig{figures/proofs/Generators/Z-box-0}\\
  \end{align*}
\end{proof}

\begin{lemma}
  \label{lem:Z-box}
  \[
    \iXW[\iWX[\tikzfig{figures/generators/ZX/mixed-Z-box}\ ]]
    \quad=\quad
    \tikzfig{figures/generators/ZX/mixed-Z-box}
  \]
\end{lemma}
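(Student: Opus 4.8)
The plan is to leverage the functoriality of the translations together with the three recovery lemmas already established for the primitive generators out of which the mixed-dimensional Z-spider is built. Recall from \cref{subsec:zx-generators} that the mixed-dimensional Z-spider is \emph{defined} as a composite: one takes a Z-mult spider at the minimal dimension $a = \min_{i} a_i$, supplies the coefficient vector $\overrightarrow{r}$ through a single-output Z-box-state, and adjusts each leg from $a$ to its prescribed dimension $a_i$ via an embedding. Since $\iWX[\iXW[\cdot]]$ is a composition of symmetric monoidal functors, it is itself a symmetric monoidal functor, and therefore commutes with both sequential composition $\circ$ and parallel composition $\otimes$. This is exactly what reduces the statement to the generators handled by the preceding lemmas.

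First I would rewrite the mixed-dimensional Z-spider using its defining equation, exposing it as a $\circ$/$\otimes$-composite of the Z-box-state, the embeddings, and the Z-mult spider. Next I would push the round-trip $\iWX[\iXW[\cdot]]$ through this composite by functoriality, so that it acts separately on each constituent generator. I would then invoke \cref{lem:Z-state} on the Z-box-state carrying $\overrightarrow{r}$, \cref{lem:Z-embed} on each embedding, and \cref{lem:Z-mult} on the Z-mult spider, each of which returns the corresponding generator unchanged. Finally, recomposing the recovered pieces and reading the defining equation backwards recovers the original mixed-dimensional Z-spider.

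Since the substantive work has already been discharged in \cref{lem:Z-state}, \cref{lem:Z-embed}, and \cref{lem:Z-mult}, the only point requiring care here is confirming that the definitional decomposition uses \emph{exactly} those generators whose recovery is established --- in particular that the coefficient vector $\overrightarrow{r}$ is carried entirely by the Z-box-state factor (handled for arbitrary $\overrightarrow{r}$ by \cref{lem:Z-state}) and that the minimal-dimension bookkeeping performed by the embeddings is faithfully reproduced under the round-trip. I expect no genuine obstacle: once the decomposition is written out explicitly, the result follows immediately as a corollary of functoriality applied to the earlier lemmas.
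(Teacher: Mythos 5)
Your proposal is correct and matches the paper's proof exactly: the paper likewise disposes of this lemma by noting it follows from \cref{lem:Z-state}, \cref{lem:Z-embed}, and \cref{lem:Z-mult}, relying implicitly on the definitional decomposition of the mixed-dimensional Z-spider and the functoriality of the round-trip translation, which you have merely spelled out in more detail.
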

\begin{proof}
  This follows from \cref{lem:Z-state,lem:Z-embed,lem:Z-mult}.
\end{proof}

\begin{lemma}
  \label{lem:mod}
  \[
    \normalfont
    \iWX[\tikzfig{figures/generators/ZW/mod-d}\ ]
    \quad=\quad
    \tikzfig{figures/definitions/ZX/mod-d-expanded}
  \]
\end{lemma}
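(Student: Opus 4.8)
The plan is to exploit that $\iWX$ is defined compositionally, so that evaluating it on the modulo gadget amounts to unfolding the ZW-definition of the gadget into its constituent generators and replacing each generator by its recorded ZX-image. First I would rewrite the named ZW modulo gadget as the composite ZW-diagram that defines it, and then push $\iWX$ through the composition using $\iWX[D_1 \circ D_2] = \iWX[D_1] \circ \iWX[D_2]$ and $\iWX[D_1 \otimes D_2] = \iWX[D_1] \otimes \iWX[D_2]$. Each ZW Z-spider, W-node, ket, cap, cup, identity, and swap appearing in the definition is then substituted by the corresponding ZX-diagram from the ZW-to-ZX translation table, keeping in mind that $\iWX$ raises every wire dimension by one. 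This produces the raw translated ZX-diagram.

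Second, I would simplify this raw diagram to the stated right-hand side. Much of this should be a direct rewrite, since the expanded form is by design the generator-wise image of the gadget; the residual work is to absorb the auxiliary structure introduced by the translation. I expect to use spider fusion \eqref{rule:S1} to merge adjacent Z-spiders, the bialgebra and Hopf rules \eqref{rule:B2} and \eqref{rule:HP} to clear the scaffolding arising from the W-node image, and the multiplier and dualiser lemmas (\cref{multiplier-kj,dualisermultipliermultlm}) to realign the remaining connections. That a rewrite path exists at all is guaranteed by \cref{lem:WX-sound}, which equates the interpretations of the two sides, so the task is purely to exhibit one.

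The hard part will be the W-node image, should it occur in the gadget's ZW-definition. Its translation embeds the node into a sufficiently high dimension and attaches a mixed-dimensional Z-spider, so the substituted diagram carries extra spiders and dimension-changing boxes that are absent from the target. The crux is to argue that the embedding together with the minimal-dimension Kronecker-delta behaviour of the mixed Z-spider collapses exactly to the modulo reduction; I would discharge this diagrammatically via the established embedding and multiplier lemmas rather than by any semantic detour.
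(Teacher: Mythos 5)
Your proposal matches the paper's proof, which does exactly what you describe: expand the ZW definition of the modulo gadget into its constituent generators, push $\iWX$ through the composition, and rewrite the resulting ZX-diagram to the stated expanded form. One small caution: your aside that \cref{lem:WX-sound} ``guarantees a rewrite path exists'' is circular at this point in the development (derivability from semantic equality is precisely the completeness being established), but since you still commit to exhibiting the rewrite explicitly rather than relying on that remark, the argument itself is fine.
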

\begin{proof}
  Expanding the definition of the modulo $a$ box,
  \begin{align*}
    \tikzfig{figures/definitions/ZW/mod-d}
    \quad\ZWtoZX\quad
    \tikzfig{figures/proofs/Generators/mod-d}
  \end{align*}
\end{proof}

\begin{lemma}
  \[
    \iXW[\iWX[\tikzfig{figures/generators/ZX/X-spider-2-1}\ ]]
    \quad=\quad
    \tikzfig{figures/generators/ZX/X-spider-2-1}
  \]
\end{lemma}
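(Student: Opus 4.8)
The plan is to follow the same two-step pattern used in \cref{lem:Z-state,lem:Z-mult,lem:mod}: first push the $2$-to-$1$ X-spider through $\iXW$, then translate the resulting ZW-diagram back through $\iWX$ and rewrite the outcome into the original X-spider. Under $\iXW$, the X-spider becomes a ZW W-node whose output is embedded in dimension $2a$, post-composed with the modulo $a$ gadget; this is exactly the translation recorded above for $\tikzfig{generators/ZX/X-spider-2-1}$, justified there by comparing interpretations of the W- and X-spider.

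Applying $\iWX$ to this composite, I would translate its two pieces separately. The ZW W-node maps to its ZX image under the W-node clause of the $\iWX$ functor, namely the scalar-less W-node realised as a high-dimensional X-spider whose legs carry mixed-dimensional Z-spider embeddings; because the internal dimension is large, no wrap-around occurs and the mixed Z-spider merely truncates the sums that exceed the output dimension. The modulo $a$ gadget maps to the ZX modulo box by \cref{lem:mod}. Thus $\iWX[\iXW[\cdot]]$ applied to the X-spider is, up to dimension bookkeeping, a scalar-less W-node post-composed with a ZX modulo box.

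It then remains to recognise this diagram as the X-spider. The key ingredient is axiom \eqref{rule:XM}, which asserts precisely that an X-spider equals a scalar-less W-node composed with a modulo box. Before invoking it, I would use \eqref{rule:DA} to reconcile the internal dimension $2a$ produced by $\iXW$ with the dimension appearing in \eqref{rule:XM}, since \eqref{rule:DA} permits freely changing the internal dimension of the scalar-less W-node provided it remains sufficiently high. Once the dimensions are aligned, a single application of \eqref{rule:XM} folds the scalar-less W-node and the modulo box back into the original X-spider, closing the round trip.

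The main obstacle I anticipate is the dimension bookkeeping in the middle step. After $\iWX$ the diagram carries the modulo box coming from the original X-spider translation together with the embedding and cutoff structure introduced by the W-node translation, and I must check that these combine into exactly the shape required by \eqref{rule:XM}, rather than producing a spurious second modulo reduction or a mismatched embedding. Verifying this alignment — in particular confirming that the internal dimension $2a$ is genuinely large enough that no modular wrap-around happens before the final modulo box — is where the real work lies; the remaining rewriting is then a routine application of \eqref{rule:DA} and \eqref{rule:XM}.
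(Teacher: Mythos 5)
Your proposal follows essentially the same route as the paper: translate the X-spider to the W-node-plus-modulo-gadget in ZW, translate back to obtain the scalar-less W-node composed with the ZX modulo box (via the W-node clause of $\iWX$ and \cref{lem:mod}), then use \eqref{rule:DA} to adjust the internal dimension and \eqref{rule:XM} to fold the result back into the X-spider — which is precisely what those two axioms were introduced for. Your identification of the dimension bookkeeping as the only delicate point is also accurate; the paper's diagrammatic derivation resolves it exactly as you describe.
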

\begin{proof}
  \begin{align*}
    \tikzfig{figures/generators/ZX/X-spider-2-1}
    \quad\ZXtoZW\quad
    \tikzfig{figures/translation/from-ZW/X-spider-2-1}
    \quad\ZWtoZX\quad
    \tikzfig{figures/proofs/Generators/X-spider-2-1}\\
    \tikzfig{figures/proofs/Generators/X-spider-2-1-2}
  \end{align*}
\end{proof}

\begin{lemma}
  \[
    \iXW[\iWX[\tikzfig{figures/generators/ZX/id}\ ]]
    \quad=\quad
    \tikzfig{figures/generators/ZX/id}
  \]
\end{lemma}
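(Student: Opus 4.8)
The plan is to follow verbatim the template set by the preceding \emph{recovering generators} lemmas, such as \cref{lem:Z-state,lem:Z-embed}: I start from the identity generator of $\ZXf$ on an object $\lst{a}$, apply the ZX-to-ZW translation $\iXW$, then the ZW-to-ZX translation $\iWX$, and verify that the result is again the identity. The key simplification is that both translation functors are defined to act \emph{trivially} on bare wires (as recorded for $\iXW$ in \cref{subsec:iXW} and for $\iWX$ in the ZW-to-ZX translation), so no rewrite rule of $\ZXf$ is invoked and the whole argument reduces to tracking the dimension labels through the two maps.

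Concretely, I first apply $\iXW$. On objects it sends $\lst{a} \mapsto \lst{a-1}$, and the identity generator is sent to the $\ZWf$ identity wire, so the ZX identity on $\lst{a}$ becomes the ZW identity on $\lst{a-1}$. I then apply $\iWX$: on objects it sends $\lst{a-1} \mapsto \lst{a}$, and the ZW identity wire is again sent to the ZX identity wire, returning the ZX identity on $\lst{a}$. Chaining the two steps with the annotated arrows $\ZXtoZW$ and $\ZWtoZX$ exactly as in the other generator proofs then yields $\iXW[{\iWX[\mathrm{id}]}]=\mathrm{id}$ on $\lst{a}$, which is the claimed equality.

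I expect essentially no obstacle here — this is the most immediate of the recovering-generators lemmas, precisely because neither functor introduces a scalar, a spider, or any structure on an identity wire. The only point meriting attention is the dimension bookkeeping: I must confirm that the $-1$ object shift of $\iXW$ is undone by the $+1$ object shift of $\iWX$, so that the composite returns to the original object $\lst{a}$ rather than to a wire of the wrong dimension. Once this cancellation is observed, the equality follows from a single display in which both translation arrows act as the identity on the wire, with no appeal to \cref{prop:ZWcomplete} or any axiom required.
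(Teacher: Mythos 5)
Your proof is correct and coincides with the paper's own one-line argument: the paper's proof is exactly the chain $\mathrm{id} \ \ZXtoZW\ \mathrm{id} \ \ZWtoZX\ \mathrm{id}$, with both functors acting trivially on the bare wire and the $-1$/$+1$ object shifts cancelling as you note. No axiom or rewrite is needed, just as you predicted.
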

\begin{proof}
  \begin{align*}
    \tikzfig{figures/generators/ZX/id}
    \quad\ZXtoZW\quad
    \tikzfig{figures/translation/from-ZW/id}
    \quad\ZWtoZX\quad
    \tikzfig{figures/generators/ZX/id}
  \end{align*}
\end{proof}

\begin{lemma}
  \label{lem:zx-braid}
  \[
    \iXW[\iWX[\tikzfig{figures/generators/ZX/braid}\ ]]
    \quad=\quad
    \tikzfig{figures/generators/ZX/braid}
  \]
\end{lemma}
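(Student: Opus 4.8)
The plan is to follow the same two-step evaluation scheme used for the other generators in this subsection: start from the ZX braid, push it through $\iXW$ to land in $\ZWf$, then push the resulting ZW diagram back through $\iWX$, and verify that we recover the braid we began with.

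First I would apply $\iXW$ to the braid generator $\tikzfig{generators/ZX/braid} \colon \lst{a,b} \to \lst{b,a}$. By the definition of the ZX-to-ZW translation given above, the swap is one of the generators whose image is trivial: the ZX braid maps directly to the ZW swap $\tikzfig{generators/ZW/swap-annot}$, with its two wires relabelled from dimensions $a,b$ to $a-1,b-1$ according to the action of $\iXW$ on objects.

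Next I would apply $\iWX$ to this ZW swap. This is again one of the trivial translations recorded above: the ZW swap maps straight back to the ZX braid, and the action of $\iWX$ on objects restores the dimensions from $a-1,b-1$ to $a,b$. Composing the two steps therefore returns the original braid, so no ZX axioms need to be invoked.

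The reason this is the simplest of the generator-recovery lemmas is that neither translation direction introduces any auxiliary structure---no Z-boxes, W-nodes, or modulo gadgets appear at any stage---so the round trip is literally the identity on the braid. The only thing to check is that the dimension shifts $a \mapsto a-1 \mapsto a$ and $b \mapsto b-1 \mapsto b$ cancel, which is immediate from the definitions of $\iXW$ and $\iWX$ on objects; consequently I expect no real obstacle in this proof.
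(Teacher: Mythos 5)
Your proof matches the paper's exactly: the braid is sent by $\iXW$ to the ZW swap and by $\iWX$ straight back to the braid, with the object-level dimension shifts cancelling, so no axioms are needed. Correct and essentially identical to the paper's one-line derivation.
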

\begin{proof}
  \begin{align*}
    \tikzfig{figures/generators/ZX/braid}
    \quad\ZXtoZW\quad
    \tikzfig{figures/translation/from-ZW/braid}
    \quad\ZWtoZX\quad
    \tikzfig{figures/generators/ZX/braid}
  \end{align*}
\end{proof}

\bnf*
\begin{proof}
  Due to the functoriality of the monoidal functors $\iWX$ and $\iXW$,
  it suffices to show that the above lemma holds for all generators of $\ZXf$.
  These generators are presented in \cref{subsec:zx-generators}, and the proofs are shown in~\crefrange{lem:Z-state}{lem:zx-braid}.
\end{proof}

\subsection{Proving Axioms of ZW}\label{subsec:proving-axioms}

\begin{lemma}
  \label{lem:Z-spider-rule}
  The translation of the following ZW diagrams under $\iWX$ equal in $\ZXf$:
  \[
    \tikzfig{figures/axioms/ZW/Z-spider-rule-annot-00}
    \qquad \qquad \qquad \qquad
    \tikzfig{figures/axioms/ZW/Z-spider-rule-annot-01}
  \]
\end{lemma}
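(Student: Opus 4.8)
The plan is to apply the translation functor $\iWX$ to both diagrams and then reduce the image of the left-hand side to that of the right-hand side purely within $\ZXf$, using the Z-fusion rule \eqref{rule:S1} as the main tool.

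First I would expand each ZW Z-spider occurring on the two sides according to its $\iWX$-image. Recall that a ZW Z-spider with parameter $r$ and total leg count $\ell$ is interpreted as $\sum_{k} r^{k}\sqrt{k!}^{\,\ell-2}\ketbra{k^{m}}{k^{n}}$; accordingly its translation is a mixed-dimensional Z-spider whose phase vector carries the factor $r^{N}$ together with the normalisation $\sNf^{\,\ell-2}$. After this expansion, the translated left-hand side is a pair of such Z-spiders joined along the single connecting wire, while the translated right-hand side is a single Z-spider carrying the combined data.

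Next I would fuse the two translated Z-spiders using \eqref{rule:S1}. That rule multiplies the two phase vectors elementwise and truncates the result at the minimal dimension, so the phase part combines $r^{N}$ and $s^{N}$ into $(rs)^{N}$, reproducing the $(rs)^{k}$ coefficient demanded by the $(s)$ axiom. In parallel I would track the normalisation vectors: each translated spider contributes $\sNf^{\,\ell_{i}-2}$, and since the two spiders are joined along a single wire exactly two internal legs are removed, so the two contributions combine to $\sNf^{\,\ell_{1}+\ell_{2}-4}$ --- precisely the normalisation of one fused spider with $\ell_{1}+\ell_{2}-2$ legs. The collection and cancellation of these scalar vectors can be carried out with \cref{scalargeneralmult,scalarinverseditlm}.

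Finally I would compare the phase vector $(rs)^{N}$ and the normalisation $\sNf^{\,\ell_{1}+\ell_{2}-4}$ obtained in this way against the $\iWX$-image of the single right-hand-side Z-spider; since these agree, the two translations are equal in $\ZXf$. The step I expect to be the main obstacle is the scalar bookkeeping: ensuring that the $\sNf$ powers coming from the two separate translations recombine correctly once the internal wire is absorbed by \eqref{rule:S1}, and that this recombination is compatible with the minimal-dimension truncation so that no spurious high-dimensional terms survive in the fused spider.
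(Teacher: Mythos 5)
Your proposal is correct and matches the paper's own proof, which likewise translates both sides under $\iWX$ and closes the gap with a single application of the Z-fusion rule \eqref{rule:S1}, the elementwise product of the phase vectors $r^{N}\sNf^{\,\ell_1-2}$ and $s^{N}\sNf^{\,\ell_2-2}$ giving exactly the parameter of the translated fused spider. The scalar bookkeeping you flag as the main obstacle is absorbed directly into the \eqref{rule:S1} phase-vector multiplication (all legs here share the same dimension, so the minimal-dimension truncation is vacuous), and no separate scalar lemmas are needed.
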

\begin{proof}
  \begin{align*}
    \tikzfig{figures/axioms/ZW/Z-spider-rule-annot-00}
    \quad\ZWtoZX\quad
    \tikzfig{figures/proofs/ZWAxioms/fusion}
    \quad\ZXfromXW\quad
    \tikzfig{figures/axioms/ZW/Z-spider-rule-annot-01}
  \end{align*}
\end{proof}

\begin{lemma}
  The translation of the following ZW diagrams under $\iWX$ equal in $\ZXf$:
  \[
    \tikzfig{figures/axioms/ZW/Z-id-annot-02}
    \qquad \qquad \qquad \qquad
    \tikzfig{figures/axioms/ZW/Z-id-annot-01}
  \]
\end{lemma}
\begin{proof}
  \begin{align*}
    \tikzfig{figures/axioms/ZW/Z-id-annot-02}
    \quad\ZWtoZX\quad
    \tikzfig{figures/proofs/ZWAxioms/id}
    \quad\ZXfromXW\quad
    \tikzfig{figures/axioms/ZW/Z-id-annot-01}
  \end{align*}
\end{proof}

\begin{lemma}
  \label{lem:ZW-a}
  For $b \geq \min(c, \sum a_i)$, the translation of the following ZW diagrams under $\iWX$ equal in $\ZXf$:
  \[
    \tikzfig{figures/axioms/ZW/W-assoc-annot-00}
    \qquad \qquad \qquad \qquad
    \tikzfig{figures/axioms/ZW/W-assoc-annot-01}
  \]
\end{lemma}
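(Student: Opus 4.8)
The plan is to translate both sides of the ZW associativity axiom $(a)$ under $\iWX$ and then reduce the two resulting $\ZXf$-diagrams to a common form. Recall from \cref{subsec:iXW} that $\iWX$ sends each W-node to an X-spider embedded in a sufficiently high internal dimension, post-composed with a mixed-dimensional Z-spider that projects the mod-free sum onto the prescribed output dimensions. Under this translation the left-hand side---a W-node one of whose outputs feeds a second W-node, with intermediate dimension governed by $b$---becomes two X-spiders linked through mixed-dimensional Z-spiders, and the reassociated right-hand side becomes the analogous diagram with the other bracketing.

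First I would expand both translations explicitly, recording the internal dimensions that $\iWX$ assigns to each W-node. The side condition $b \geq \min(c, \sum a_i)$ is precisely what guarantees that the intermediate wire is wide enough that the X-spiders never wrap around: their modular sums stay below their dimension, so the composite genuinely behaves as an unbounded summation on all the relevant basis states. This is the diagrammatic incarnation of the W-node translation intuition recorded in \cref{subsec:iXW}.

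The central step is to normalise the internal dimensions on both sides using \eqref{rule:DA}---which, as noted in the explanation of the axioms, states exactly that one may change the internal dimension of the scalar-less W-node implementation as long as it remains sufficiently high---and then to fuse the two X-spiders into a single summation node by the X-spider fusion law. Once fused, each side presents one X-spider carrying the total sum followed by mixed-dimensional Z-spiders; the embedding and projection Z-spiders then compose and collapse via \eqref{rule:S1}, together with the $\left(\tfrac{1}{\sqrt{N!}}\right)$-parameter bookkeeping inherited from the Z-spider translation, to the same diagram on both sides.

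The main obstacle will be the dimension and coefficient bookkeeping rather than any single deep rewrite. I must track how $\iWX$ fixes the internal dimension of each W-node in each bracketing and verify that $b \geq \min(c, \sum a_i)$ propagates so that both sides admit a common sufficiently-high internal dimension to which \eqref{rule:DA} can move them; I must also check that the multinomial $\sqrt{\,\cdot\,}$ factors carried by the W-nodes, encoded in the Z-spider vector parameters, recombine correctly under the two bracketings. Once the internal dimensions are matched and these parameters agree, the equality reduces to X-spider fusion and routine spider manipulations, mirroring the preceding lemmas of \cref{subsec:proving-axioms}.
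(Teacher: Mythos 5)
Your plan matches the paper's proof in approach: translate the axiom under $\iWX$ and reduce both sides using \eqref{rule:DA} to move the internal dimension of the W-node implementation, X-spider fusion, and \eqref{rule:S1} to collapse the embedding Z-spiders, with the side condition entering exactly through the dimension bookkeeping you identify as the main obstacle. The one detail your sketch does not anticipate is that the paper resolves that bookkeeping by an explicit case split on $M = \min\{\sum a_i,\, b\}$: when $\sum a_i \leq b$ the intermediate spider's minimal dimension is $\sum a_i$ and the rewrite goes through directly, whereas when $\sum a_i > b$ the hypothesis $b \geq \min(c, \sum a_i)$ forces $b \geq c$, which is the fact needed to absorb the intermediate Z-spider in that regime before rejoining the first case.
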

\begin{proof}
  For $A \coloneqq \sum a_i$, $D \coloneqq \sum d_i$, $M = \min\{A, b\}:$
  \begin{align*}
    \tikzfig{figures/axioms/ZW/W-assoc-annot-00}
    \quad\ZWtoZX\quad
    \tikzfig{figures/proofs/ZWAxioms/W-assoc}
  \end{align*}
  Here, we split the proof based on the value of $M$.
  If $A \leq b$, then $M = \min\{A, b\} = A$ and we have:
  \begin{align*}
    \tikzfig{figures/proofs/ZWAxioms/W-assoc-2}
  \end{align*}
  Otherwise, we have $A > b$, and thus $M = \min\{A, b\} = b$.
  Combining this with the assumption $b \geq \min(c, \sum a_i) = \min(c, A)$, we conclude that $b \geq c$.
  Therefore,
  \begin{align*}
    \scalebox{.9}{\tikzfig{figures/proofs/ZWAxioms/W-assoc-3}}
  \end{align*}
  Then, the remaining steps are the same as in the $A \leq b$ case.
\end{proof}

\begin{lemma}
  For $n \neq 0$, the translation of the following ZW diagrams under $\iWX$ equal in $\ZXf$:
  \[
    \tikzfig{figures/axioms/ZW/Z-W-bialgebra-annot-00}
    \qquad \qquad \qquad \qquad
    \tikzfig{figures/axioms/ZW/Z-W-bialgebra-annot-01}
  \]
\end{lemma}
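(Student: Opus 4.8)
The plan is to translate each side of the $(b_1)$ axiom under $\iWX$ and then connect the two resulting $\ZXf$-diagrams by a sequence of rewrites, exactly as in the preceding lemmas such as \cref{lem:ZW-a}. Under $\iWX$ the ZW Z-spider becomes a ZX Z-box, while each W-node becomes an X-spider embedded in a sufficiently high internal dimension and capped by a mixed-dimensional Z-spider that restricts the outputs back to their nominal dimensions. Thus, after translation, the left-hand side is a Z-box composed with this lifted-X-spider gadget, and the goal is to rewrite it into the translated right-hand side, which consists of several smaller copies of the same gadget interleaved with Z-boxes. I would begin by writing out both translations in full, expanding the W-node gadgets so that the bare X-spiders are exposed and the Z-boxes are separated from the dimension-restricting Z-spiders.

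The central step will be the ZX bialgebra rule \eqref{rule:B2}, which governs the interaction between a Z-box and an X-spider and is the direct ZX analogue of $(b_1)$. Because the X-spiders carry a raised internal dimension, I expect to first use \eqref{rule:DA} to reconcile the lifting dimensions on the two factors so that \eqref{rule:B2} applies between the Z-box and the X-spiders at a common dimension. After firing the bialgebra, I would use Z-spider fusion \eqref{rule:S1} to merge the Z-boxes that appear, and the Hopf rule \eqref{rule:HP} to discard the spurious parallel connections that the bialgebra typically generates. Finally, I would re-fold the scattered X-spiders back into W-node gadgets (running the W-node translation backwards, indicated by $\ZXfromXW$) to recognise the translated right-hand side, closing the chain of equalities.

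The main obstacle will be bookkeeping the dimension data through the bialgebra. The lifted X-spiders and the output-restricting mixed-dimensional Z-spiders must be pushed past one another so that \eqref{rule:B2} fires at the correct internal dimension; in particular, I expect to have to verify that the restricting Z-spiders commute with the X-spider copies produced by the bialgebra, which is precisely where the mixed-dimensional copy rule \eqref{rule:K0} together with \eqref{rule:DA} should do the heavy lifting. The hypothesis $n \neq 0$ is needed to rule out the degenerate branch in which the bialgebra would otherwise produce an empty spider and the dimension-matching steps would be ill defined; I would handle this by observing that with $n \neq 0$ every branch carries at least one wire, so the fusion and re-folding steps remain valid throughout.
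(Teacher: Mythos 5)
Your overall skeleton (translate both sides, expose the X-spider inside each W-gadget, fire the ZX bialgebra \eqref{rule:B2} at a common internal dimension via \eqref{rule:DA}, then re-fold) matches what the paper does for the \emph{phase-free} part of $(b_1)$. The genuine gap is that you never account for the parameter $r$ on the ZW Z-spider. Under $\iWX$ that parameter becomes a Z-box labelled by a vector built from the powers $r^N$ (with factorial corrections), and \eqref{rule:B2} is a bialgebra between the plain generators: it will not carry an arbitrary vector label through the X-spider, and \eqref{rule:S1} can only fuse Z-boxes that are already adjacent --- it cannot produce the $n$ separate copies of the phase that sit on the right-hand side of the translated axiom. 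The paper closes exactly this hole by splitting the proof in two: first the phase-free bialgebra, and then a separate derivation showing that the W-gadget copies the phase box, which is where the Phase Commute axiom \eqref{rule:PC} (the class of Z-box labels satisfying $p_i p_j = q_{i+j \bmod a}$, which powers of $r$ do) does the real work. Without some version of that second step your chain of rewrites cannot terminate at the translated right-hand side, so as written the plan does not go through.

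Two smaller points. Your appeal to the Hopf rule \eqref{rule:HP} to ``discard spurious parallel connections'' is misplaced: a bialgebra application here does not generate such connections, and needing \eqref{rule:HP} at that point would indicate the rewrite had gone wrong. And the hypothesis $n \neq 0$ is a side condition inherited from the ZW axiom itself (for $n = 0$ the parameter $r$ has nowhere to be copied to and the two sides need not agree); it is not something your ZX derivation has to introduce to keep dimension-matching well defined --- the lemma is only asserted under that hypothesis in the first place.
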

\begin{proof}
  Let $b \coloneqq \sum b_i$.
  We first show the following:
  \begin{align*}
    \tikzfig{figures/axioms/ZW/Z-W-bialgebra-phaseless-annot-00}
    \quad\ZWtoZX\quad
    \tikzfig{figures/proofs/ZWAxioms/Z-W-bialgebra-phaseless}\\
    \quad\ZXfromXW\quad
    \tikzfig{figures/axioms/ZW/Z-W-bialgebra-phaseless-annot-01}
  \end{align*}
  Now, we only have to show that the W-node copies phases:
  \begin{align*}
    \tikzfig{figures/axioms/ZW/phase-W-bialgebra-annot-00}
    \quad\ZWtoZX\quad
    \tikzfig{figures/proofs/ZWAxioms/phase-W-bialgebra}\\
    \quad\ZXfromXW\quad
    \tikzfig{figures/axioms/ZW/phase-W-bialgebra-annot-01}
  \end{align*}
\end{proof}

\begin{lemma}
  The translation of the following ZW diagrams under $\iWX$ equal in $\ZXf$:
  \[
    \tikzfig{figures/axioms/ZW/sum-annot-00}
    \qquad \qquad \qquad \qquad
    \tikzfig{figures/axioms/ZW/sum-annot-01}
  \]
\end{lemma}
\begin{proof}
  \begin{align*}
    \tikzfig{figures/axioms/ZW/sum-annot-00}
    \quad\ZWtoZX\quad
    \tikzfig{figures/proofs/ZWAxioms/sum}
    \quad\ZXfromXW\quad
    \tikzfig{figures/axioms/ZW/sum-annot-01}
  \end{align*}
  where the $k$-th element of $\overrightarrow C$ for $k \leq a$ is given by $\displaystyle c_k = \sum_{i = 0}^k r^i \frac{1}{i!} s^{k \minu i} \frac{1}{(k - i)!} = \frac{(r + s)^k}{k!}$
  and for $k > a$ it is $\displaystyle c_k = \sum_{i = k - a}^a r^i \frac{1}{i!} s^{k \minu i} \frac{1}{(k - i)!}$; therefore, $\frac{(r + s)^N}{N!}$ defines the first $a + 1$ element of $\overrightarrow C$.
\end{proof}

\begin{lemma}
  The translation of the global scalar $1$ and the empty diagram $\tikzfig{figures/axioms/ZW/empty}$ under $\iWX$ equal in $\ZXf$.
\end{lemma}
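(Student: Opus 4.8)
The plan is to exploit the monoidal structure of $\iWX$ to reduce the statement to a single scalar identity in $\ZXf$. Since $\iWX$ is a symmetric monoidal functor, it preserves the monoidal unit, so the empty ZW-diagram maps to the empty ZX-diagram; that is, $\iWX$ applied to the right-hand side of axiom (e) is already the empty diagram. The whole content of the lemma therefore collapses to showing that the ZX translation of the global scalar, namely $\tikzfig{translation/from-ZX/global-scalar}$ specialised to $r = 1$, is equal in $\ZXf$ to the empty diagram.

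First I would expand the definition of the global-scalar translation and substitute $r = 1$, turning the parametrised scalar gadget into a concrete phase-free diagram built from Z- and X-spiders. I would then simplify this diagram using the available scalar machinery: the empty-diagram axiom \eqref{rule:Ept} is the natural endpoint, and I expect the intermediate rewrites to be handled by the scalar lemmas already established, in particular \cref{scalargeneralmult} and \cref{scalarinverseditlm}, possibly together with \cref{hilm} to absorb any Hadamard-induced normalisation factors. The target is a chain of rewrites that drives the translated scalar down to the empty diagram, mirroring the one-line derivations used for the other generator translations in this subsection.

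The main obstacle I anticipate is purely bookkeeping rather than conceptual: one must confirm that the particular encoding chosen for the global scalar in $\tikzfig{translation/from-ZX/global-scalar}$ really does reduce to $1$ (the empty diagram) and not merely to some nonzero scalar that happens to share the same interpretation. Concretely, I would check that every normalisation factor introduced by the Z-box parameters and by the Hadamard boxes cancels exactly when $r = 1$, so that no stray scalar survives. Because soundness (\cref{lem:WX-sound}) guarantees both sides have interpretation $1$, the only risk is a missing cancellation in the diagrammatic derivation; once the scalar lemmas are marshalled correctly, the identity follows directly from \eqref{rule:Ept}, completing the proof.
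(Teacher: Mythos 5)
Your proposal matches the paper's proof in both structure and substance: the paper likewise notes that the empty ZW-diagram translates to the empty ZX-diagram, and then reduces $\iWX[1]$ (the global-scalar gadget at $r=1$) to the empty diagram via the scalar rewrites culminating in \eqref{rule:Ept}. The only speculative detail is your mention of Hadamard normalisation factors, which need not arise, but this does not affect the correctness of the approach.
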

\begin{proof}
  \begin{align*}
    1
    \quad\ZWtoZX\quad
    \tikzfig{figures/proofs/ZWAxioms/empty}
    \quad\ZXfromXW\quad
    \tikzfig{figures/axioms/ZW/empty}
  \end{align*}
\end{proof}

\begin{lemma}
  The translation of the following ZW diagrams under $\iWX$ equal in $\ZXf$:
  \[
    \tikzfig{figures/axioms/ZW/copy-1-annot-00}
    \qquad \qquad \qquad \qquad
    r \cdot \tikzfig{figures/axioms/ZW/copy-1-annot-01}
  \]
\end{lemma}
\begin{proof}
  \begin{align*}
    \tikzfig{figures/axioms/ZW/copy-1-annot-00}
    \quad\ZWtoZX\quad
    \tikzfig{figures/proofs/ZWAxioms/copy-1}
    \quad\ZXfromXW\quad
    r \cdot \tikzfig{figures/axioms/ZW/copy-1-annot-01}
  \end{align*}
\end{proof}

\begin{lemma}
  For $c\geq\min(\sum a_i, \sum b_i)$, $\ell_{ij}=\min(a_i,b_j)$, the translation of the following ZW diagrams under $\iWX$ equal in $\ZXf$:
  \[
    \tikzfig{figures/axioms/ZW/W-bialgebra-annot-00}
    \qquad \qquad \qquad \qquad
    \tikzfig{figures/axioms/ZW/W-bialgebra-annot-01}
  \]
\end{lemma}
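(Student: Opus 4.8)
The plan is to apply the translation functor $\iWX$ to both sides of the $(b_2)$ axiom and to recognise the resulting $\ZXf$ equality as a single instance of \eqref{rule:WW}, which was introduced into the axiomatisation precisely to capture this translation. Under $\iWX$ each W-node becomes an X-spider embedded in a sufficiently high dimension together with the mixed-dimensional Z-spiders that trim its legs to the prescribed dimensions. The embedding is what makes the X-spider behave as a W-node: provided the internal dimension is large enough, the sum of the incoming basis states never reaches it, so the modulo is never taken and the X-spider genuinely sums its inputs rather than summing them modulo the dimension (cf.\ the discussion of the W-node translation in \cref{subsec:iXW}).

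First I would expand the W-node translations explicitly on both sides, keeping every dimension label visible. On the left-hand side the two layers of W-nodes are composed along a single internal wire, so the translation yields a pair of X-spiders sharing an internal wire whose dimension is fixed by the side condition; on the right-hand side the bialgebra rearrangement produces a grid of X-spiders whose internal wires carry the dimensions $\ell_{ij}=\min(a_i,b_j)$. The derived quantities $A_i=\sum_k \ell_{i,k}$, $B_i=\sum_k \ell_{k,i}$, $A=\sum_k a_k$, and $B=\sum_k b_k$ record exactly the leg dimensions that appear after this rearrangement, and they are the same data that parameterise \eqref{rule:WW}.

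The core step is then the observation that the two translated diagrams are, up to the trimming Z-spiders, precisely the two sides of \eqref{rule:WW}: the side conditions $c\geq\min(\sum a_i,\sum b_i)$ and $\ell_{ij}=\min(a_i,b_j)$ coincide verbatim, so a single application of \eqref{rule:WW} closes the equality, after which applying the inverse translation (written $\ZXfromXW$) returns the right-hand side of $(b_2)$. Schematically, in the format of the preceding axiom proofs the derivation reads
\[
  \tikzfig{axioms/ZW/W-bialgebra-annot_00}
  \quad\ZWtoZX\quad \cdots \quad\ZXfromXW\quad
  \tikzfig{axioms/ZW/W-bialgebra-annot_01},
\]
with the suppressed middle equality supplied by \eqref{rule:WW}.

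The main obstacle is the dimension bookkeeping. The W-node translation introduces an embedding into a high dimension to suppress the modulo, and the trimming Z-spiders must leave exactly the dimensions $\ell_{ij}$ on the internal wires of the rearranged diagram. I expect the delicate part to be verifying that the single hypothesis $c\geq\min(\sum a_i,\sum b_i)$ simultaneously guarantees, on both sides at once, that no modulo is ever taken inside the embedded X-spiders and that the trimmed internal dimensions align so that the translated diagram matches \eqref{rule:WW} exactly. Once this alignment is confirmed, the remainder is a direct appeal to that axiom with no further graphical rewriting.
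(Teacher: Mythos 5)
Your proposal takes essentially the same approach as the paper: the paper's entire proof of this lemma is the single observation that the equality follows from Axiom~\eqref{rule:WW}, which was introduced into the $\ZXf$ axiomatisation precisely as the $\iWX$-image of the $(b_2)$ rule (with identical side conditions $c\geq\min(\sum a_i,\sum b_i)$ and $\ell_{ij}=\min(a_i,b_j)$). Your extra discussion of the dimension bookkeeping and the embedding that suppresses the modulo is more detail than the paper supplies, but the argument is the same direct appeal to \eqref{rule:WW}.
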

\begin{proof}
  This rule follows from Axiom~\eqref{rule:WW}.
\end{proof}

\begin{lemma}
  The translation of the following ZW diagrams under $\iWX$ equal in $\ZXf$:
  \[
    \tikzfig{figures/axioms/ZW/Hopf-annot-00}
    \qquad \qquad \qquad \qquad
    \tikzfig{figures/axioms/ZW/Hopf-annot-01}
  \]
\end{lemma}
\begin{proof}
  \begin{align*}
    \tikzfig{figures/axioms/ZW/Hopf-annot-00}
    \quad\ZWtoZX
    \ \tikzfig{figures/proofs/ZWAxioms/Hopf}
    \ \ZXfromXW\quad
    \tikzfig{figures/axioms/ZW/Hopf-annot-01}
  \end{align*}
\end{proof}

\begin{lemma}
  For $0< k \leq a$, the translation of the following ZW diagrams under $\iWX$ equal in $\ZXf$:
  \[
    \tikzfig{figures/axioms/ZW/ket-1-ket-k-annot-00}
    \qquad \qquad \qquad \qquad
    \tikzfig{figures/axioms/ZW/ket-1-ket-k-annot-01}
  \]
\end{lemma}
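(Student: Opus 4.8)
The plan is to apply the translation functor $\iWX$ to both diagrams of the $(p)$ axiom and reduce the resulting ZX-diagrams to a common form. Semantically, the partition axiom records the identity $\sqrt{k!}\,\ket{k} = \sqrt{\binom{k}{1,\dots,1}}\,\ket{k}$: feeding $k$ copies of the $\ket 1$ generator into a $k$-to-$1$ W-node produces exactly the $\ket k$ generator, since the only partition of $k$ into $k$ positive summands is $(1,\dots,1)$ and its multinomial coefficient is $k!$. My goal is to reproduce this combinatorial fact diagrammatically in $\ZXf$.

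First I would translate the $\ket k$ side: by the definition of $\iWX$ on the ket generator it becomes a scaled computational basis state $\ket k$ living in ZX-dimension $a+1$, which I expect to normalise to a single X-unit carrying the scalar $\sqrt{k!}$. Next I would translate the W-node side: each $\ket 1$ generator maps to a computational $\ket 1$ in dimension $a+1$, while the W-node maps to its ZX realisation --- an X-spider embedded in a sufficiently high dimension, post-composed with a mixed-dimensional Z-spider that cuts the dimension back down, together with the binomial scalars. The key move is then to use that an X-spider \emph{adds} computational basis states (from the interpretation of X-multiplication, $\ket{k,\ell}\mapsto\ket{k+\ell \bmod a}$): feeding $k$ copies of $\ket 1$ yields $\ket{k}$, and because the X-spider is embedded in a dimension large enough that $k$ never wraps around, no modulo is taken. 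The hypothesis $0 < k \le a$ is exactly what guarantees that this $\ket k$ then survives the mixed-dimensional Z-embedding back to ZX-dimension $a+1$ rather than being annihilated. I would carry out this reduction with the X-spider/copy interaction \eqref{rule:K0}, the phase-addition rule \eqref{rule:PA}, spider fusion \eqref{rule:S1}, and the dimension-change rule \eqref{rule:DA}, possibly rephrasing the W-node through \eqref{rule:XM}.

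The main obstacle I anticipate is the scalar bookkeeping rather than the basis-state computation. The content of the axiom is the multinomial coefficient $\binom{k}{1,\dots,1}=k!$, and threading the various $\sqrt{N!}$-type normalisations carried by the W-node and ket translations so that they combine to precisely $\sqrt{k!}$ is delicate. I would manage these scalars using the dedicated scalar lemmas, in particular \cref{scalargeneralmult} and \cref{scalarinverseditlm}, together with \cref{lem:Z-box} to handle the embedding spider; once the scalars are shown to collapse to $\sqrt{k!}$ and the basis state to $\ket k$, both translated diagrams coincide and the lemma follows.
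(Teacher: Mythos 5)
Your proposal is correct and follows essentially the same route as the paper: the paper likewise translates the $(p)$ axiom under $\iWX$ and rewrites the $k$-fold $\ket 1$/W-node side step by step into the translated $\ket k$ side, relying on exactly the features you identify — the X-spider summing the basis states without wraparound in the enlarged dimension, the hypothesis $0<k\leq a$ ensuring the resulting $\ket k$ survives the mixed-dimensional Z-spider cut, and the scalar normalisations collapsing to $\sqrt{k!}$. The only caveat is that your plan stops short of the explicit multi-step diagram chain the paper carries out, but the strategy and the anticipated difficulties match.
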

\begin{proof}
  \begin{align*}
    \tikzfig{figures/axioms/ZW/ket-1-ket-k-annot-00}
    \quad\ZWtoZX\quad
    \tikzfig{figures/proofs/ZWAxioms/ket-1-ket-k-1}\\
    \tikzfig{figures/proofs/ZWAxioms/ket-1-ket-k-2}\\
    \tikzfig{figures/proofs/ZWAxioms/ket-1-ket-k-3}\\
    \hspace{-2cm}\quad\ZXfromXW\quad
    \tikzfig{figures/axioms/ZW/ket-1-ket-k-annot-01}
  \end{align*}
\end{proof}

\begin{lemma}
  The translation of the following ZW diagrams under $\iWX$ equal in $\ZXf$:
  \[
    \tikzfig{figures/axioms/ZW/e1-NF-annot-00}
    \qquad \qquad \qquad \qquad
    \tikzfig{figures/axioms/ZW/e1-NF-annot-01}
  \]
\end{lemma}
\begin{proof}
  \begin{align*}
    \tikzfig{figures/axioms/ZW/e1-NF-annot-00}
    \quad\ZWtoZX\quad
    \tikzfig{figures/proofs/ZWAxioms/e1-NF}
    \ \ZXfromXW
    \ \tikzfig{figures/axioms/ZW/e1-NF-annot-01}
  \end{align*}
\end{proof}

\begin{lemma}
  The translation of the following ZW diagrams under $\iWX$ equal in $\ZXf$:
  \[
    \tikzfig{figures/axioms/ZW/W-assoc-annot2-00}
    \qquad \qquad \qquad \qquad
    \tikzfig{figures/axioms/ZW/W-assoc-annot2-01}
  \]
\end{lemma}
\begin{proof}
  Let $a \coloneqq a_0 + a_1$, then
  \begin{align*}
    \tikzfig{figures/axioms/ZW/W-assoc-annot2-00}
    \quad\ZWtoZX\quad
    &\tikzfig{figures/proofs/ZWAxioms/W-assoc2} \\
    &\tikzfig{figures/proofs/ZWAxioms/W-assoc2-2}
    \quad\ZXfromXW\quad
    \tikzfig{figures/axioms/ZW/W-assoc-annot2-01}
  \end{align*}
\end{proof}

\begin{lemma}
  The translation of the following ZW diagrams under $\iWX$ equal in $\ZXf$:
  \[
    \tikzfig{figures/axioms/ZW/Z-copy-annot-00}
    \qquad \qquad \qquad \qquad
    \tikzfig{figures/axioms/ZW/Z-copy-annot-01}
  \]
\end{lemma}
\begin{proof}
  Let $a = a_0 + a_1$, then
  \begin{align*}
    \tikzfig{figures/axioms/ZW/Z-copy-annot-00}
    \quad\ZWtoZX\quad
    \tikzfig{figures/proofs/ZWAxioms/Z-copy} \\
    \quad\ZXfromXW\quad
    \tikzfig{figures/axioms/ZW/Z-copy-annot-01}
  \end{align*}
\end{proof}

\begin{lemma}
  \label{lem:Z-NF}
  The translation of the following ZW diagrams under $\iWX$ equal in $\ZXf$:
  \[
    \tikzfig{figures/axioms/ZW/Z-NF-annot-00}
    \qquad \qquad \qquad \qquad
    \tikzfig{figures/axioms/ZW/Z-NF-annot-01}
  \]
\end{lemma}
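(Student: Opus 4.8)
The plan is to follow the translate-and-rewrite pattern already established for the preceding ZW axioms in \cref{subsec:proving-axioms}. I would apply the functor $\iWX$ to the left-hand ZW diagram of the $(nf)$ rule, reduce the resulting $\ZXf$-diagram using the axioms of \cref{subsec:axiomatisation}, and then recognise the endpoint as $\iWX$ applied to the right-hand ZW diagram. By functoriality of $\iWX$, once the two translated endpoints are fixed it suffices to carry out the derivation entirely within $\ZXf$, exactly as in \cref{lem:Z-spider-rule}.

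First I would unfold the generator translations on the left-hand side: each ZW Z-spider becomes a ZX Z-box carrying the $\sqrt{N!}$-weighted parameter prescribed by the Z-spider translation, while any W-node becomes a qudit X-spider embedded in a sufficiently high dimension, post-composed with a mixed-dimensional Z-box that discards the basis states exceeding the output dimension. With these substitutions in place, the core of the argument is an application of \eqref{rule:ZNF}, which by design sends an arbitrary Z-spider to its normal form; the supporting rewrites are fusion \eqref{rule:S1} to merge adjacent Z-boxes and the phase rules \eqref{rule:PA} and \eqref{rule:PC} to move the resulting parameter vector into its canonical position, so that the reduced diagram is manifestly in the image of $\iWX$.

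The principal obstacle I expect is scalar and phase-vector bookkeeping rather than any conceptual difficulty. The $\sqrt{k!}$ weights appearing in the ZW Z-spider interpretation and the compensating $\left(\frac{1}{\sqrt{N!}}\right)^{m+n-1}$ factors introduced by the translation must be tracked elementwise through every fusion, and the truncation to the minimal dimension enforced by \eqref{rule:S1} has to be reconciled with the dimension cut-off of the mixed-dimensional Z-box coming from the W-node translation. Once these vectors are verified to coincide, reading the reduced diagram backwards through $\iWX$ returns the normal-form Z-spider on the right, completing the proof.
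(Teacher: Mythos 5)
Your plan coincides with the paper's proof: translate the left-hand side of $(nf)$ under $\iWX$, reduce the resulting diagram in $\ZXf$ with \eqref{rule:ZNF} as the central rewrite, and track the factorial weights through the fusion steps --- the paper's derivation indeed carries a vector $\overrightarrow{C}$ with entries $\tfrac{1}{k!\,a!}$ (padded by zeros), which is exactly the elementwise bookkeeping you anticipate. The only detail you gloss over is that the paper's read-back through $\iWX$ does not land directly on the right-hand side of $(nf)$ but on an intermediate ZW diagram differing from it by one application of W-associativity, which is discharged by the already-established \cref{lem:ZW-a}; this is a wrinkle within the same approach rather than a different route.
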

\begin{proof}
  For $\overrightarrow{C} = (\underbrace{0, \cdots, 0}_{a+1}, \frac{1}{1!a!}, \frac{1}{2!a!}, \cdots, \frac{1}{(a - 1)!a!})$,
  \begin{align*}
    \tikzfig{figures/axioms/ZW/Z-NF-annot-00}
    \quad\ZWtoZX\quad
    \tikzfig{figures/proofs/ZWAxioms/Z-NF} \\
    \quad\ZXfromXW\quad
    \tikzfig{figures/axioms/ZW/Z-NF-annot-02}
    \quad = \quad
    \tikzfig{figures/axioms/ZW/Z-NF-annot-01}
  \end{align*}
  Note that we can derive the last equality of the ZW diagrams using \cref{lem:ZW-a}.
\end{proof}

\rules*
\begin{proof}
  By the functoriality of $\iWX$, we only need to show that all the rewrite rules of $\ZWf$ are derivable in $\ZXf$.
  The rewrite rules are given in \cref{subsec:zw-axioms}, and the derivation of each axiom is given in~\crefrange{lem:Z-spider-rule}{lem:Z-NF}.
\end{proof}

\end{document}